\PassOptionsToPackage{unicode}{hyperref}
\PassOptionsToPackage{hyphens}{url}
\PassOptionsToPackage{dvipsnames,svgnames,x11names}{xcolor}
\documentclass[
  12pt,
  letterpaper]{article}

\usepackage{amsmath,amssymb}
\usepackage{setspace}
\usepackage{iftex}
\ifPDFTeX
  \usepackage[T1]{fontenc}
  \usepackage[utf8]{inputenc}
  \usepackage{textcomp} 
\else 
  \usepackage{unicode-math}
  \defaultfontfeatures{Scale=MatchLowercase}
  \defaultfontfeatures[\rmfamily]{Ligatures=TeX,Scale=1}
\fi
\usepackage{lmodern}
\ifPDFTeX\else  
\fi
\IfFileExists{upquote.sty}{\usepackage{upquote}}{}
\IfFileExists{microtype.sty}{
  \usepackage[]{microtype}
  \UseMicrotypeSet[protrusion]{basicmath} 
}{}
\makeatletter
\@ifundefined{KOMAClassName}{
  \IfFileExists{parskip.sty}{%
    \usepackage{parskip}
  }{
    \setlength{\parindent}{0pt}
    \setlength{\parskip}{6pt plus 2pt minus 1pt}}
}{
  \KOMAoptions{parskip=half}}
\makeatother
\usepackage{xcolor}
\usepackage[margin=1in]{geometry}
\makeatletter
\ifx\paragraph\undefined\else
  \let\oldparagraph\paragraph
  \renewcommand{\paragraph}{
    \@ifstar
      \xxxParagraphStar
      \xxxParagraphNoStar
  }
  \newcommand{\xxxParagraphStar}[1]{\oldparagraph*{#1}\mbox{}}
  \newcommand{\xxxParagraphNoStar}[1]{\oldparagraph{#1}\mbox{}}
\fi
\ifx\subparagraph\undefined\else
  \let\oldsubparagraph\subparagraph
  \renewcommand{\subparagraph}{
    \@ifstar
      \xxxSubParagraphStar
      \xxxSubParagraphNoStar
  }
  \newcommand{\xxxSubParagraphStar}[1]{\oldsubparagraph*{#1}\mbox{}}
  \newcommand{\xxxSubParagraphNoStar}[1]{\oldsubparagraph{#1}\mbox{}}
\fi
\makeatother

\providecommand{\tightlist}{%
  \setlength{\itemsep}{0pt}\setlength{\parskip}{0pt}}\usepackage{longtable,booktabs,array}
\usepackage{calc} 
\usepackage{etoolbox}
\makeatletter
\patchcmd\longtable{\par}{\if@noskipsec\mbox{}\fi\par}{}{}
\makeatother
\IfFileExists{footnotehyper.sty}{\usepackage{footnotehyper}}{\usepackage{footnote}}
\makesavenoteenv{longtable}
\usepackage{graphicx}
\makeatletter
\newsavebox\pandoc@box
\newcommand*\pandocbounded[1]{
  \sbox\pandoc@box{#1}%
  \Gscale@div\@tempa{\textheight}{\dimexpr\ht\pandoc@box+\dp\pandoc@box\relax}%
  \Gscale@div\@tempb{\linewidth}{\wd\pandoc@box}%
  \ifdim\@tempb\p@<\@tempa\p@\let\@tempa\@tempb\fi
  \ifdim\@tempa\p@<\p@\scalebox{\@tempa}{\usebox\pandoc@box}%
  \else\usebox{\pandoc@box}%
  \fi%
}
\def\fps@figure{htbp}
\makeatother
\NewDocumentCommand\citeproctext{}{}
\NewDocumentCommand\citeproc{mm}{%
  \begingroup\def\citeproctext{#2}\cite{#1}\endgroup}
\makeatletter
 \let\@cite@ofmt\@firstofone
 \def\@biblabel#1{}
 \def\@cite#1#2{{#1\if@tempswa , #2\fi}}
\makeatother
\newlength{\cslhangindent}
\newlength{\csllabelwidth}
\newenvironment{CSLReferences}[2] 
 {\begin{list}{}{%
  \setlength{\itemindent}{0pt}
  \setlength{\leftmargin}{0pt}
  \setlength{\parsep}{0pt}
  \ifodd #1
   \setlength{\leftmargin}{\cslhangindent}
   \setlength{\itemindent}{-1\cslhangindent}
  \fi
  \setlength{\itemsep}{#2\baselineskip}}}
 {\end{list}}
\usepackage{calc}

\usepackage{amsthm}
\ifPDFTeX
  \DeclareUnicodeCharacter{2212}{\ensuremath{-}}   
  \DeclareUnicodeCharacter{00D7}{\ensuremath{\times}}
  \DeclareUnicodeCharacter{2192}{\ensuremath{\rightarrow}}
  \DeclareUnicodeCharacter{0394}{\ensuremath{\Delta}}
  \DeclareUnicodeCharacter{2264}{\ensuremath{\leq}}
  \DeclareUnicodeCharacter{2265}{\ensuremath{\geq}}
\fi
\usepackage{needspace}   
\makeatletter
\@ifpackageloaded{caption}{}{\usepackage{caption}}
\AtBeginDocument{%
\ifdefined\contentsname
  \renewcommand*\contentsname{Table of contents}
\else
  \newcommand\contentsname{Table of contents}
\fi
\ifdefined\listfigurename
  \renewcommand*\listfigurename{List of Figures}
\else
  \newcommand\listfigurename{List of Figures}
\fi
\ifdefined\listtablename
  \renewcommand*\listtablename{List of Tables}
\else
  \newcommand\listtablename{List of Tables}
\fi
\ifdefined\figurename
  \renewcommand*\figurename{Figure}
\else
  \newcommand\figurename{Figure}
\fi
\ifdefined\tablename
  \renewcommand*\tablename{Table}
\else
  \newcommand\tablename{Table}
\fi
}
\@ifpackageloaded{float}{}{\usepackage{float}}
\floatstyle{ruled}
\@ifundefined{c@chapter}{\newfloat{codelisting}{h}{lop}}{\newfloat{codelisting}{h}{lop}[chapter]}
\floatname{codelisting}{Listing}

\usepackage{amsthm}
\theoremstyle{plain}
\newtheorem{proposition}{Proposition}
\theoremstyle{plain}
\newtheorem{lemma}{Lemma}
\theoremstyle{plain}
\newtheorem{corollary}{Corollary}
\theoremstyle{plain}
\newtheorem{theorem}{Theorem}
\theoremstyle{remark}
\AtBeginDocument{}
\newtheorem*{remark}{Remark}

\makeatother
\makeatletter
\@ifpackageloaded{caption}{}{\usepackage{caption}}
\@ifpackageloaded{subcaption}{}{\usepackage{subcaption}}
\makeatother

\usepackage{bookmark}

\IfFileExists{xurl.sty}{\usepackage{xurl}}{} 
\hypersetup{
  pdftitle={Panel Conditioning in Fixed-Effects Models: Identification and Bias Propagation},
  pdfauthor={Shoki Okubo},
  pdfkeywords={panel
conditioning, identification, age-period-cohort, two-way fixed
effects, event study, refreshment samples},
  colorlinks=true,
  linkcolor={blue},
  filecolor={Maroon},
  citecolor={Blue},
  urlcolor={Blue},
  pdfcreator={LaTeX via pandoc}}

\title{Panel Conditioning in Fixed-Effects Models: Identification and
Bias Propagation\thanks{This research benefited from discussions and feedback during presentations at the Institute of Social Science, University of Tokyo, the Japanese Association for Mathematical Sociology, and the panel survey conference at Keio University. Code reproducing every numerical result in this paper, including the downstream identities of Section 6 and every row of Tables 1 and 2, is in the replication archive at \url{https://github.com/sokubo/paper-panel-conditioning-replication} (fixed version: tag \texttt{paper-v1.0}, commit \texttt{f06283e}). No restricted data are used. This work was supported by JSPS KAKENHI Grant Number 22K13525.}}
\author{Shoki Okubo\thanks{Department of Sociology, Toyo University, Tokyo, Japan. Email: okubo080@toyo.jp. Website: sokubo.github.io.}}
\date{September 24, 2026}

\begin{document}
\maketitle
\begin{abstract}
Panel conditioning, the causal effect of prior survey participation on
responses, can vary with tenure. Under an additive model of cell means
in period, entry cohort, and tenure, we characterize which features of
the conditioning path a staggered panel identifies on its observed
support, and how the unidentified component affects common panel
estimators. The identified set of the path is an affine translate of the
tenure projection of the cell design's kernel, and a linear functional
of the path is identified exactly when it annihilates that projection.
It always contains an affine direction and, when the entry cohorts share
a stride, periodic directions, which exhaust it under a connectivity
condition on observed increments; second differences at that stride are
then identified, and ordinary ones generally are not when the stride
exceeds one. Under a recruitment condition, an interrupted schedule such
as the four-eight-four rotation of the Current Population Survey (CPS)
distinguishes a constant increment per interview from one per calendar
month, which no equally spaced schedule can. We give support conditions
for recovery under a plateau, entry-wave negative controls, or bounded
cohort drift. A second set of results links identification to
regression: two-way fixed effects absorb every unidentified direction,
so the remaining conditioning bias is normalization-invariant and itself
identified, and a two-way regression with tenure indicators corrects it
under a residual-rank condition. When event time is aligned with tenure,
conditioning shifts event-study coefficients by a known linear
functional of the path, producing pre-trends without anticipation;
bounds on identified curvature bound those shifts. Simulations verify
the identities, a 19-wave Japanese panel illustrates the support
calculations, and published CPS month-in-sample indices give a
descriptive, not identifying, example.
\end{abstract}

\noindent\textbf{Keywords:} panel conditioning; identification; age-period-cohort; two-way fixed effects; event study; refreshment samples

\setstretch{1.5}
\section{Introduction}\label{introduction}

Panel surveys ask the same individuals the same questions repeatedly,
and a long literature documents that the act of being interviewed
changes later answers---\emph{panel conditioning}
(\citeproc{ref-warren2012}{Warren and Halpern-Manners 2012};
\citeproc{ref-halpernmanners2012}{Halpern-Manners and Warren 2012};
\citeproc{ref-halpernmanners2017}{Halpern-Manners, Warren, and Torche
2017}). Formal identification analysis of the phenomenon is rare: Das,
Toepoel, and van Soest (\citeproc{ref-dasetal2011}{2011}) give one for a
binary item in a two-wave design with a refreshment sample, and Feng,
Hu, and Sun (\citeproc{ref-fenghusun2022}{2022}) give one for a
latent-state model of labor-force misclassification in the Current
Population Survey. For the object that most of the applied literature
reports---a mean path of conditioning over tenure, estimated from the
cell means of a staggered panel---practitioners have long behaved as if
a non-identification result existed: the U.S. Bureau of Labor Statistics
normalizes month-in-sample effects to average to one; Krueger, Mas, and
Niu (\citeproc{ref-krueger2017}{2017}) subtitle their study of rotation
group bias ``Will the Real Unemployment Rate Please Stand Up?'' and
conclude that it remains unclear which rotation group provides the more
accurate estimate (\citeproc{ref-krueger2017}{Krueger, Mas, and Niu
2017, 264}); van den Brakel and Krieg
(\citeproc{ref-brakelkrieg2015}{2015}) estimate rotation-group bias only
relative to a reference group. Each of these choices is a normalization.
This paper states the result that forces it, proves it on the panel's
observed cohort--period support, characterizes what it leaves
identified---which depends on the refreshment schedule and on how long
each cohort is followed---and derives what the unidentified component
does to the fixed-effects and event-study estimators that panel data are
used for.

The first half concerns identification. Theorem~\ref{thm-linear} states
what a panel identifies about the conditioning path \(\tau(\cdot)\) on
its observed cohort--period support: the identified set is the translate
of the tenure projection of the cell design's kernel, and a linear
functional of \(\tau\) is identified if and only if it annihilates that
projection. Two kinds of direction are in the kernel of every staggered
design. The affine direction is there because tenure \(s\), calendar
time \(t\), and entry cohort \(e\) satisfy \(s = t - e + 1\); it is the
exact translation into the measurement domain of the age--period--cohort
linear dependency (\citeproc{ref-masetal1973}{Mason et al. 1973};
\citeproc{ref-fosse2019}{Fosse and Winship 2019a}) and of the
underidentification of fully dynamic event-study designs
(\citeproc{ref-borusyak2024}{Borusyak, Jaravel, and Spiess 2024}). When
the entry cohorts share a common stride \(d\)---the greatest common
divisor of their spacings---\(d - 1\) periodic directions are there as
well; these are the measurement-domain form of the additional
non-identifiability of age--period--cohort models with unequal intervals
(\citeproc{ref-gascoignesmith2023}{Gascoigne and Smith 2023}). This
ambiguity has practical consequences: a panel refreshed at a four-wave
stride has, under a tenure-coverage condition stated in the theorem, an
identified set of dimension at least four, and the Japanese panel we use
for illustration is one. Under a connectivity condition on the panel's
observed increments the affine and periodic directions are all there is
in the tenure projection, and Lemma~\ref{lem-trapezoid} gives a
sufficient follow-up period for the last refreshment cohort under which
a staggered panel satisfies it, exact for three cohorts.
Corollary~\ref{cor-curvature} characterizes the identified
functionals---curvature at the design's stride, and ordinary curvature
when the stride is one and the connectivity condition holds.
Section~\ref{sec-interrupted} then removes the assumption that
participation is uninterrupted, which rotation designs violate by
construction. Under continuous recruitment the unidentified direction is
an affine function of the \emph{elapsed calendar time} at which each
interview arrives, not of the interview count, so that a panel whose
interviews are unequally spaced can separate a path that accumulates at
a constant rate per interview from one that accumulates at a constant
rate per month---a distinction no equally spaced design can make. The
Current Population Survey's four-eight-four rotation is such a design,
and Corollary~\ref{cor-cps} exhibits the identified contrast; it is
identified from the joint cohort-by-period array, and we show why the
same weights applied to published month-in-sample averages do not
deliver it. Section~\ref{sec-recovery} gives identifying restrictions (a
plateau, entry-wave negative controls, bounded cohort drift), each
stated together with its support conditions, what it removes, and what
is and is not testable about it: negative controls remove the affine
direction but not the periodic ones, and zero identified curvature is
consistent with a linear path, so saturation cannot be established from
cell means alone.

The second half turns from what cannot be known to where it goes.
Tenure-dependent conditioning raises a measurement concern for panel
analyses that assume stationary measurement error.
Section~\ref{sec-downstream} derives the consequences for the
fixed-effects, difference-in-differences, and event-study estimators
panel users actually run and finds them exactly complementary to
Theorem~\ref{thm-linear}. The \emph{absorption theorem}
(Theorem~\ref{thm-absorb}) shows that two-way fixed-effects estimators
annihilate precisely the unidentified component of the conditioning
path---every kernel direction's tenure component is a sum of a time
function and a unit function, on any support---so that the bias of a
fixed-effects coefficient depends on conditioning only through its
identified part, is invariant to the normalization every conditioning
study must choose, and is itself an identified functional
(Theorem~\ref{thm-correction}). A feasible corrected estimator is the
two-way regression with tenure indicators added, subject to a
residual-rank condition that fails when exposure is scheduled by tenure.
The \emph{spurious pre-trends theorem} (Theorem~\ref{thm-pretrend})
locates the boundary of that protection: when event time is aligned with
tenure, fully dynamic event-study coefficients absorb the non-affine
part of the path exactly, so that pre-period coefficients are nonzero in
expectation without any anticipation or violation of parallel trends in
the outcome, an artifact that outcome-side diagnostics
(\citeproc{ref-roth2022}{Roth 2022};
\citeproc{ref-rambachanroth2023}{Rambachan and Roth 2023}) may detect
but cannot attribute, that cannot be corrected within the aligned
sample, and that an external estimate of the conditioning path can
remove. When external information yields bounds rather than a point
estimate of the conditioning path, Proposition~\ref{prp-sensitivity}
provides a sensitivity analysis: a bound on the path's identified
curvature implies an explicit bound on the measurement-induced shift of
each event-study coefficient. This bound grows quadratically with the
distance from the reference categories. No normalization of the
conditioning level is needed because the shift is invariant to the
unidentified affine direction. An orthogonality characterization
(Proposition~\ref{prp-survive}) states in one line which linear panel
functionals survive which classes of drift.

Relation to Feng, Hu, and Sun (\citeproc{ref-fenghusun2022}{2022}).
Their result achieves \emph{point} identification of CPS
misclassification probabilities by restricting response dynamics to
first-order dependence on the previous report, an exclusion restriction
linking a later interview to the latent status of an earlier one, rank
and eigenvalue conditions, and a cross-rotation-group comparability
condition under which the estimated error probabilities are transported
to the remaining groups. The model class is different from ours---latent
states and misclassification matrices rather than an additive mean
path---and we do not claim that their restrictions select a point of our
identified set. What the two settings share is the dependency between
tenure, period, and cohort; our entry-wave negative controls check one
component of their cross-rotation-group comparability condition---that
entering rotation groups are comparable on time-invariant facts---and
not the full latent-state and response-history transport that their
result requires.

\section{Setup and Notation}\label{sec-setup}

Individuals \(i\) belong to entry cohort
\(e_i \in \mathcal{E} \subset \mathbb{Z}\) and are eligible for
interviews at calendar times \(t \in \mathcal{T} \subset \mathbb{Z}\).
Tenure at time \(t\) is \[
s_{it} \;=\; t - e_i + 1 \;\ge\; 1 .
\] Let \(Y_{ijt}(s)\) denote the potential response of individual \(i\)
to item \(j\) at time \(t\) when the measurement history is coarsened to
tenure \(s\) (Assumption M1 below). Observed responses are
\(Y_{ijt} = Y_{ijt}(s_{it})\) for respondents. We suppress \(j\) where
possible. Define cell means over the target population \[
\mu(e, t) \;=\; \mathbb{E}\!\left[ Y_{it}(t - e + 1) \,\middle|\, e_i = e \right],
\qquad (e, t) \in \mathcal{S},
\] where \(\mathcal{S} \subset \mathcal{E} \times \mathcal{T}\) is the
\emph{observed support}: the finite set of (cohort, period) cells the
panel schedules. The support is a design choice. The leading case is the
staggered trapezoid \(\{(e,t): e \le t \le T\}\), every cohort followed
from entry to a common final period \(T\), but nothing in
Section~\ref{sec-impossibility} requires it; results that do need more
than a finite support say so. Write
\(\mathcal{S}_{\mathrm{obs}} = \{t - e + 1 : (e,t) \in \mathcal{S}\}\)
for the set of observed tenures and \(\mathcal{T}_{\mathrm{obs}}\) for
the set of observed periods. The \emph{cohort--period incidence graph}
of \(\mathcal{S}\) is the bipartite graph with a vertex for each cohort
in \(\mathcal{E}\) and each period in \(\mathcal{T}_{\mathrm{obs}}\) and
an edge for each scheduled cell; it is connected for every staggered
trapezoid, because the first cohort is observed in every period, and it
is disconnected when two groups of cohorts are never observed in a
common period.

We maintain throughout this section (and relax later):

\begin{itemize}
\tightlist
\item
  \textbf{M1 (coarsening / consistency).} Potential responses depend on
  the measurement history only through tenure:
  \(Y_{ijt}(h) = Y_{ijt}(|h|+1)\), and all cohorts are measured with the
  same instrument and mode at a given \(t\). Participation is
  uninterrupted, so that calendar tenure \(t - e + 1\) equals the number
  of interviews received. In interrupted rotation designs---the CPS
  4--8--4 pattern, in which a household is interviewed for four
  consecutive months, rested for eight, and returned for four more---the
  two differ, and the count of interviews, not the calendar spacing, is
  the argument of \(\tau\); Section~\ref{sec-interrupted} treats that
  case, and the difference turns out to carry identifying power rather
  than to be a technicality.
\item
  \textbf{M2 (sampling equivalence).} Cohorts are probability samples
  from the same target population, so that cohort differences in \(\mu\)
  reflect entry-cohort effects and conditioning only.
\item
  \textbf{M3 (no attrition).} Every scheduled cell is realized: at each
  \((e,t) \in \mathcal{S}\) the full entry sample of cohort \(e\)
  responds. A planned end of follow-up---a cell the design never
  schedules---is a feature of \(\mathcal{S}\), not attrition; attrition
  is the failure of a scheduled cell to be realized in full, and is
  relaxed in Section~\ref{sec-equivalence}.
\item
  \textbf{M4 (additive cell means).} On the observed support,
  \begin{equation}\phantomsection\label{eq-model}{
  \mu(e, t) \;=\; \alpha(t) \;+\; g(e) \;+\; \tau(s), \qquad s = t - e + 1,
  }\end{equation} with the normalizations \(\tau(1) = 0\) (no
  conditioning at first interview, by definition) and \(g(e_0) = 0\) for
  a reference cohort \(e_0\).
\end{itemize}

M4 is an assumption about the response process, not a consequence of
M1--M3: it says that period, cohort, and conditioning enter cell means
additively, with no cohort-by-tenure or period-by-tenure interaction. It
is the standard decomposition of the rotation-group-bias and
panel-conditioning literatures, and everything below is a statement
about the additive model; with interactions, the indeterminacy is
item-wise or cohort-wise and larger. The object of interest is the
conditioning path \(\tau\) restricted to the observed tenures,
\(\tau: \mathcal{S}_{\mathrm{obs}} \to \mathbb{R}\); values of \(\tau\)
at tenures the panel never observes are outside the scope of any
identification statement.

\section{The Impossibility Theorem}\label{sec-impossibility}

Let the entry cohorts be \(\mathcal{E} = \{e_1 < \dots < e_K\}\),
\(K \ge 2\), with reference cohort \(e_0 = e_1\), and define the
design's \emph{stride} \[
d \;:=\; \gcd\{\, e_k - e_j : e_k, e_j \in \mathcal{E} \,\} .
\] Consecutive entry cohorts give \(d = 1\); a panel refreshed every
four waves, or with refreshments four and twelve waves after the first
entry, gives \(d = 4\). A function \(\rho: \mathbb{Z} \to \mathbb{R}\)
is \emph{\(d\)-periodic} if \(\rho(s + d) = \rho(s)\) for all \(s\);
when \(d = 1\) the only \(d\)-periodic function with \(\rho(1) = 0\) is
\(\rho \equiv 0\).

\emph{The design matrix and its kernel.} Stack the cell equations
Equation~\ref{eq-model} over \((e,t) \in \mathcal{S}\) as
\(\mu = X_{\mathcal{S}}\,\theta\), where \(\theta = (\alpha, g, \tau)\)
collects the free parameters---\(\alpha(t)\) for
\(t \in \mathcal{T}_{\mathrm{obs}}\), \(g(e)\) for \(e \ne e_0\),
\(\tau(s)\) for
\(s \in \mathcal{S}_{\mathrm{obs}} \setminus \{1\}\)---and
\(X_{\mathcal{S}}\) is the
\(|\mathcal{S}| \times (\text{number of parameters})\) incidence matrix
with a one in the columns of \(\alpha(t)\), \(g(e)\), and
\(\tau(t-e+1)\) of each fielded cell. Two parameter vectors generate the
same cell means if and only if their difference lies in the kernel
\(\mathcal{K} := \ker X_{\mathcal{S}}\). Write
\(h = (h_\alpha, h_g, h_\tau)\) for an element of \(\mathcal{K}\) and
\(\mathcal{K}_\tau := \{h_\tau : h \in \mathcal{K}\}\) for the set of
its tenure components. Everything the panel can and cannot reveal about
the conditioning path is a statement about \(\mathcal{K}_\tau\).

\emph{Increments and connectivity.} An \emph{observed increment} is a
tenure \(u\) such that some cohort is observed at tenures \(u\) and
\(u+1\) in consecutive periods: \((e, t), (e, t+1) \in \mathcal{S}\)
with \(u = t - e + 1\). Let \(\mathcal{U}\) be the set of observed
increments. The \emph{increment graph} has vertex set \(\mathcal{U}\)
and an edge between \(u\) and \(u'\) whenever two cohorts \(e \ne e'\)
are both observed at some \(t\) and \(t+1\) with tenures
\(u = t - e + 1\) and \(u' = t - e' + 1\). Every edge joins increments
whose difference \(e' - e\) is a multiple of \(d\), so the connected
components of the increment graph are contained in residue classes
modulo \(d\). We say that \(\mathcal{S}\) satisfies condition
\textbf{C\(_d\)} (\emph{tenure-connectivity at stride \(d\)}) if (i) it
is \emph{increment-complete}: for every observed tenure \(s\), every
\(u \in \{1, \dots, s-1\}\) is an observed increment, so that each
observed tenure is reached from tenure one through observed increments;
and (ii) the connected components of the increment graph are exactly the
residue classes of \(\mathcal{U}\) modulo \(d\). The staggered trapezoid
is increment-complete because its first cohort is observed in every
period; (ii) is the substantive condition, and Lemma~\ref{lem-trapezoid}
below says when a trapezoid satisfies it. The vertices of the graph are
increments, not tenure levels: the largest observed tenure is never an
increment, and a graph drawn on tenure levels would leave it isolated.

\begin{theorem}[Non-identification of the conditioning
path]\protect\hypertarget{thm-linear}{}\label{thm-linear}

Under M1--M4 and the normalizations \(\tau(1) = 0\), \(g(e_0) = 0\), on
any finite observed support \(\mathcal{S}\):

\begin{enumerate}
\def\labelenumi{(\alph{enumi})}
\item
  \emph{(Identified set.)} The identified set for the conditioning path
  is the affine family \[
  \Theta_\tau \;=\; \tau + \mathcal{K}_\tau \;=\; \{\, \tau + h_\tau : h \in \ker X_{\mathcal{S}} \,\},
  \] and a linear functional
  \(\lambda'\tau = \sum_{s} \lambda_s\,\tau(s)\) is point identified if
  and only if \(\lambda' h_\tau = 0\) for every
  \(h_\tau \in \mathcal{K}_\tau\).
\item
  \emph{(Directions that are never identified.)} For every
  \(m \in \mathbb{R}\) and every \(d\)-periodic \(\rho\) with
  \(\rho(1) = 0\), the vector \[
  h_\tau(s) = m\,(s-1) + \rho(s), \qquad
  h_g(e) = m\,(e - e_0), \qquad
  h_\alpha(t) = -\,m\,(t - e_0) - \rho(t - e_0 + 1)
  \] lies in \(\ker X_{\mathcal{S}}\). These vectors form a linear space
  \(\mathcal{K}_0 \subseteq \mathcal{K}\)---one affine direction and
  \(d - 1\) periodic directions---whose dimension is \(d\) whenever
  \(\mathcal{S}\) contains \(d\) consecutive observed periods, as every
  staggered trapezoid with \(T \ge e_1 + d - 1\) does. Their tenure
  components span a space of dimension \(d\) under a
  \emph{tenure-coverage} condition: that the functions \(s - 1\) and
  \(\mathbf{1}\{s \equiv r\}\), \(r \not\equiv 1\), are linearly
  independent on the set of observed tenures, which holds whenever some
  cohort is observed at \(d + 1\) consecutive tenures. On such supports
  \(\dim \Theta_\tau \ge d\). On sparser supports the projection can
  lose rank even though \(\mathcal{K}_0\) does not: with
  \(\mathcal{E} = \{1, 3\}\) and
  \(\mathcal{S} = \{(1,1), (1,2), (3,3)\}\), \(d = 2\) and
  \(\mathcal{K}_0\) is two-dimensional, but only \(\tau(2)\) is free and
  \(\dim \Theta_\tau = 1\). On every support, no functional with
  \(\sum_s \lambda_s (s-1) \ne 0\) is identified; and when \(d > 1\)
  neither is any functional whose weights fail to sum to zero within
  some residue class modulo \(d\) other than that of \(s = 1\).
\item
  \emph{(Sharpness under C\(_d\).)} If \(\mathcal{S}\) satisfies
  C\(_d\), then \(\mathcal{K}_\tau\) is exactly the set of tenure
  components of \(\mathcal{K}_0\): \[
  \Theta_\tau \;=\; \bigl\{\, s \mapsto \tau(s) + m\,(s-1) + \rho(s) \;:\; m \in \mathbb{R},\ \rho \text{ $d$-periodic},\ \rho(1) = 0 \,\bigr\},
  \] an affine family of dimension \(d\) under the tenure-coverage
  condition of (b), and of dimension equal to the rank of those tenure
  components otherwise; and \(\lambda'\tau\) is identified if and only
  if \(\sum_s \lambda_s (s-1) = 0\) and
  \(\sum_{s \equiv r\,(\mathrm{mod}\ d)} \lambda_s = 0\) for every
  residue \(r \not\equiv 1\). In particular, if \(d = 1\) and C\(_1\)
  holds, \(\Theta_\tau = \{\tau + m(s-1)\}\): the level and linear trend
  of conditioning are not identified and nothing else is lost.
\item
  \emph{(Design matrix; full kernel versus tenure projection.)}
  \(\dim \Theta_\tau = \dim \mathcal{K}_\tau\). Let \(c\) be the number
  of connected components of the cohort--period incidence graph of
  \(\mathcal{S}\). Then \[
  \dim \mathcal{K} \;=\; \dim \mathcal{K}_\tau + (c - 1), \qquad \mathrm{rank}\,X_{\mathcal{S}} \;=\; (\text{number of parameters}) - \dim \Theta_\tau - (c - 1),
  \] so the full kernel and its tenure projection have the same
  dimension exactly when the incidence graph is connected (\(c = 1\)),
  as it is for every staggered trapezoid; the \(c - 1\) extra directions
  move \(\alpha\) and \(g\) against each other on the components not
  containing \(e_0\) and move \(\tau\) not at all. Under the
  tenure-coverage condition of (b), \(\dim \Theta_\tau \ge d\), so the
  rank is at most the number of parameters minus \(d\) minus \((c-1)\),
  with equality under C\(_d\). A design's identified space---its
  dimension and a basis of \(\mathcal{K}_\tau\)---can therefore be
  computed, before any data are collected, from \(X_{\mathcal{S}}\)
  alone; C\(_d\) is a sufficient condition for it to have the closed
  form in (c), not a prerequisite for computing it. C\(_d\)
  characterizes the tenure projection only: it does not imply that the
  incidence graph is connected, and results that need the full
  kernel---Proposition~\ref{prp-nc} and Proposition~\ref{prp-bounds}
  below---state connectedness as a separate condition.
\end{enumerate}

\end{theorem}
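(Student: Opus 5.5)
The plan is to treat (a) and (d) as linear algebra about $X_{\mathcal{S}}$, verify (b) by direct substitution, and obtain (c) by reducing kernel membership to a system of equations on first differences of $h_\tau$ along the increment graph.

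\emph{Part (a).} Two parameter vectors are observationally equivalent iff their difference lies in $\mathcal{K}$. Projecting onto the $\tau$ coordinates gives $\Theta_\tau=\tau+\mathcal{K}_\tau$. A linear functional is constant on this affine set iff it vanishes on $\mathcal{K}_\tau$, which is the stated criterion.

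\emph{Part (b).} Substitute the proposed vector into a generic cell $(e,t)$ with $s=t-e+1$. Then
\[
h_\alpha(t)+h_g(e)+h_\tau(s) = -m(t-e_0)-\rho(t-e_0+1)+m(e-e_0)+m(t-e)+\rho(t-e+1).
\]
The $m$-terms cancel. The $\rho$-terms cancel because $(t-e_0+1)-(t-e+1)=e-e_0\in d\mathbb{Z}$ and $\rho$ is $d$-periodic. The normalizations hold: $h_\tau(1)=\rho(1)=0$ and $h_g(e_0)=0$.

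For the dimension count, the map $(m,\rho)\mapsto h$ is injective once $d$ consecutive periods are observed. On those periods $h_\alpha$ reveals $m$ together with every residue value of $\rho$, up to the linear term, so $\dim\mathcal{K}_0=d$. The tenure components are $m(s-1)+\sum_r c_r\mathbf 1\{s\equiv r\}$, so their rank equals the rank of these basis functions on $\mathcal{S}_{\rm obs}$. Take $d+1$ consecutive tenures and set the residue-$1$ coefficient to zero. The values at the two tenures sharing a residue force $m=0$, and the remaining values then force each $c_r=0$. The small example can be checked by hand.

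The two negative statements follow from (a) by evaluating $\lambda'$ on particular elements of $\mathcal{K}_0$. Taking $\rho=0$ gives the affine condition. Taking $m=0$ and $\rho=\mathbf 1\{s\equiv r\}$ gives the residue-class conditions.

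\emph{Part (c).} This is the main step. Take any $h\in\mathcal{K}$. For cells $(e,t),(e,t+1)\in\mathcal{S}$, subtract the two cell equations to get
\[
\Delta h_\alpha(t)+\delta(u)=0, \qquad \delta(u):=h_\tau(u+1)-h_\tau(u).
\]
So two increments joined by an edge of the increment graph, via a common pair $(t,t+1)$, have equal $\delta$. Under C$_d$(ii), $\delta$ is therefore constant on each residue class of $\mathcal{U}$ mod $d$, say $\delta(u)=\beta_{[u]}$.

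By increment-completeness, every observed $s$ satisfies $h_\tau(s)=\sum_{u=1}^{s-1}\beta_{[u]}$. Decompose $\beta_{[u]}=m+\gamma_{[u]}$ with $\sum_r\gamma_r=0$ and $m$ the mean of $\beta$ over the residues. The partial sums of $\gamma$ then form a $d$-periodic $\rho$ with $\rho(1)=0$. Hence $h_\tau=m(s-1)+\rho(s)$ lies in the projection of $\mathcal{K}_0$, and (b) gives the reverse inclusion.

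The point I expect to need care is residue classes of $\{1,\dots,d\}$ that contain no observed increment. If such a class is empty, no observed tenure sum passes through it, so its $\beta$ is not determined. Still, only partial sums over observed ranges matter, and completeness makes every $u<\max\mathcal{S}_{\rm obs}$ observed, so either all residues appear or the span of tenures is shorter than $d$. In the latter case, any function on those tenures vanishing at $1$ is already of the $\rho$ form, so the conclusion still holds. The identified-functional characterization then follows from (a) applied to the basis $\{s-1,\ \mathbf 1\{s\equiv r\}\}$, and the dimension is the rank from (b).

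\emph{Part (d).} By rank–nullity it suffices to show $\dim\ker(h\mapsto h_\tau)|_{\mathcal{K}}=c-1$. If $h_\tau=0$, then $h_\alpha(t)+h_g(e)=0$ on every edge of the incidence graph. So on each connected component, $h_\alpha$ equals a constant $k$ and $h_g$ equals $-k$. The constant is forced to be $0$ on the component containing $e_0$ and is free on each of the other $c-1$ components. This gives $\dim\mathcal{K}=\dim\mathcal{K}_\tau+c-1$, and the rank formula follows. The remaining claims combine this with (b) and (c).
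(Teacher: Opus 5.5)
Your proposal is correct and follows the paper's proof essentially step for step. It uses direct substitution for (b); constancy of first differences on increment-graph components plus increment-completeness for (c); and per-component constants with rank--nullity for (d). Your handling of unrepresented residue classes and your proof of the tenure-coverage claim fill in details the paper leaves terse.

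The one step to tighten is injectivity of $(m,\rho)\mapsto h$ in (b). Pin $m$ through $h_g(e)=m(e-e_0)$ at some $e\neq e_0$, as the paper does. $h_\alpha$ alone on exactly $d$ consecutive periods that include $e_0$ does not determine $m$: there $h_\alpha(e_0)=-\rho(1)=0$, and $\rho$ can be chosen to cancel $m$ on the other $d-1$ periods.
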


\begin{proof}
\emph{(a)} Two parameter vectors \(\theta, \theta'\) generate the same
cell means iff \(X_{\mathcal{S}}(\theta' - \theta) = 0\), i.e.,
\(\theta' - \theta \in \mathcal{K}\); restricting to the
\(\tau\)-coordinates gives \(\Theta_\tau = \tau + \mathcal{K}_\tau\). A
linear functional is constant on an affine family iff it annihilates the
family's direction space.

\emph{(b)} At any \((e,t) \in \mathcal{S}\) with \(s = t - e + 1\), \[
h_\alpha(t) + h_g(e) + h_\tau(s)
= m\bigl[-(t-e_0) + (e-e_0) + (s-1)\bigr] + \bigl[\rho(t - e + 1) - \rho(t - e_0 + 1)\bigr] = 0,
\] because \(-(t-e_0) + (e-e_0) + (t-e+1-1) = 0\) and because
\(e \equiv e_0 \pmod d\) for every \(e \in \mathcal{E}\) (\(d\) divides
\(e - e_0\)), so that \(\rho(t-e+1) = \rho(t-e_0+1)\). The
normalizations are respected: \(h_\tau(1) = \rho(1) = 0\) and
\(h_g(e_0) = 0\). The map \((m, \rho) \mapsto h\) is linear; it is
injective when \(d\) consecutive periods are observed, because \(h_g\)
at any \(e \ne e_0\) determines \(m\) and then
\(h_\alpha(t) = -m(t-e_0) - \rho(t-e_0+1)\) over \(d\) consecutive
periods determines the \(d\)-periodic \(\rho\); and the space of
\(d\)-periodic \(\rho\) with \(\rho(1) = 0\) has dimension \(d - 1\).
The two functional statements follow from (a) by taking \(\rho = 0\),
and \(m = 0\) with \(\rho\) the indicator of one residue class
\(r \not\equiv 1\).

\emph{(c)} Let \(h = (h_\alpha, h_g, h_\tau) \in \mathcal{K}\), so
\(h_\alpha(t) + h_g(e) + h_\tau(t-e+1) = 0\) on \(\mathcal{S}\),
\(h_\tau(1) = 0\), \(h_g(e_0) = 0\). Whenever \((e,t)\) and \((e,t+1)\)
are observed, differencing gives
\(\Delta h_\alpha(t) = -\Delta h_\tau(u)\) with \(u = t - e + 1\) and
\(\Delta f(x) := f(x+1) - f(x)\). The left side does not depend on
\(e\), so whenever two cohorts \(e \ne e'\) are observed at \(t\) and
\(t+1\) with increments \(u, u'\),
\(\Delta h_\tau(u) = \Delta h_\tau(u')\): the first difference of
\(h_\tau\) is constant along every edge of the increment graph, hence
constant on each of its components, hence, by C\(_d\)(ii), constant on
each residue class of \(\mathcal{U}\) modulo \(d\). Write
\(\Delta h_\tau(u) = c_{[u]}\) with \([u]\) the residue of \(u\), let
\(m := d^{-1}\sum_{r} c_r\) be the mean over one period (with
\(c_r := 0\) for any residue class not represented in \(\mathcal{U}\)),
and set \(\rho(s) := \sum_{u=1}^{s-1} c_{[u]} - m(s-1)\) for every
integer \(s \ge 1\), \(\rho(1) = 0\). Then
\(\Delta\rho(s) = c_{[s]} - m\) is \(d\)-periodic with zero mean over a
period, so \(\rho(s+d) - \rho(s) = 0\): \(\rho\) is \(d\)-periodic. By
increment-completeness, for every observed tenure \(s\) the increments
\(1, \dots, s-1\) are all observed, so
\(h_\tau(s) = h_\tau(1) + \sum_{u=1}^{s-1}\Delta h_\tau(u) = m(s-1) + \rho(s)\).
Back-substituting,
\(h_\alpha(t) + h_g(e) = -m(t-e) - \rho(t-e+1) = -m(t - e_0) + m(e - e_0) - \rho(t - e_0 + 1)\),
whose right side is a function of \(t\) plus a function of \(e\); hence
\(h_\alpha(t) = -m(t-e_0) - \rho(t-e_0+1) + \kappa\) and
\(h_g(e) = m(e-e_0) - \kappa\) on each connected component of the
cohort--period incidence graph, and \(h_g(e_0) = 0\) forces
\(\kappa = 0\) on the component containing \(e_0\). Thus \(h_\tau\) is
the tenure component of an element of \(\mathcal{K}_0\), which gives
\(\mathcal{K}_\tau = \{m(s-1) + \rho(s)\}\) and, with (a), the two
displayed statements; the functional characterization is the case of (a)
in which \(\mathcal{K}_\tau\) is spanned by \(s - 1\) and the
residue-class indicators \(r \not\equiv 1\). The \(d = 1\) clause is the
special case \(\rho \equiv 0\).

\emph{(d)} \(\dim \Theta_\tau = \dim \mathcal{K}_\tau\) by (a). The
kernel of the projection \(h \mapsto h_\tau\) restricted to
\(\mathcal{K}\) consists of the \(h\) with \(h_\tau = 0\),
i.e.~\(h_\alpha(t) + h_g(e) = 0\) on \(\mathcal{S}\) with
\(h_g(e_0) = 0\). Along any edge \((e,t)\) of the incidence graph
\(h_\alpha(t) = -h_g(e)\), so \(h_\alpha\) is constant on the periods of
each component and \(h_g\) is the negative of that constant on its
cohorts; the constant is free on each component except the one
containing \(e_0\), where \(h_g(e_0) = 0\) forces it to be zero. Hence
that kernel has dimension \(c - 1\), and
\(\dim \mathcal{K} = \dim \mathcal{K}_\tau + (c - 1)\) by rank--nullity
applied to the projection; the rank statement is rank--nullity applied
to \(X_{\mathcal{S}}\). The inequality is (b) and the equality under
C\(_d\) is (c). \(\square\)
\end{proof}

The theorem distinguishes the general kernel characterization, the
directions that are always unidentified, and the conditions under which
their tenure components exhaust the projected kernel. Part (a) is the
general statement: on a finite support, the identified set is the
translate of the kernel's tenure projection, and identification of a
linear functional is annihilation of that projection---a computation,
valid for any scheduled support including ones with gaps or unequal
follow-up. Part (b) says which directions are in the kernel of
\emph{every} design with the given entry cohorts: the affine direction,
and---when the cohorts share a stride---the periodic ones. Part (c) says
when those are \emph{all} there is in the tenure projection. The
distinction matters because C\(_d\) can fail on ordinary designs: with
cohorts entering at \(1\), \(4\), and \(8\) and observation through
period \(8\), the stride is \(d = 1\), yet the 14-cell design has 17
parameters and rank 14, so the identified set is three-dimensional, not
one-dimensional---the last cohort is observed only at entry, its
increments merge nothing, and the increment graph has the three
components \(\{1,4,7\}\), \(\{2,5\}\), \(\{3,6\}\). Part (a) covers such
a design exactly; part (c) does not apply to it. Part (d) separates the
tenure projection from the full kernel, and the two can differ even when
C\(_d\) holds: with cohorts entering at \(1\), \(4\), and \(6\), each
observed only at its entry period and the next
(\(\mathcal{S} = \{(e,e), (e,e+1)\}\)), the stride is \(d = 1\), the
only observed increment is \(u = 1\), and C\(_1\) holds, so
\(\mathcal{K}_\tau = \{m(s-1)\}\) is one-dimensional; but no two cohorts
share a period, the incidence graph has three components, and the full
kernel is three-dimensional (six cells, nine parameters, rank six). The
two extra directions move \(\alpha\) and \(g\) on the later cohorts
without touching \(\tau\). That distinction is harmless for what a panel
reveals about \(\tau\) and decisive for the recovery results of
Section~\ref{sec-recovery}, which combine the kernel with knowledge of
\(g\). The Japanese panel of Section~\ref{sec-jlps}, with cohorts
entering at waves \(1\), \(5\), and \(13\) and observed through wave
\(19\), has \(d = 4\), satisfies C\(_4\), is connected, and has a design
matrix with 41 cells, 39 free parameters, and rank 35: its identified
set is four-dimensional. The support \(e \in \{1,3\}\),
\(t \in \{3,4,5\}\) with the perturbation
\(h_\tau(s) = \mathbf{1}\{s \text{ even}\}\),
\(h_\alpha(t) = -\mathbf{1}\{t \text{ even}\}\), \(h_g \equiv 0\) is the
smallest example of a periodic direction: it preserves every cell mean
and both normalizations, is not affine, and has
\(\Delta^2 h_\tau(2) = -2\). Figure~\ref{fig-support} shows where the
periodic directions of the Japanese panel come from.

\begin{figure}

\centering{

\pandocbounded{\includegraphics[keepaspectratio]{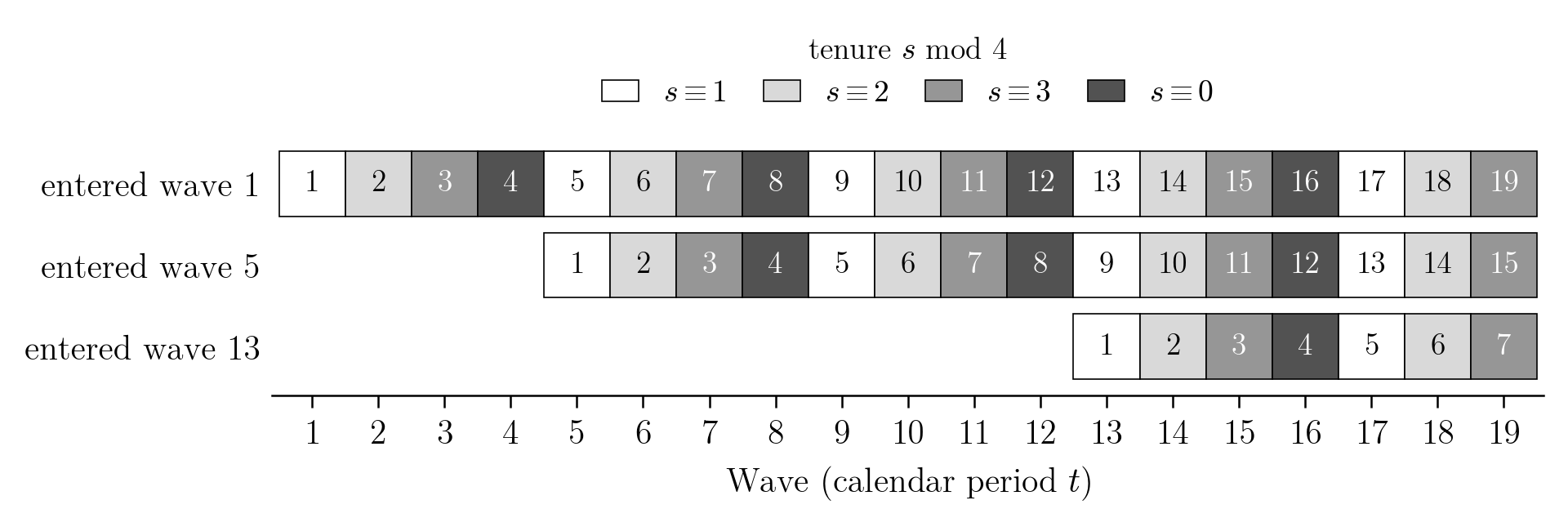}}

}

\caption{\label{fig-support}Observed support of the Japanese panel of
Section~\ref{sec-jlps}: cohorts entering at waves 1, 5, and 13, observed
through wave 19 (41 cells). Each cell shows the tenure \(s = t - e + 1\)
and is shaded by \(s \bmod 4\). Because every cohort enters at a wave
congruent to 1 modulo the stride \(d = 4\), the shade of a cell depends
only on its period, so each column has a single shade. A 4-periodic
function of tenure that vanishes at \(s = 1\) is therefore also a
function of the period and is absorbed by \(\alpha(t)\): these are the
three periodic directions of Theorem~\ref{thm-linear}(b), which together
with the affine direction make up the four-dimensional kernel of this
design (39 free parameters, rank 35).}

\end{figure}%

\emph{Antecedents.} Neither kind of direction is new to the
identification literature, and we position the theorem accordingly. The
affine direction is the exact translation into the measurement domain of
the age--period--cohort linear dependency
(\citeproc{ref-masetal1973}{Mason et al. 1973};
\citeproc{ref-fosse2019}{Fosse and Winship 2019a}) and of the
underidentification of fully dynamic event-study specifications without
never-treated units (\citeproc{ref-borusyak2024}{Borusyak, Jaravel, and
Spiess 2024}; \citeproc{ref-sunabraham2021}{Sun and Abraham 2021}), with
tenure in the role of age or event time. The periodic directions are the
measurement-domain form of the additional non-identifiability of APC
models with \emph{unequal intervals}. When age is recorded in intervals
\(M\) times as wide as periods, so that cohort is \(c = p - Ma\),
Gascoigne and Smith (\citeproc{ref-gascoignesmith2023}{2023, sec. 4.1},
eqs. (4)--(5)) show that an \(M\)-periodic function \(v_M\) can be added
to the period effect and subtracted from the cohort effect without
changing the linear predictor, and that the period and cohort
\emph{curvatures}---identified in the equal-interval case---are then
unidentified, a phenomenon they call the curvature identifiability
problem. Under the correspondence \(p = t\), \(Ma = e - e_0\),
\(c = s - 1\), and \(M = d\), their \(v_M\) is our \(\rho\) up to sign:
entry cohorts that share a stride are age groups recorded at a coarser
interval than periods, and our Theorem~\ref{thm-linear}(b) is their
equation (4) with tenure in the role of cohort. The problem is older
than that paper. Holford (\citeproc{ref-holford2006}{2006, sec. 3})
decomposes the age and period scales of an unequal-interval APC table
into macro-intervals of the least common multiple and micro-intervals
within them, and shows that the micro-effects carry the additional
non-identifiability, which he describes as cyclical at the least common
multiple and warns can be mistaken for a real cycle
(\citeproc{ref-holford2006}{Holford 2006, 984, 991}); he states that the
additional complexity of unequal intervals ``is well known'' and cites
Fienberg and Mason (\citeproc{ref-fienbergmason1979}{1979}) and Holford
(\citeproc{ref-holford1983}{1983}) for it
(\citeproc{ref-holford2006}{Holford 2006, 991}). Those two sources
support the attribution: Fienberg and Mason
(\citeproc{ref-fienbergmason1979}{1979, 37--40}) show that additional
linear dependencies arise when the intervals are unequal but evenly
spaced, and Holford (\citeproc{ref-holford1983}{1983, sec. 3},
pp.~318--320) derives the additional aliasing for balanced integer-ratio
intervals and illustrates the arbitrary saw-tooth pattern it permits
(\citeproc{ref-holford1983}{Holford 1983, sec. 3.3}, pp.~321--322).
Neither states the support-specific characterization or the
increment-graph condition of the present theorem. Smith and Wakefield
(\citeproc{ref-smithwakefield2016}{2016, sec. 4.2}, p.~598) write the
same ambiguity as an explicit indicator-function transformation: for any
reals \(v_1, \dots, v_M\),
\(\delta + \alpha_i + \beta_j + \gamma_k = \delta + \alpha_i + (\beta_j + v_m \mathbf{1}\{j \equiv m \ (\mathrm{mod}\ M)\}) + (\gamma_k - v_m \mathbf{1}\{k \equiv m \ (\mathrm{mod}\ M)\})\),
which is our Theorem~\ref{thm-linear}(b) with \(M = d\) and \(v\) in the
role of \(\rho\), and they too attribute the phenomenon to Holford
(\citeproc{ref-holford2006}{2006}). Gascoigne and Smith
(\citeproc{ref-gascoignesmith2023}{2023}) report a modulo-function
representation in Riebler and Held
(\citeproc{ref-rieblerheld2010}{2010}), which we cite on their authority
alone, not having consulted that article. Borusyak, Jaravel, and Spiess
(\citeproc{ref-borusyak2024}{2024, sec. 3.2}, fn. 15) note in passing
that additional collinearity arises in event studies when treatment is
staggered at periodic intervals. Two further points fix what is and is
not new here. The criterion that a linear functional is estimable in a
rank-deficient design exactly when it lies in the row space of the
design---our annihilation of \(\mathcal{K}_\tau\) in
Theorem~\ref{thm-linear}(a)---is classical; see Searle and Gruber
(\citeproc{ref-searlegruber2017}{2017, sec. 5.4}, ``Estimable
Functions''), where equation (37) states it as \(q' = t'X\) and Theorem
2 gives the equivalent test \(q'H = q'\) with \(H = G(X'X)\) and \(G\) a
generalized inverse of \(X'X\). Holford
(\citeproc{ref-holford2006}{2006, 985}) applies that test to
unequal-interval APC models, citing the first (1971) edition of the same
book for it. It is the foundation we use, not a result we claim. What
the present theorem adds is the support-specific characterization: which
vectors are in the kernel of every staggered design (b), the
increment-graph condition under which the affine-plus-periodic family
exhausts the tenure projection (c) with the trapezoid follow-up rule and
exact three-cohort dimension of Lemma~\ref{lem-trapezoid}, the
interrupted-schedule condition of Theorem~\ref{thm-dose}, and the
connection to regression in Section~\ref{sec-downstream}, which says
what the unidentified component does to fixed-effects and event-study
estimators under the stated rank conditions. We make no claim to have
discovered affine or periodic non-identification.

\begin{lemma}[Staggered
trapezoids]\protect\hypertarget{lem-trapezoid}{}\label{lem-trapezoid}

Let \(\mathcal{S} = \{(e,t): e \in \mathcal{E},\ e \le t \le T\}\) with
\(e_1 < \dots < e_K\), \(K \ge 2\), stride \(d\), and write
\(\Delta_2 := e_2 - e_1\) for the spacing between the first two cohorts,
\(w := T - e_K\) for the number of periods, beyond its entry period,
during which the last cohort is observed, and
\(\nu(\mathcal{S}) := \dim \Theta_\tau\).

\begin{enumerate}
\def\labelenumi{(\alph{enumi})}
\setcounter{enumi}{14}
\item
  \emph{(Component count.)} \(\nu(\mathcal{S})\) equals the number of
  connected components of the increment graph on
  \(\mathcal{U} = \{1, \dots, T - e_1\}\). Hence
  \(\nu(\mathcal{S}) \ge d\), with equality if and only if
  \(\mathcal{S}\) satisfies C\(_d\).
\item
  \emph{(Sufficient condition.)} If \(w \ge \Delta_2 - d\), then
  \(\mathcal{S}\) satisfies C\(_d\) and \(\nu(\mathcal{S}) = d\). In
  particular, when the first two cohorts are \(d\) apart, every
  \(T \ge e_K\) suffices.
\end{enumerate}

\begin{enumerate}
\def\labelenumi{(\roman{enumi})}
\setcounter{enumi}{1}
\item
  \emph{(Three cohorts.)} If \(K = 3\), then for every \(T \ge e_3\), \[
  \nu(\mathcal{S}) \;=\; \max\{\, d,\; \Delta_2 - w \,\},
  \] so the condition in (i) is necessary as well as sufficient: a third
  cohort observed for \(w\) periods removes exactly \(w\) dimensions
  from the two-cohort indeterminacy \(\Delta_2\), down to the floor
  \(d\).
\item
  \emph{(Two cohorts.)} If \(K = 2\),
  \(\nu(\mathcal{S}) = \Delta_2 = d\) for every \(T \ge e_2\).
\item
  \emph{(Upper bound for any number of cohorts.)} For every \(K \ge 3\)
  and every \(T \ge e_K\), \[
  \nu(\mathcal{S}) \;\le\; \max\{\, d,\; \Delta_2 - w \,\},
  \] so the three-cohort formula is an upper bound in general: extra
  cohorts can only shrink the identified set, and the follow-up of the
  \emph{last} cohort is what the bound is stated in.
\end{enumerate}

\end{lemma}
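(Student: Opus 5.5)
The plan is to reduce everything to counting connected components of the increment graph for the specific trapezoid edge set. I'll refer to the parts as (o), (i), (ii), (iii), (iv); the statement's counter settings make the first two print as (o) and (p) and the last three as (ii)--(iv), so "(i)" and "(o)" here mean its first two parts.

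\emph{Edge set and part (o).} In the trapezoid, cohort \(e\) is observed at \(t\) and \(t+1\) exactly when \(e \le t \le T-1\). Hence the first cohort makes \(\mathcal{U} = \{1,\dots,T-e_1\}\), the support is increment-complete, and every tenure \(1,\dots,T-e_1+1\) is observed. For cohorts \(e_j < e_k\) with \(\delta_{jk} = e_k - e_j\), the edges are \(u \sim u + \delta_{jk}\) for \(u \in \{1,\dots,T-e_k\}\).

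For (o) I would prove a converse to the proof of Theorem~\ref{thm-linear}(c). Take any assignment of constants \(c_C\), one per component \(C\), and set \(\Delta h_\tau(u) = c_{C(u)}\) with \(h_\tau(1)=0\), \(h_\alpha(t) = -h_\tau(t-e_1+1)\) and \(h_g(e) = -h_\alpha(t) - h_\tau(t-e+1)\). The definition of \(h_g\) is consistent because \(h_\alpha(t) + h_\tau(t-e+1)\) has zero first difference along each cohort's observed periods. That is exactly the edge condition \(c_{C(t-e_1+1)} = c_{C(t-e+1)}\), which holds because the two increments are joined by an edge of the increment graph. Finally, shift \(h_g\) and \(h_\alpha\) by a constant to restore \(h_g(e_0)=0\).

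The map from component constants to \(h_\tau\) is injective, since every tenure up to \(T-e_1+1\) is observed. So \(\nu = \#\text{components}\). Every component lies in a residue class mod \(d\), and all \(d\) classes occur because \(|\mathcal{U}| \ge e_2 - e_1 \ge d\). Hence \(\nu \ge d\), with equality iff components equal residue classes, i.e.\ iff C\(_d\) holds.

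\emph{Two and three cohorts.} For (iii) there is only the shift \(\Delta_2\) on \(\{1,\dots,T-e_1\}\), with \(u \le T-e_2\). Its components are the residue classes mod \(\Delta_2 = d\).

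For (ii), write \(b = e_3 - e_2\). The edges are:
\begin{itemize}
\item shift \(\Delta_2\) for \(u \le w+b\);
\item shift \(b\) for \(u \le w\);
\item shift \(\Delta_2 + b\) for \(u \le w\).
\end{itemize}
The shift-\(\Delta_2\) edges alone reach every vertex of \(\mathcal{U} = \{1,\dots,w+b+\Delta_2\}\), since every vertex above \(w+b\) is reached from below. So they give exactly \(\Delta_2\) chains, one per residue mod \(\Delta_2\). For each \(u \le w\), the edges \(u \sim u+b\) and \(u \sim u+b+\Delta_2\) land in the same \(\Delta_2\)-chain, because \(u+b \le w+b\). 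So each such \(u\) performs at most one merge, giving \(\nu \ge \Delta_2 - w\). Combined with (o), \(\nu \ge \max\{d, \Delta_2 - w\}\).

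For the matching upper bound I would view the quotient as a graph on \(\mathbb{Z}/\Delta_2\) with edges \(r \sim r+b\) added for the \(w\) consecutive residues \(r \equiv 1,\dots,w\). This is a Fine--Wilf / Euclid-type situation: a "partial rotation" by \(b\) on a cycle of length \(\Delta_2\). The claim to prove is that \(w\) consecutive chords of a rotation by \(b\) on a \(\Delta_2\)-cycle leave exactly \(\max\{\gcd(b,\Delta_2), \Delta_2 - w\}\) components, and \(\gcd(b,\Delta_2) = d\). I would prove this by a Euclidean-algorithm induction, replacing \((\Delta_2, b)\) by \((b, \Delta_2 \bmod b)\) after contracting the chords. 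This step is the main obstacle, and I would try first to verify it carefully on small cases and then push the contraction argument through.

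\emph{Sufficiency (i) and the general bound (iv).} Part (i) follows from (ii) when \(K=3\). For general \(K\) it follows from (iv) together with \(\nu \ge d\).

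For (iv), adding cohorts only adds edges, so \(\nu\) is at most the component count of the graph using only the edges from pairs among \(e_1, e_2, e_K\). Applying the three-cohort analysis with \(b = e_K - e_2\) gives \(\max\{d', \Delta_2 - w\}\), where \(d' = \gcd(\Delta_2, e_K - e_2)\) may exceed \(d\). When \(d' > \max\{d, \Delta_2 - w\}\), I would additionally use the edges from each intermediate cohort \(e_j\) paired with \(e_K\): shift \(e_K - e_j\) for \(u \le w\). In that regime \(w \ge \Delta_2 - d'\) already holds, so these \(w\) consecutive chords merge the \(d'\) residue classes modulo the successive gcds, down to \(d\). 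This bookkeeping, showing the intermediate cohorts' chords reduce \(d'\) to \(d\) without ever increasing the count, is the second delicate point.
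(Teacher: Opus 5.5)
\textbf{Verdict: genuine gap.} Your part (o) is essentially the paper's argument: your $h_\alpha(t)=-h_\tau(t-e_1+1)$ is the paper's $f(t,e_1)$, and consistency of $h_g$ rests on the same edge condition. Part (iii) is correct. So is the lower bound $\nu\ge\max\{d,\Delta_2-w\}$ in (ii), since each $u\le w$ makes at most one merge.

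The gap is that every \emph{upper} bound in the lemma rests on two steps you leave as plans. That covers the equality in (ii), all of (iv), and hence (i), which you derive from (iv).

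\textbf{Three cohorts.} You state the right count for $w$ consecutive chords of the rotation by $b$ on $\mathbb{Z}_{\Delta_2}$, but you only propose a Euclidean contraction. No induction is needed. Translation by $b$ is a permutation of the seed set whose cycles are the residue classes modulo $d$, because $\gcd(b,\Delta_2)=d$. Your chords are edges of those cycles.
\begin{itemize}
\item If $w<\Delta_2-d$, no class lies inside $[1,w]$, since each contains an element $\ge\Delta_2-d+1$. So the $w$ chords form a forest and remove exactly $w$ components.
\item If $w\ge\Delta_2-d$, each class has at most one seed without an out-chord. A component avoiding that seed is closed under $x\mapsto x+b$, so it is the whole class.
\end{itemize}

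\textbf{General $K$.} The mechanism you describe for (iv) is false as stated. Outside the regime of (i), the extra chords cannot in general bring the $d'$ classes down to $d$. Take $\mathcal{E}=\{1,13,24,25\}$ and $T=27$. Then $d=1$, $w=2$, and $d'=\gcd(12,12)=12>10=\max\{d,\Delta_2-w\}$, yet the full increment graph has $9$ components, not $1$. The right target is $\max\{d,\Delta_2-w\}$. Inside the regime of (i) the target is $d$, but you give no argument for reaching it there either.

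The paper obtains (i) and (iv) in one stroke, without passing through three-cohort subgraphs or iterated gcds:
\begin{enumerate}
\item Collapse the chains $u\sim u+\Delta_2$ to their seeds in $[1,\Delta_2]$.
\item Every seed $x\le w$ carries the chord $x\sim x+\Delta_k \pmod{\Delta_2}$ for every $k$, because $T-e_k\ge w$. These translations generate $d\mathbb{Z}_{\Delta_2}$.
\item So, within a residue class $C$ modulo $d$, any component that misses $C\setminus[1,w]$ is closed under the group and equals $C$. Hence $C$ splits into at most $\max\{1,|C\setminus[1,w]|\}$ components.
\item The counts $|C\cap[1,w]|$ differ by at most one across classes. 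Summing over classes therefore gives at most $\max\{d,\Delta_2-w\}$.
\end{enumerate}
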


\begin{proof}
\emph{(o)} Let \(h \in \mathcal{K}\). The trapezoid is
increment-complete, and the first cohort is observed at every
\(t \in [e_1, T]\), so \(h_\tau\) is determined on the whole observed
tenure range by \(h_\tau(1) = 0\) and the first differences
\(\Delta h_\tau(u)\), \(u \in \mathcal{U}\), and by the proof of
Theorem~\ref{thm-linear}(c) these differences are constant on the
components of the increment graph. Conversely, let \(\Delta h_\tau\) be
any function on \(\mathcal{U}\) that is constant on components, define
\(h_\tau\) by summation, and set \(f(t,e) := -h_\tau(t - e + 1)\). The
system \(h_\alpha(t) + h_g(e) = f(t,e)\) on \(\mathcal{S}\) is solvable:
put \(h_g(e) := f(T, e) - f(T, e_1)\) and \(h_\alpha(t) := f(t, e_1)\);
the residual
\(f(t,e) - h_\alpha(t) - h_g(e) = f(t,e) - f(t,e_1) - f(T,e) + f(T,e_1)\)
is a rectangle contrast between cohorts \(e\) and \(e_1\) over periods
\(t\) and \(T\), both cohorts being observed throughout \([t, T]\), so
it is the telescoping sum over \(t' \in [t, T-1]\) of the
consecutive-period contrasts
\(\Delta f(t', e) - \Delta f(t', e_1) = \Delta h_\tau(t' - e_1 + 1) - \Delta h_\tau(t' - e + 1)\),
each of which vanishes because \(t' - e + 1\) and \(t' - e_1 + 1\) are
joined by an edge. Hence every assignment of one free value per
component extends to an element of \(\mathcal{K}\), uniquely once
\(h_g(e_0) = 0\) is imposed, and \(\nu(\mathcal{S})\) is the number of
components. Components lie inside residue classes modulo \(d\), so there
are at least \(d\) of them when every class is represented in
\(\mathcal{U}\) (which holds since
\(\mathcal{U} \supseteq \{1, \dots, \Delta_2\}\) and
\(d \le \Delta_2\)), with equality iff the components are the residue
classes, which is C\(_d\)(ii).

\emph{Seeds.} The pair \((e_1, e_k)\) contributes the edges
\(u \sim u + \Delta_k\), \(\Delta_k := e_k - e_1\), for every
\(u \in [1, T - e_k]\); edges between two later cohorts \(e_j < e_k\) at
the same \(t\) join \(t - e_k + 1\) and \(t - e_j + 1\), both of which
are joined to \(t - e_1 + 1\), so they add no components. The pair
\((e_1, e_2)\) joins \(u\) to \(u + \Delta_2\) for all
\(u \le T - e_2\), i.e., whenever both lie in \(\mathcal{U}\); hence
every \(u \in \mathcal{U}\) is in the component of its \emph{seed}
\(x \in [1, \Delta_2]\), \(x \equiv u \pmod{\Delta_2}\), and
\(\nu(\mathcal{S})\) is the number of components of the graph induced on
the seeds. With \(K = 2\) the seeds are pairwise disconnected and
\(\nu = \Delta_2 = d\): (iii). For \(K \ge 3\), the pair \((e_1, e_k)\)
joins the seed of \(u\) to the seed of \(u + \Delta_k\) for each
\(u \le T - e_k\); on the seed set
\(\mathbb{Z}_{\Delta_2} = \{1, \dots, \Delta_2\}\) this is the merge
\(x \sim x + \Delta_k \pmod{\Delta_2}\), available for every seed
\(x \le T - e_k\), hence for every \(x \le w\) (since
\(T - e_k \ge T - e_K = w\)).

\emph{(i).} The subgroup of \(\mathbb{Z}_{\Delta_2}\) generated by
\(\{\Delta_k \bmod \Delta_2 : k \ge 2\}\) is \(d\mathbb{Z}_{\Delta_2}\),
because \(\gcd(\Delta_2, \dots, \Delta_K) = d\); its cosets are the
residue classes modulo \(d\), each of size \(\Delta_2 / d\). Consider
one class \(C\) and the directed graph on \(C\) with edges
\(x \to x + \Delta_k\) for every \(k\) and every \(x \in C\) with
\(x \le w\); vertices \(x > w\) have no out-edges. If
\(w \ge \Delta_2 - d\), the interval \((w, \Delta_2]\) has length at
most \(d\) and therefore contains at most one element of \(C\): at most
one vertex of \(C\) lacks out-edges. Let \(A\) be a weakly connected
component of this graph. If \(A\) contains no vertex without out-edges,
then \(A\) is closed under \(x \mapsto x + \Delta_k\) for every \(k\),
hence under addition by the subgroup they generate, hence \(A = C\). So
every component contains the (at most one) vertex without out-edges, and
there is exactly one component: all seeds in \(C\) are merged. This
holds for every class, so the seeds collapse to \(d\) classes, the
components of the increment graph are the residue classes, C\(_d\)
holds, and \(\nu = d\) by (o). If \(\Delta_2 = d\) the condition
\(w \ge 0\) is vacuous.

\emph{(ii).} With \(K = 3\) the only merges are
\(x \sim x + \Delta_3 \pmod{\Delta_2}\) for \(x \in [1, w]\): at most
\(w\) of them, so \(\nu \ge \Delta_2 - w\), and \(\nu \ge d\) always by
(o). If \(w \ge \Delta_2 - d\), (i) gives
\(\nu = d = \max\{d, \Delta_2 - w\}\). If \(w < \Delta_2 - d\), the
\(w\) merges are independent: a cycle among them would require every
source in some orbit of \(x \mapsto x + \Delta_3\) on
\(\mathbb{Z}_{\Delta_2}\)---an orbit is a full residue class modulo
\(d\) within \([1, \Delta_2]\), whose largest element is at least
\(\Delta_2 - d + 1 > w\)---to lie in \([1, w]\), which is impossible.
Hence \(\nu = \Delta_2 - w = \max\{d, \Delta_2 - w\}\).

\emph{(iv).} Every generator \(\Delta_k\), \(k \ge 3\), is available at
every seed \(x \le T - e_k\), hence at every seed \(x \le w\), since
\(T - e_k \ge T - e_K = w\). Fix a residue class \(C\) modulo \(d\)
within \(\{1, \dots, \Delta_2\}\), let \(A := C \cap [1, w]\) be its
seeds with out-edges and \(B := C \setminus A\) those without. If a
connected component \(P\) of the seed graph restricted to \(C\) contains
no vertex of \(B\), then every \(x \in P\) has all its out-edges, so
\(P\) is closed under \(x \mapsto x + \Delta_k\) for every \(k\), hence
under the subgroup these generate---which is \(d\mathbb{Z}_{\Delta_2}\),
acting transitively on \(C\)---so \(P = C\) and \(B = \emptyset\).
Therefore, when \(B \ne \emptyset\), every component meets \(B\) and
distinct components meet disjoint parts of it: the number of components
of \(C\) is at most \(|B| = |C| - |A|\), and it is \(1\) when
\(B = \emptyset\). Summing over the \(d\) classes,
\(\nu \le \sum_C \max\{1,\ |C| - |A_C|\}\). The classes interleave, so
the \(|A_C|\) differ by at most one. If some class has \(B = \emptyset\)
then every class has \(|C| - |A_C| \le 1\), so \(\Delta_2 - w \le d\)
and the sum is at most \(d\); otherwise every term is \(|C| - |A_C|\)
and the sum is exactly \(\Delta_2 - w\). Either way
\(\nu \le \max\{d, \Delta_2 - w\}\). \(\square\)
\end{proof}

Three remarks. First, (o) turns C\(_d\) into a computation for any
trapezoid---count the components of the increment graph---and (i) into a
design rule: following the last refreshment cohort for at least
\(\Delta_2 - d\) periods beyond its entry is \emph{sufficient} for
C\(_d\); it is necessary only in the three-cohort case, where (ii) makes
the count exact. In the Japanese panel of Section~\ref{sec-jlps},
\(\Delta_2 = 4 = d\) and (i) holds for every \(T\); had the second
cohort entered at wave 6 rather than 5 (\(\Delta_2 = 5\), \(d = 1\) with
the third at 13), the last cohort would have needed \(w \ge 4\) periods
of follow-up before the identified set shrank to the affine line, and
with \(T = 16\) (\(w = 3\)) the identified set is two-dimensional (31
cells, 33 parameters, rank 31). The design with entries at \(1, 4, 8\)
and \(T = 8\) discussed after Theorem~\ref{thm-linear} has
\(w = 0 < \Delta_2 - d = 2\) and \(\nu = 3\), as (ii) gives. Second,
(iv) says that the three-cohort formula bounds every design: a fourth or
fifth cohort can only merge more seeds, never fewer. The bound is not
tight---in the sweep reported in the replication package (all entry sets
\(\mathcal{E} \subset \{1, \dots, 13\}\) with \(3 \le K \le 5\) and
\(T \le e_K + 13\); 10,934 designs) it is attained in \(9{,}746\) of
them and strict in \(1{,}188\), while (o) holds in every case---so the
component count, or the rank of \(X_{\mathcal{S}}\), remains the
operational check. Stating the bound in terms of the \emph{last}
cohort's follow-up is not an artifact: with entries at
\(1, 5, 9, 11, 12\) and \(T = 12\) the third cohort is followed for
three periods and the bound computed from it would be \(1\), while
\(\nu = 3\), because the spacing \(e_3 - e_1 = 8\) is a multiple of
\(\Delta_2 = 4\) and merges nothing. Third, the replication package
provides an R function, \texttt{design\_rank()}, that builds
\(X_{\mathcal{S}}\) for any support (trapezoid or not), reports its
rank, nullity, stride, and increment-graph components, checks the
condition of (i), returns a basis of \(\mathcal{K}_\tau\), and tests
whether a given linear functional of \(\tau\) is identified.

\begin{remark}
\textbf{Normalizations in practice.} Because levels of \(\tau\) are
never identified, every published level of panel conditioning or
rotation-group bias rests on a normalization, and the practices of the
literature can be read as such: the rotation-group index of the Current
Population Survey, defined in the survey's own documentation as ``the
ratio of the {[}estimate{]} based on a particular month-in-sample group
to the average estimate from all eight month-in-sample groups combined,
multiplied by 100'' (quoted in
\citeproc{ref-halpernmanners2012}{Halpern-Manners and Warren 2012, fig.
1}; the same index is defined and tabulated by
\citeproc{ref-bailar1975}{Bailar 1975}), which sets the average of the
eight month-in-sample effects to a constant; Krueger, Mas, and Niu
(\citeproc{ref-krueger2017}{2017})'s indexation of rotation-group bias
to the full-sample mean; the reference-group restriction in the
structural time-series model of van den Brakel and Krieg
(\citeproc{ref-brakelkrieg2015}{2015}), whose rotation-group bias varies
over time and is measured relative to one group; the ``average
multiplicative bias by month in sample \ldots{} relative to average
second-stage estimates'' tabulated by McIllece
(\citeproc{ref-mcillece2022}{2022}, Table 2), which normalizes instead
by the CPS second-stage estimate, a survey estimate computed from all
eight rotation groups, assumed approximately unbiased and calibrated to
external population controls (\citeproc{ref-mcillece2022}{McIllece 2022,
secs. 2.1, 3.1--3.2}); and the omission of two reference categories in
fully dynamic event studies (\citeproc{ref-borusyak2024}{Borusyak,
Jaravel, and Spiess 2024}). We stress that these are analogies between
\emph{normalizations}, not equivalences between \emph{models}: the
models of van den Brakel and Krieg
(\citeproc{ref-brakelkrieg2015}{2015}) and McIllece
(\citeproc{ref-mcillece2022}{2022}) are not the time-invariant additive
model M4, their estimands differ from \(\tau\), and a single restriction
of the kind each imposes removes one dimension of \(\Theta_\tau\), which
is all of it only when \(d = 1\). The observation that disagreements
among published level estimates partly reflect differing conventions is,
in our model, a corollary of Theorem~\ref{thm-linear}; it is not a
demonstration that those disagreements are exactly movements along
\(\Theta_\tau\), which would require the sources to share M4. A
different route is to change the model class rather than to normalize:
Feng, Hu, and Sun (\citeproc{ref-fenghusun2022}{2022}) point-identify
misclassification probabilities in a latent-state model of labor-force
status with history-dependent errors by restricting response dynamics to
first-order dependence, a model in which the object identified is not a
mean path on tenure; we do not claim that their restrictions select a
point of \(\Theta_\tau\), only that the two approaches respond to the
same dependency between tenure, period, and cohort.
\end{remark}

\begin{corollary}[What is identified: curvature at the design's
stride]\protect\hypertarget{cor-curvature}{}\label{cor-curvature}

On any support, a linear functional \(\lambda'\tau\) is point identified
if and only if \(\lambda\) annihilates \(\mathcal{K}_\tau\)
(Theorem~\ref{thm-linear}(a)). Under C\(_d\) this is the pair of
conditions \[
\sum_s \lambda_s\,(s-1) = 0 \qquad\text{and}\qquad \sum_{s \equiv r \ (\mathrm{mod}\ d)} \lambda_s = 0 \ \text{ for every residue } r \not\equiv 1 ,
\] and in particular the \emph{lag-\(d\) second differences} \[
\Delta_d^2\tau(s) \;:=\; \tau(s + d) - 2\,\tau(s) + \tau(s - d)
\] are identified for every \(s\) with \(s - d\) and \(s + d\) in the
support (second differences are \emph{centered} at \(s\) throughout the
paper, and \(\Delta^2 := \Delta_1^2\)); when \(d = 1\) these are the
ordinary second differences, and all curvature of \(\tau\) is
identified. When \(d > 1\), ordinary second differences
\(\Delta^2\tau(s)\) are \emph{not} identified, and neither are the
increments \(\tau(s+d) - \tau(s)\) between refreshment tenures. The
lag-\(d\) second differences are the natural examples, not the whole
identified space: that space has dimension equal to the number of free
conditioning coordinates minus \(\dim \Theta_\tau\), and contains
cross-class contrasts as well. In the design of Section~\ref{sec-jlps}
(\(d = 4\), tenures \(1\) to \(19\)) it is \(18 - 4 = 14\)-dimensional;
the eleven lag-4 second differences span a proper subspace, and
\(\Delta^2\tau(s) - \Delta^2\tau(s+4)\) for \(s = 2, 3, 4\) complete a
basis (Appendix A.2 uses the case \(s = 2\)). The displayed pair of
conditions is the complete criterion under C\(_d\). On a support that
violates C\(_d\) the identified functionals are fewer still, and must be
read off \(\mathcal{K}_\tau\). Zero identified curvature is consistent
with an affine path (Theorem~\ref{thm-linear}(b)), so identified
curvature can refute nonlinear shapes of conditioning but cannot by
itself establish that conditioning has stopped; see
Proposition~\ref{prp-plateau}.

\end{corollary}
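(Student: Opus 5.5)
The plan is to read everything off Theorem~\ref{thm-linear}. The first sentence is part (a) restated. The displayed pair of conditions is part (c): under C\(_d\), \(\mathcal{K}_\tau\) is spanned by \(s-1\) and the indicators \(\mathbf{1}\{s\equiv r\}\) for \(r\not\equiv 1\), restricted to observed tenures. Annihilating a spanning set is equivalent to annihilating the span. So I will only verify individual functionals against these generators.

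\emph{Lag-\(d\) second differences.} Take \(\lambda = \mathbf{1}_{s+d} - 2\,\mathbf{1}_{s} + \mathbf{1}_{s-d}\). Its weights sum to zero within the residue class of \(s\), and vanish on every other class. Against \(s-1\) it gives \((s+d-1) - 2(s-1) + (s-d-1) = 0\). For \(d=1\) these are ordinary second differences. To see that ``all curvature is identified'', note that C\(_1\) makes \(\mathcal{K}_\tau\) one-dimensional, spanned by \(s-1\). The identified functionals are then those with \(\sum_s\lambda_s(s-1)=0\), which contain every second difference.

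\emph{Non-identification for \(d>1\).} I test against a single generator.
\begin{itemize}
\item For \(\Delta^2\tau(s)\), the weights \(1,-2,1\) fall on the three consecutive residues \(s-1,s,s+1\), which are distinct when \(d\ge 3\). At least one of these classes is not the class of \(1\) and carries a nonzero total weight, so the functional fails a residue condition.
\item When \(d=2\), the residues of \(s\pm1\) coincide, so the class of \(s\pm1\) has weight \(2\) and the class of \(s\) has weight \(-2\). One of these two classes is \(\not\equiv 1\), so again a condition fails.
\item For the increment \(\tau(s+d)-\tau(s)\), the class sums vanish, but \(\sum_s\lambda_s(s-1) = d \ne 0\), so it is killed by the affine direction. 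This actually holds for every \(d\), and by part (b) on every support.
\item For ordinary second differences I can also use part (b) directly: pick a \(d\)-periodic \(\rho\) with \(\rho(1)=0\) whose \(\Delta^2\rho(s)\neq 0\). An example is the indicator of the class of \(s\), or of \(s+1\) when \(s\equiv1\). It lies in \(\mathcal{K}_0\), hence in \(\mathcal{K}_\tau\), so the non-identification holds on any support where the three tenures are observed, not only under C\(_d\).
\end{itemize}

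\emph{Dimension count.} The identified functionals are the annihilator of \(\mathcal{K}_\tau\) inside the space of free coordinates \(\tau(s)\), \(s\in\mathcal{S}_{\mathrm{obs}}\setminus\{1\}\). Its dimension is the number of free coordinates minus \(\dim\mathcal{K}_\tau\), and \(\dim\mathcal{K}_\tau = \dim\Theta_\tau\) by part (d). For the Japanese design, tenures \(1\) to \(19\) give \(18\) free coordinates. The stride is \(4\) and C\(_4\) holds by Lemma~\ref{lem-trapezoid}(i), since \(\Delta_2=d\). Some cohort covers at least \(5\) consecutive tenures, so tenure coverage holds and \(\dim=4\), giving \(14\).

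\emph{Basis in the Japanese design.} The lag-4 second differences exist for \(s=5,\dots,15\), which is eleven functionals. They are linearly independent: ordering by \(s\), each is the first to involve \(\tau(s+4)\) (a triangular structure). They also respect each residue class separately, so they span a subspace that preserves classes. The contrast \(\Delta^2\tau(s)-\Delta^2\tau(s+4)\) has weights summing to zero in each class. Its first moment is \(0-0=0\), since every ordinary second difference has zero first moment. So it is identified. It mixes classes, so it lies outside the class-preserving span. For \(s=2,3,4\) the three contrasts should be independent modulo that span. I will check this by computing the rank of the \(14\times 18\) matrix, or via \texttt{design\_rank()}. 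This rank check is the step I expect to be the only real work; the alternative is a triangularity argument on the lowest tenures \(1\)–\(4\), which the lag-4 differences never touch alone.

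\emph{Remaining remarks.} The statements for supports that violate C\(_d\) follow because the identified functionals are read off (a), and \(\mathcal{K}_\tau\) only grows there. The final remark is part (b) with \(\rho=0\): an affine path lies in \(\tau_0+\mathcal{K}_\tau\) with \(\tau_0\equiv0\), and zero curvature is identical for both.
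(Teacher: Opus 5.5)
Your proofs of the criterion, of the identification of $\Delta_d^2\tau(s)$, of the failure of $\Delta^2\tau(s)$ for $d>1$ (the $d=2$ versus $d\ge 3$ residue count), and of the failure of $\tau(s+d)-\tau(s)$ via $\sum_s\lambda_s(s-1)=d$ are the paper's proof. Your dimension count $18-4=14$ is also correct. The paper's proof does not argue the JLPS basis claim at all, so that part is yours to supply.

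That basis argument contains one invalid step. The inference ``it mixes classes, so it lies outside the class-preserving span'' does not follow. The span $V$ of the eleven lag-4 second differences is closed under sums across classes, so it contains multi-class functionals such as $\Delta_4^2\tau(5)+\Delta_4^2\tau(6)$.

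What actually separates the contrasts from $V$ is the vector of per-class first moments, $\phi_r(\lambda):=\sum_{s\equiv r\ (\mathrm{mod}\ 4)}\lambda_s(s-1)$. Every lag-4 second difference has $\phi_r=0$ for every $r$, so $\phi$ vanishes on $V$. By contrast, the restriction of $c_s:=\Delta^2\tau(s)-\Delta^2\tau(s+4)$ to each class it meets is a lag-4 increment with nonzero moment. Ordering the classes as $1,2,3,0$,
\[
\phi(c_2)=(-4,8,-4,0),\qquad \phi(c_3)=(0,-4,8,-4),\qquad \phi(c_4)=(-4,0,-4,8).
\]
These have rank three, so no nontrivial combination of $c_2,c_3,c_4$ lies in $V$.

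Combined with your triangular argument for the eleven lag-4 differences, this gives fourteen independent identified functionals in the fourteen-dimensional identified space, hence a basis. It makes the deferred $14\times 18$ rank computation unnecessary, though that computation would also settle the question. Your alternative triangularity idea on tenures $1$--$4$ would not separate the contrasts directly, because $\Delta_4^2\tau(6)$, $\Delta_4^2\tau(7)$, $\Delta_4^2\tau(8)$ already involve $\tau(2)$, $\tau(3)$, $\tau(4)$.
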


\begin{proof}
The first sentence is Theorem~\ref{thm-linear}(a). Under C\(_d\),
\(\mathcal{K}_\tau\) is spanned by \(s - 1\) and the residue-class
indicators \(\mathbf{1}\{s \equiv r\}\), \(r \not\equiv 1\)
(Theorem~\ref{thm-linear}(c)), and annihilating that basis is the
displayed pair. The weights of \(\Delta_d^2\tau(s)\) are \((1, -2, 1)\)
at the centered positions \(s-d, s, s+d\), which lie in one residue
class and sum to zero there, and
\(\sum \lambda_s (s-1) = (s-d-1) - 2(s-1) + (s+d-1) = 0\). For \(d > 1\)
the weights of \(\Delta^2\tau(s)\), at \(s-1, s, s+1\), fall into more
than one residue class---two classes when \(d = 2\) (the class of
\(s \pm 1\), with sum \(2\), and that of \(s\), with sum \(-2\)) and
three when \(d \ge 3\)---and at least one class other than that of
tenure one has a nonzero sum; those of \(\tau(s+d) - \tau(s)\) have
\(\sum\lambda_s(s-1) = d \ne 0\). \(\square\)
\end{proof}

A stride-one design satisfying C\(_1\) identifies changes in the slope
of the conditioning path at unit resolution, so the stylized fact that
``conditioning happens in the first few waves'' has a normalization-free
content there---that the curvature is concentrated in the first few
waves. That conditioning \emph{stops}, however, is not identified from
curvature alone, because zero curvature is compatible with a nonzero
linear trend; it requires an additional restriction such as the plateau
of Proposition~\ref{prp-plateau}. Adjacent entry cohorts with enough
follow-up are the leading case, by Lemma~\ref{lem-trapezoid}(i), but
stride one is what matters, not adjacency; in a panel refreshed at a
stride \(d > 1\), the shape of conditioning is identified only on the
coarser grid of that stride, and every published \emph{level} is a
normalization in every design. This is the sense in which a refreshment
schedule is an identification decision.

\subsection{Interrupted participation: rotation
designs}\label{sec-interrupted}

M1 assumes uninterrupted participation, so that the number of interviews
a unit has received coincides with its calendar tenure. Rotation designs
violate it by construction. The U.S. Current Population Survey
interviews a household for four consecutive months, rests it for eight,
and returns it for four more---the 4--8--4 pattern whose rotation-group
differences are the subject of the literature this paper began with
(\citeproc{ref-bailar1975}{Bailar 1975}; \citeproc{ref-solon1986}{Solon
1986}; \citeproc{ref-krueger2017}{Krueger, Mas, and Niu 2017};
\citeproc{ref-halpernmanners2012}{Halpern-Manners and Warren 2012})---so
that its eighth interview arrives in its sixteenth month. The relaxation
is worth making because the two clocks it separates are exactly the two
sides of Theorem~\ref{thm-linear}: conditioning is a function of the
\emph{dose}, the number of interviews received, while the linear
dependency that makes the path unidentified is a statement about
\emph{calendar} time. When the two coincide, the dependency is
unbreakable. When they do not, it is.

Let the entry cohorts be \(\mathcal{E}\) with stride \(d\) as before,
and let every cohort follow a common \emph{interview pattern}
\(J = \{\,j_1 = 0 < j_2 < \dots < j_K\,\}\) of offsets from entry:
cohort \(e\) is interviewed at the calendar times \(e + j_k\) that lie
in the field period, and the interview at \(e + j_k\) is its \(k\)-th,
so its dose is \(k\). Write \(j(k) := j_k\) for the elapsed calendar
time at which the \(k\)-th interview arrives. The cell means are
\begin{equation}\phantomsection\label{eq-dose}{
\mu\bigl(e,\, e + j_k\bigr) \;=\; \alpha(e + j_k) \;+\; g(e) \;+\; \tau(k), \qquad \tau(1) = 0,\ g(e_0) = 0,
}\end{equation} which is M4 with the dose in place of the tenure;
uninterrupted participation is the case \(J = \{0, 1, \dots, K-1\}\),
where \(j(k) = k - 1\) and Equation~\ref{eq-dose} is
Equation~\ref{eq-model}. We assume:

\begin{itemize}
\tightlist
\item
  \textbf{(P\('\)) Continuous recruitment.} The entry cohorts form an
  arithmetic progression
  \(\mathcal{E} = \{e_0, e_0 + d, \dots, e_0 + Ld\}\), every scheduled
  cell \((e, e + j_k)\) with \(e \in \mathcal{E}\), \(j_k \in J\) lies
  in the field period, the stride is itself a spacing of the pattern
  (\(d = j' - j\) for some \(j, j' \in J\)), and the entry window covers
  the pattern's largest within-class gap, \[
  L\,d \;\ge\; \Gamma(J, d) \;:=\; \max\bigl\{\, j' - j \;:\; j < j' \text{ consecutive in } J \cap (j + d\mathbb{Z}) \,\bigr\}.
  \] For the CPS pattern with monthly entry, \(\Gamma = 12 - 3 = 9\), so
  (P\('\)) asks for ten consecutive monthly cohorts.
\end{itemize}

An earlier version of this section stated a weaker condition---that
every spacing in \(d\mathbb{Z} \cap [1, j_K]\) be \emph{realized} by
some pair of cohorts observed at a common period---and claimed sharpness
under it. That claim is false: with cohorts \(\{1, 2, 4\}\), offsets
\(J = \{0, 1, 3\}\) and \(T = 7\), every spacing \(1, 2, 3\) is realized
and the incidence graph is connected, yet the design has ten columns,
rank eight, and a \emph{two}-dimensional conditioning kernel containing
\(h_\tau = (0, 0, 1)\), which is not a multiple of \(j = (0, 1, 3)\).
The gap in the cohort set is what breaks the argument: realized spacings
constrain the cohort effects only along the pairs that realize them, and
with no cohort at \(3\) nothing forces \(h_g(4) - h_g(2)\) to equal
twice \(h_g(2) - h_g(1)\). (P\('\)) removes the gaps. It is sufficient
and not necessary; the exact criterion for any support is the projected
kernel of Theorem~\ref{thm-linear}(a), which the replication package
computes for arbitrary patterns and cohort sets.

\begin{theorem}[Unidentified directions under a common interrupted
interview schedule]\protect\hypertarget{thm-dose}{}\label{thm-dose}

Under M2--M3, Equation~\ref{eq-dose}, and a common interview pattern
\(J\):

\begin{enumerate}
\def\labelenumi{(\alph{enumi})}
\item
  \emph{(Directions that are never identified.)} For every
  \(m \in \mathbb{R}\) and every \(d\)-periodic \(\rho\) with
  \(\rho(0) = 0\), the vector \(h_\tau(k) = m\,j(k) + \rho(j(k))\),
  \(h_g(e) = m\,(e - e_0)\),
  \(h_\alpha(t) = -m\,(t - e_0) - \rho(t - e_0)\) lies in the kernel of
  the cell design.
\item
  \emph{(Sharpness under continuous recruitment.)} Under (P\('\)) their
  tenure components exhaust the tenure projection of the kernel: \[
  \Theta_\tau \;=\; \bigl\{\, k \mapsto \tau(k) + m\,j(k) + \rho(j(k)) \;:\; m \in \mathbb{R},\ \rho \text{ $d$-periodic},\ \rho(0) = 0 \,\bigr\},
  \] whose dimension is the number of residue classes modulo \(d\)
  represented in \(J\)---one when \(d = 1\). Without (P\('\)) the
  identified set can be strictly larger; its dimension is then the rank
  of the conditioning projection of \(\ker X_{\mathcal{S}}\), computable
  for any support.
\item
  \emph{(Identified functionals.)} Under (P\('\)), \(\lambda'\tau\) is
  identified if and only if \(\sum_k \lambda_k\, j(k) = 0\) and, for
  \(d > 1\), \(\sum_{k:\, j(k) \equiv r} \lambda_k = 0\) for every
  residue \(r \not\equiv 0\) represented in \(J\). When \(d = 1\),
  writing \(T(j) := \tau(k(j))\) for the path as a function of elapsed
  calendar time, the identified functionals are exactly the linear
  combinations of second-order divided differences of \(T\) on
  \(J\)---the curvature of conditioning against the \emph{calendar}, not
  against the dose. For \(d > 1\) the identified functionals are those
  with \(\sum_k \lambda_k j(k) = 0\) and zero weight on every class
  \(r \not\equiv 0\); these include the second-order divided differences
  taken within a residue class, but are not spanned by them---with
  \(J = \{0,1,2,3\}\) and \(d = 2\) no class has three offsets, and the
  one-dimensional identified space is spanned by the cross-class
  contrast \([T(2) - T(0)] - [T(3) - T(1)]\), a difference of
  within-class per-period rates.
\item
  \emph{(Constant-rate paths.)} Under (P\('\)), a nonzero path that is
  linear in the dose, \(\tau(k) = c\,(k-1)\) with \(c \ne 0\), belongs
  to the identified set of the null path---and is therefore
  indistinguishable from no conditioning at all---if and only if
  \(k - 1 = m\,j(k) + \rho(j(k))\) for some such \((m, \rho)\); with
  \(d = 1\) this says precisely that the interviews are \emph{equally
  spaced}. (The zero path, \(c = 0\), is the null path itself and
  belongs to its identified set on every schedule.) The statement
  concerns constant-rate paths only. Under one common schedule \(k\) and
  \(j(k)\) are in one-to-one correspondence on the observed support, so
  an \emph{unrestricted} function of the interview count can always be
  rewritten as a function of elapsed time and conversely; what an
  unequally spaced design separates is a constant increment per
  interview from a constant increment per unit of calendar time, not the
  two mechanisms in general.
\end{enumerate}

\end{theorem}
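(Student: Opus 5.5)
The plan mirrors the proof of Theorem~\ref{thm-linear}: parts (a) and (b) carry the content, and (c) and (d) then follow from Theorem~\ref{thm-linear}(a).

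\emph{Part (a)} is a direct verification, as in Theorem~\ref{thm-linear}(b). At a scheduled cell \((e, e+j_k)\), the sum \(h_\alpha(e+j_k) + h_g(e) + h_\tau(k)\) splits into two brackets.
\begin{itemize}
\item The linear bracket is \(m[-(e+j_k-e_0) + (e-e_0) + j_k] = 0\).
\item The periodic bracket is \(\rho(j_k) - \rho(e+j_k-e_0)\). It vanishes because \(d \mid e - e_0\) for every cohort.
\end{itemize}
The normalizations hold because \(j(1) = 0\) and \(\rho(0) = 0\).

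\emph{Part (b)} is where the work lies. Take any \(h \in \ker X_{\mathcal{S}}\). The basic move is to compare two cohorts \(e\) and \(e' = e+\delta\) at a common period \(t = e + j' = e' + j\), with \(j' - j = \delta\). Subtracting their cell equations removes \(h_\alpha(t)\) and gives
\[
h_g(e+\delta) - h_g(e) \;=\; h_\tau(k(j')) - h_\tau(k(j)).
\]
The right side does not depend on \(e\). I will use this in two steps.
\begin{enumerate}
\item \emph{\(h_g\) is linear.} By (P\('\)), the stride \(d\) is a spacing of \(J\). Every scheduled cell lies in the field period. So every pair of consecutive cohorts \(e, e+d\) of the arithmetic progression shares a period through the same pair of offsets. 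Hence \(h_g(e+d) - h_g(e)\) is a constant, call it \(md\), for all \(e\). This gives \(h_g(e) = m(e-e_0)\). This is the step where the absence of gaps in the cohort set is essential, and the counterexample with cohorts \(\{1,2,4\}\) shows that it fails without it.
\item \emph{\(h_\tau\) is affine in \(j\) within each residue class.} Take consecutive offsets \(j < j'\) in the same residue class modulo \(d\). Their gap is at most \(\Gamma(J,d) \le Ld\). So the cohorts \(e_0\) and \(e_0 + (j'-j)\) both exist and share the period \(e_0 + j'\). The displayed identity, combined with step 1, gives \(h_\tau(k(j')) - h_\tau(k(j)) = m(j'-j)\). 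Chaining along each class yields \(h_\tau(k) = m\,j(k) + c_{[j(k)]}\). Here \(c_0 = 0\) because \(h_\tau(1) = 0\) at \(j = 0\).
\end{enumerate}
Setting \(\rho(r) := c_r\) shows that \(h_\tau\) is the tenure component of an element of the family in (a).

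For the dimension count in (b), I still need the function \(j(k)\) and the indicators of the represented classes \(r \not\equiv 0\) to be linearly independent on \(J\). If \(d = 1\), there is only the function \(j\), which is nonzero. If \(d > 1\), the stride is a spacing of \(J\), so some class contains two offsets differing by \(d\). On that class \(j\) is non-constant, which rules out a dependence. The dimension is therefore one (the slope \(m\)) plus the number of represented classes other than \(0\).

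\emph{Part (c)} is annihilation of that basis, by Theorem~\ref{thm-linear}(a).
\begin{itemize}
\item \emph{Case \(d = 1\).} The condition is \(\sum_k \lambda_k j(k) = 0\), with \(\lambda_1\) unrestricted because \(\tau(1) = 0\). Equivalently, \(\lambda\) annihilates both \(\mathbf{1}\) and \(j\) on the full coordinate vector. On \(K\) distinct nodes the annihilator of \(\{1, j\}\) has dimension \(K-2\). It is spanned by the \(K-2\) consecutive second-order divided differences of \(T\) on \(J\), which are independent and annihilate affine functions of \(j\).
\item \emph{Case \(d > 1\).} I will verify the example \(J = \{0,1,2,3\}\) directly. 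The two class constraints and the \(j\)-constraint leave one dimension, and the stated contrast \([T(2)-T(0)] - [T(3)-T(1)]\) satisfies all three.
\end{itemize}

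\emph{Part (d)} follows from (b). The path \(c(k-1)\) lies in \(\Theta_\tau\) of the null path if and only if \(c(k-1) = m\,j(k) + \rho(j(k))\) for some admissible \((m,\rho)\). Scaling by \(c \ne 0\) gives the stated criterion. With \(d = 1\), \(\rho \equiv 0\), so the criterion reads \(k - 1 = m'\,j(k)\) for all \(k\). That holds if and only if \(j_k = (k-1)/m'\), which means the interviews are equally spaced. The remark on unrestricted paths follows because \(k \mapsto j(k)\) is a bijection on the pattern.

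I expect the main obstacle to be step 1 of part (b). It has to be checked that every required cohort pair is realized at a period inside the field period, and this is exactly what (P\('\)) is built to guarantee.
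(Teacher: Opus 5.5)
Your proposal is correct and follows the paper's proof essentially step for step. Part (a) is a direct verification. In (b), the two-cohort difference identity first makes \(h_g\) affine through consecutive cohorts at stride \(d\), and then makes \(h_\tau\) affine in \(j\) within each residue class through the cohorts \(e_0\) and \(e_0+\delta\). Injectivity comes from an offset pair \(j, j+d\), and (c)--(d) are read off Theorem~\ref{thm-linear}(a). Your \(d=1\) case of (c) is somewhat more explicit than the paper's, since you exhibit the \(K-2\) consecutive divided differences as a basis of the annihilator of \(\{1, j\}\). The back-substitution for \(h_\alpha\) that you omit is not needed for the tenure-projection claim.
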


\begin{proof}
\emph{(a)} At the cell \((e, t) = (e, e + j_k)\) we have
\(j(k) = t - e\), so \(m\,j(k) = m(t - e_0) - m(e - e_0)\) cancels
against \(h_\alpha\) and \(h_g\); and
\(\rho(j(k)) = \rho(t - e) = \rho(t - e_0)\) because \(d\) divides
\(e - e_0\), which cancels against the second term of \(h_\alpha\). The
normalizations hold since \(\rho(0) = 0\).

\emph{(b)} Let \(h\) be in the kernel and put \(H(j) := h_\tau(k(j))\)
for \(j \in J\), so \(H(0) = h_\tau(1) = 0\). If cohorts \(e\) and
\(e - \delta\) are both observed at a period \(t\), their offsets are
\(j = t - e\) and \(j + \delta\), both in \(J\), and subtracting the two
cell equations gives \begin{equation}\phantomsection\label{eq-pair}{
h_g(e) - h_g(e - \delta) \;=\; H(j + \delta) - H(j) .
}\end{equation} The right-hand side is a function of the offsets alone.
Two facts follow, and (P\('\)) supplies the cohorts and cells that each
requires. \emph{First, \(h_g\) is affine on \(\mathcal{E}\).} Because
\(d\) is a spacing of the pattern, there are \(j, j + d \in J\); for
every \(e \in \mathcal{E}\) other than \(e_0\), the cohorts \(e\) and
\(e - d\) are both in the sample and both observed at \(t = e + j\)
(offsets \(j\) and \(j + d\)), so Equation~\ref{eq-pair} gives
\(h_g(e) - h_g(e - d) = H(j + d) - H(j) =: G\), the same number for
every \(e\). Hence \(h_g(e_0 + \ell d) = \ell G\), that is
\(h_g(e) = m(e - e_0)\) with \(m := G/d\). \emph{Second, \(H\) is affine
on each residue class.} Let \(j < j'\) be consecutive members of \(J\)
in the same class modulo \(d\), so that
\(\delta := j' - j \in d\mathbb{Z}\) and
\(\delta \le \Gamma(J, d) \le Ld\). Then \(e_0\) and \(e_0 + \delta\)
are both in the sample, both observed at \(t = e_0 + j'\) (offsets
\(j'\) and \(j\)), and Equation~\ref{eq-pair} with the first fact gives
\(H(j') - H(j) = h_g(e_0 + \delta) - h_g(e_0) = m\,\delta = m(j' - j)\).
Chaining along each class, \(H(j) - mj\) is constant on the class;
writing \(\rho\) for that constant as a function of the
class---\(\rho(0) = 0\) by \(H(0) = 0\)---gives \(H(j) = mj + \rho(j)\),
which is (a). The parameters are free and the representation is
injective: if \(mj + \rho(j) = 0\) on \(J\), the pair \(j, j + d \in J\)
lies in one class and gives \(md = 0\), so \(m = 0\) and then every
\(\rho(r) = 0\); hence the dimension is the number of represented
classes, the class of \(0\) contributing \(m\) and each other class its
constant. On residues not represented in \(J\) the periodic extension of
\(\rho\) is arbitrary, but such residues never occur as \(t - e_0\) at
an observed cell, so the kernel vector does not depend on that choice.
Back-substituting,
\(h_\alpha(t) = -h_g(e) - H(t - e) = -m(t - e_0) - \rho(t - e_0)\) on
every cell, consistently across cohorts, which fixes \(h_\alpha\); the
incidence graph is connected under (P\('\)), so nothing else is free.

The earlier argument failed at the first fact. It read
Equation~\ref{eq-pair} as defining an increment \(G(\delta)\) that
depends on the spacing alone and is additive in \(\delta\); but
Equation~\ref{eq-pair} constrains \(h_g\) only along the pairs that are
actually observed together, and additivity across spacings needs the
intermediate cohorts to exist. In the counterexample above the pairs
\((1,2)\), \((2,4)\) and \((1,4)\) realize the three spacings, but the
relation \(h_g(4) - h_g(1) = [h_g(4) - h_g(2)] + [h_g(2) - h_g(1)]\) is
an identity, not a constraint, and nothing ties \(h_g(4) - h_g(2)\) to
\(2[h_g(2) - h_g(1)]\).

\emph{(c)} is Theorem~\ref{thm-linear}(a) with the kernel of (b): the
functional is constant on \(\Theta_\tau\) iff it annihilates
\(j(\cdot)\) and each class indicator other than that of \(0\) (the
class of \(0\) carries no free constant, since \(\rho(0) = 0\)). For
\(d = 1\) this is the single condition \(\sum_k \lambda_k j(k) = 0\);
because \(\tau(1) = 0\), the weight on \(k = 1\) is immaterial and can
be chosen to make the weights sum to zero, after which the functionals
annihilating \(\{1, j\}\) on the finite set \(J\) are exactly the linear
combinations of second-order divided differences of \(T\). For \(d > 1\)
the class conditions are additional and the space is as described in the
statement.

\emph{(d)} is (b) read as a membership question: \(c(k-1)\) lies in the
identified set of the null path iff \(c(k-1) = m\,j(k) + \rho(j(k))\)
for some \((m,\rho)\), and for \(c \ne 0\) dividing by \(c\) gives the
displayed condition; for \(d = 1\), \(k - 1 = m\,j(k)\) for all \(k\)
holds iff \(j(k)\) is a constant multiple of \(k-1\), i.e.~iff the
offsets are equally spaced. \(\square\)
\end{proof}

\begin{corollary}[What a 4--8--4 rotation
identifies]\protect\hypertarget{cor-cps}{}\label{cor-cps}

Take the CPS pattern \(J = \{0,1,2,3,12,13,14,15\}\) with monthly entry
(\(d = 1\)), an entry window of at least ten consecutive cohorts, and a
field period covering every scheduled interview, so that (P\('\)) holds
(\(\Gamma = 9\)). Then \(\Theta_\tau\) is one-dimensional,
\(\{\tau + m\,j(\cdot)\}\), and:

\begin{enumerate}
\def\labelenumi{(\roman{enumi})}
\item
  no level \(\tau(k)\), \(k \ge 2\), and no increment
  \(\tau(k+1) - \tau(k)\) is identified;
\item
  every difference of \emph{per-month} rates is identified: for any two
  interview intervals, \[
  \frac{\tau(k+1) - \tau(k)}{j(k+1) - j(k)} \;-\; \frac{\tau(l+1) - \tau(l)}{j(l+1) - j(l)}
  \] is a point-identified functional. In particular \[
  \mathcal{D} \;:=\; \bigl[\tau(5) - \tau(4)\bigr] \;-\; 9\,\bigl[\tau(2) - \tau(1)\bigr]
  \] is identified. It is identified \emph{from the cohort-by-period
  array of cell means}, in which the cohort effects \(g\) are estimated
  jointly with \(\alpha\) and \(\tau\); it is \textbf{not} in general
  the same weights applied to month-in-sample averages taken across
  periods, because those averages retain the cohort effects. If
  \(\tau \equiv 0\) and \(g(e) = a(e^2 - e_0^2)\), the contrast of
  month-in-sample means at any period equals \(126a\) while
  \(\mathcal{D} = 0\), and the discrepancy is constant across periods,
  so averaging does not remove it. Only the affine part of \(g\) cancels
  under the weights of \(\mathcal{D}\);
\item
  \(\mathcal{D} = 0\) for every path that accumulates at a constant rate
  per calendar month, and \(\mathcal{D} = -8c\) for a path that
  accumulates at a constant rate \(c\) per interview. Under (P\('\)),
  and given estimates from the joint array, the 4--8--4 design can
  therefore distinguish these two \emph{constant-rate} hypotheses, which
  no equally spaced design can; it does not distinguish an unrestricted
  function of interviews from an unrestricted function of time, since on
  a common schedule the two are reparametrizations of each other
  (Theorem~\ref{thm-dose}(d)).
\end{enumerate}

\end{corollary}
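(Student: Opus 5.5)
The plan is to read the whole corollary off Theorem~\ref{thm-dose} with \(d = 1\), after checking that its hypothesis (P\('\)) holds. With monthly entry the stride is \(d = 1\), and \(1\) is a spacing of \(J\) (for example \(1 - 0\)), so that part of (P\('\)) holds. Since \(d = 1\), all of \(J\) is one residue class. Its consecutive gaps are \(1,1,1,9,1,1,1\), so \(\Gamma(J,1) = 9\), and an entry window of at least ten consecutive cohorts gives \(L \ge 9 = \Gamma\). The field period covers every scheduled interview by assumption. Theorem~\ref{thm-dose}(b) then gives \(\Theta_\tau = \{\tau + m\,j(\cdot)\}\), which has dimension one because there is a single represented class. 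Theorem~\ref{thm-dose}(c) reduces identification of \(\lambda'\tau\) to the single condition \(\sum_k \lambda_k\, j(k) = 0\).

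For (i), I will check that condition on each functional. The level \(\tau(k)\) has \(\sum_k \lambda_k j(k) = j(k) > 0\) for \(k \ge 2\). The increment \(\tau(k+1) - \tau(k)\) has \(\sum_k \lambda_k j(k) = j(k+1) - j(k) > 0\). Neither is identified. For (ii), the per-month rate difference gives \(\sum_k \lambda_k j(k) = 1 - 1 = 0\), so it is identified. For \(\mathcal{D}\), the interval \(4 \to 5\) has calendar length \(12 - 3 = 9\) and the interval \(1 \to 2\) has length \(1\), so \(\mathcal{D}\) is \(9\) times such a rate difference. Equivalently, \(9 - 9\cdot 1 = 0\).

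For the warning about month-in-sample averages, I will take \(\tau \equiv 0\). At period \(t\), the month-in-sample-\(k\) group is cohort \(t - j(k)\). The same weights applied to the month-in-sample means then give
\[
[g(t-12) - g(t-3)] - 9\,[g(t-1) - g(t)] ,
\]
because the \(\alpha(t)\) terms cancel (the weights sum to zero). With \(g(e) = a(e^2 - e_0^2)\) this expands to
\[
a\bigl[-9(2t-15) - 9(1-2t)\bigr] = 126a ,
\]
which is free of \(t\), so averaging over periods leaves it intact. Meanwhile \(\mathcal{D} = 0\), since \(\tau \equiv 0\). The same expansion shows that an affine \(g(e) = b(e - e_0)\) contributes \(-b\sum_k \lambda_k j(k) = 0\), which proves the closing sentence of (ii).

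For (iii), a path with a constant rate per month is \(\tau(k) = c\,j(k)\). This lies in the direction \(j(\cdot)\), which every identified functional annihilates, and a direct computation gives \(\mathcal{D} = 9c - 9c = 0\). A path with a constant rate per interview is \(\tau(k) = c(k-1)\), which gives \(\mathcal{D} = c - 9c = -8c\). Because \(\mathcal{D}\) is identified and takes different values on the two paths when \(c \ne 0\), the design separates the two hypotheses. For equally spaced schedules, Theorem~\ref{thm-dose}(d) says \(c(k-1)\) lies in the identified set of the null path (and, with an arbitrary equal spacing, in the span of \(j\)), so no such design can separate them. The final caveat, that an unrestricted function of interviews is not distinguished from an unrestricted function of time, is the reparametrization remark in Theorem~\ref{thm-dose}(d).

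The only step needing care is the month-in-sample computation: it requires correct bookkeeping of which cohort occupies which month-in-sample at period \(t\), and a check that the result is constant in \(t\). Everything else is direct substitution into Theorem~\ref{thm-dose}.
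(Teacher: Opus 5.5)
Your proposal is correct and follows the paper's proof. Both read (i) and (ii) off the single condition $\sum_k \lambda_k j(k) = 0$ of Theorem~\ref{thm-dose}(c), compute the month-in-sample contrast $[g(t-12)-g(t-3)] - 9[g(t-1)-g(t)] = 126a$ for the quadratic cohort effect, and obtain (iii) by direct substitution. Your explicit checks that (P$'$) holds ($\Gamma = 9 \le L$) and that an affine $g$ cancels fill in details the paper leaves implicit.
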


\begin{proof}
\leavevmode

\begin{enumerate}
\def\labelenumi{(\roman{enumi})}
\tightlist
\item
  and (ii) are Theorem~\ref{thm-dose}(c): the weights of a level or of a
  single increment do not annihilate \(j(\cdot)\), while those of a
  difference of per-month rates do, since each rate contributes
  \(\{j(k+1) - j(k)\}/\{j(k+1) - j(k)\} = 1\). For \(\mathcal{D}\),
  \(\sum_k \lambda_k j(k) = (12 - 3) - 9(1 - 0) = 0\). For the second
  claim of (ii), the month-in-sample-\(k\) mean at period \(t\) is
  \(\alpha(t) + g(t - j_k) + \tau(k)\); applying the weights of
  \(\mathcal{D}\) gives
  \(\mathcal{D} + [g(t-12) - g(t-3)] - 9[g(t-1) - g(t)]\), and with
  \(g(e) = a(e^2 - e_0^2)\) the bracketed cohort term is
  \(a(-18t + 135) - a(-18t + 9) = 126a\), the constant \(ae_0^2\)
  cancelling because the weights sum to zero. (iii) Substitute
  \(\tau(k) = a + b\,j(k)\), which gives \(b(12-3) - 9b(1-0) = 0\); and
  \(\tau(k) = c(k-1)\), which gives \(c(4-3) - 9c(1-0) = -8c\).
  \(\square\)
\end{enumerate}

\end{proof}

Figure~\ref{fig-cps} plots the two constant-rate paths of part (iii).

\begin{figure}

\centering{

\pandocbounded{\includegraphics[keepaspectratio]{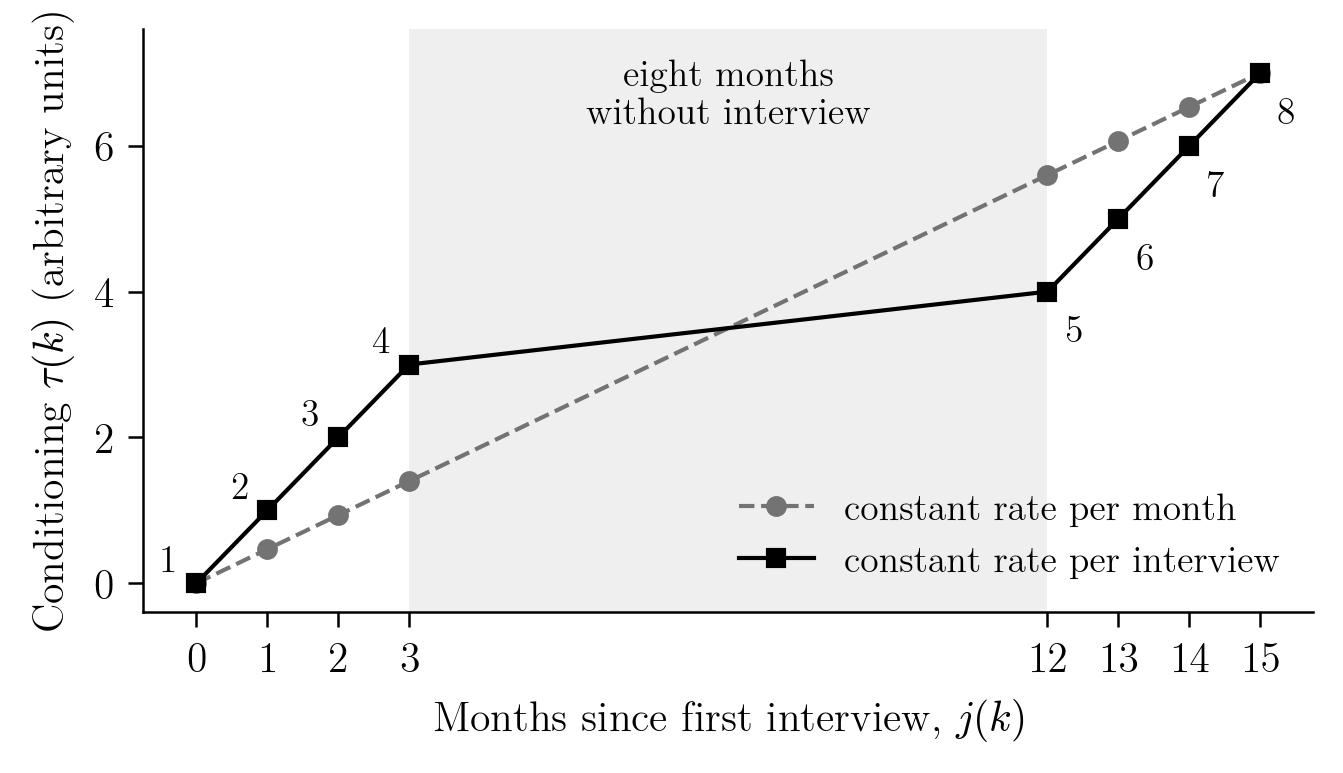}}

}

\caption{\label{fig-cps}The two constant-rate paths of
Corollary~\ref{cor-cps}(iii) under the CPS 4--8--4 pattern, plotted
against the months since first interview \(j(k)\); labels give the
interview number \(k\), and the shaded band spans the eight months
without an interview between the fourth and fifth interviews. A path
that accumulates at a constant rate per month (dashed) is linear in
\(j(k)\), so it lies in the direction \(m\,j(\cdot)\) that no common
interview schedule identifies (Theorem~\ref{thm-dose}(a)) and cannot be
told apart from no conditioning. A path that accumulates at a constant
rate \(c\) per interview (solid) rises by the same amount at every
interview, including the one after the rest; its \(\mathcal{D} = -8c\)
is nonzero, so under (P\('\)), with the cohort effects estimated from
the joint cohort-by-period array, it can be. Both paths are scaled to
the same value at the eighth interview; units are arbitrary.}

\end{figure}%

Three comments. First, everything downstream is unchanged:
Theorem~\ref{thm-absorb} is stated for an arbitrary element of
\(\ker X_{\mathcal{S}}\), so two-way fixed effects annihilate the
unidentified component of a dose design exactly as they do that of a
tenure design, and the corrected estimator of
Theorem~\ref{thm-correction} is the same regression with \emph{dose}
indicators in place of tenure indicators. What changes is which
component is unidentified---\(m\,j(k)\) rather than \(m(k-1)\)---and
therefore which functionals the correction can use. Second, the
mechanism is the mirror image of the one in the Antecedents paragraph
above: unequal spacing between \emph{entry cohorts} creates periodic
directions that no schedule can see, while unequal spacing between
\emph{a unit's own interviews} can remove from the kernel a
constant-rate path that every equally spaced schedule must live
with---it does so whenever \(d = 1\), and for \(d > 1\) exactly when
\(k - 1\) is not of the form \(m\,j(k) + \rho(j(k))\), which an
unequally spaced pattern can still satisfy (with \(J = \{0,1,3,4\}\) and
\(d = 3\), \(k - 1 = \tfrac{2}{3} j(k) + \rho(j(k))\) with
\(\rho \equiv \tfrac13\) on the class of \(1\)). Identification in this
model is entirely a question about which functions of elapsed calendar
time can be written as a function of the period plus a function of the
cohort. Third, (P\('\)) is a real condition and not a formality: with
the CPS pattern and between two and nine entry cohorts the sample cannot
bridge the nine-month gap between offsets \(3\) and \(12\), the fourth
and fifth interviews, and the identified set is two-dimensional rather
than one (with a single cohort nothing separates the doses and it is
seven-dimensional); and with gaps in the cohort set the identified set
can be larger still, as the counterexample above shows. Both are
verified by exact rank computation in the replication package, which
also reproduces every row of the table below.

\needspace{0.65\textheight}
\begin{longtable}[]{@{}
  >{\raggedright\arraybackslash}p{(\linewidth - 6\tabcolsep) * \real{0.3704}}
  >{\centering\arraybackslash}p{(\linewidth - 6\tabcolsep) * \real{0.0617}}
  >{\raggedright\arraybackslash}p{(\linewidth - 6\tabcolsep) * \real{0.3704}}
  >{\centering\arraybackslash}p{(\linewidth - 6\tabcolsep) * \real{0.1975}}@{}}
\caption{Examples of what a panel's schedule decides before any data are
collected; the complete criterion is Corollary~\ref{cor-curvature} for
uninterrupted designs and Theorem~\ref{thm-dose}(c) for interrupted ones
under (P\('\)), and the projected kernel of Theorem~\ref{thm-linear}(a)
in every case. The dimension is that of the identified set of the
conditioning path. The identified functionals form a space of dimension
(number of free conditioning coordinates) minus \(\dim \Theta_\tau\),
which for the JLPS design is \(18 - 4 = 14\), of which the eleven lag-4
second differences are a proper subset; the third column lists examples,
not the whole space. The last column asks whether a path accumulating at
a constant rate per interview can be distinguished from no conditioning
at all. Every row is verified by exact rank computation in the
replication package.}\label{tbl-designs}\tabularnewline
\toprule\noalign{}
\begin{minipage}[b]{\linewidth}\raggedright
Schedule: entry stride; interview pattern
\end{minipage} & \begin{minipage}[b]{\linewidth}\centering
\(\dim \Theta_\tau\)
\end{minipage} & \begin{minipage}[b]{\linewidth}\raggedright
Examples of identified functionals of \(\tau\)
\end{minipage} & \begin{minipage}[b]{\linewidth}\centering
Constant-rate dose path separable from none?
\end{minipage} \\
\midrule\noalign{}
\endfirsthead
\toprule\noalign{}
\begin{minipage}[b]{\linewidth}\raggedright
Schedule: entry stride; interview pattern
\end{minipage} & \begin{minipage}[b]{\linewidth}\centering
\(\dim \Theta_\tau\)
\end{minipage} & \begin{minipage}[b]{\linewidth}\raggedright
Examples of identified functionals of \(\tau\)
\end{minipage} & \begin{minipage}[b]{\linewidth}\centering
Constant-rate dose path separable from none?
\end{minipage} \\
\midrule\noalign{}
\endhead
\bottomrule\noalign{}
\endlastfoot
adjacent entry; uninterrupted; long follow-up & 1 & all second
differences \(\Delta^2\tau\) & no \\
entry every \(d\) waves; uninterrupted; long follow-up & \(d\) &
lag-\(d\) second differences \(\Delta_d^2\tau\) & no \\
JLPS: entries at waves 1, 5, 13; uninterrupted; \(T = 19\) & 4 &
\(\Delta_4^2\tau\), and 3 further contrasts & no \\
entries at 1, 4, 8; uninterrupted; \(T = 8\) (no follow-up) & 3 & only
within increment-graph components & no \\
single refreshment at wave \(e_2\); uninterrupted & \(e_2 - 1\) &
\(\Delta_{e_2-1}^2\tau\) & no \\
monthly entry, \(\ge 10\) cohorts; CPS 4--8--4 pattern & 1 & differences
of per-month rates (Corollary~\ref{cor-cps}) & \textbf{yes} \\
monthly entry, 2--9 cohorts; CPS 4--8--4 pattern & 2 & a strict subset
of the above & no \\
cohorts 1, 2, 4; offsets 0, 1, 3; \(T = 7\) & 2 & none: no nonzero
functional of the two free coordinates & no \\
adjacent entry; 2 on, 2 off, 2 on & 1 & differences of per-month rates &
\textbf{yes} \\
adjacent entry; every other period (1 on, 1 off) & 1 & second
differences in the dose & no \\
entry every 2 periods; CPS 4--8--4 pattern & 2 & the above, plus a
parity restriction & \textbf{yes} \\
\end{longtable}

Table 1 summarizes the design implications. The spacing of entry
cohorts, the timing of each unit's interviews, and the length of
follow-up jointly determine which functionals of conditioning a panel
can ever identify: the first sets the stride and with it the minimum
dimension of what cannot be known, the second decides whether a
constant-rate dose path is separable from the null, and the third
decides whether C\(_d\) holds, so that the dimension is exactly the
stride (Lemma~\ref{lem-trapezoid}(i)). None of these decisions is
usually taken on identification grounds; their identification
implications are cheap to compute before fielding from the scheduled
support. A scheduled interruption can change which restrictions are
testable, subject to the support conditions of (P\('\)): a rest period
that a rotation design has for cost reasons is what lets a constant
per-interview increment be told apart from a constant per-month
one---provided the cohort effects are estimated jointly rather than
averaged away.

\section{Attrition and Refreshment Samples: What
Changes}\label{sec-equivalence}

Theorem~\ref{thm-linear} assumes M3, no attrition. Relaxing it changes
the problem in a way that this paper does not solve but must locate.
With nonignorable attrition, the stayers' cell means are contaminated by
selection on the latent outcome, and a refreshment sample---a fresh
cross-section drawn at a later wave---identifies the population marginal
that the retained sample no longer represents. The companion paper
(\citeproc{ref-okubo2026refresh}{Okubo 2026}, Theorem 1) characterizes
exactly what such a sample identifies about a conditioning map when
attrition is unrestricted: the retention-scaled distribution of the
stayers' implied latent outcomes must be dominated, set by set, by the
refreshment distribution, and every map satisfying this constraint is
rationalized by an explicitly constructed admissible selection process.
Three consequences of that characterization matter here. First, for
observables generated by outcome-independent selection, the identified
set for a location shift under \emph{unrestricted} candidate selection
mechanisms depends on the tails of the refreshment distribution, and
collapses to a point whenever the refreshment density has tails thinner
than exponential (\citeproc{ref-okubo2026refresh}{Okubo 2026}, Corollary
3); if independence is instead maintained for every candidate mechanism,
the shift is identified by distributional alignment without any tail
condition. Outcome independence of the data-generating selection alone
does not imply point identification in the unrestricted class: with a
Laplace refreshment density \(f(y) = e^{-|y|}/2\), a true shift of zero
and constant retention \(p = 1/2\), the identified set is the interval
\([-\log 2, \log 2]\) (\citeproc{ref-okubo2026refresh}{Okubo 2026},
Corollary 3(ii)), every shift \(\varepsilon\) in it being rationalized
by the outcome-dependent candidate mechanism
\(\pi_\varepsilon(y) = \tfrac12 f(y+\varepsilon)/f(y)\), which lies in
\([0,1]\) exactly when \(|\varepsilon| \le \log 2\) and preserves the
reported distribution. What rules out confounding between selection and
conditioning is therefore a tail condition on the refreshment
distribution or a restriction on the candidate class, not the
independence of the actual selection by itself. Second, under
outcome-dependent attrition, the identified \emph{set} for a location
shift is governed by the distributional domination condition over the
entire outcome support, including the tails; it is an interval when the
refreshment distribution has a unimodal, upper semicontinuous density
(\citeproc{ref-okubo2026refresh}{Okubo 2026}, Theorem 3), and it need
not be connected otherwise---with a refreshment distribution placing
mass \(.5\) on each of two points, retention \(.4\), and stayers all
reporting the lower point, the feasible latent locations are the two
points and nothing between them. Third, within a single cohort and with
unrestricted cross-wave selection, the panel's own longitudinal
observations do not narrow that set
(\citeproc{ref-okubo2026refresh}{Okubo 2026}, Theorem 2); separation is
restored only by additional information---negative-control items
measured once at entry, or restrictions linking selection across waves.
This paper does not use any Gaussian construction of selection
processes; the companion paper's completion argument is the treatment we
rely on.

\section{Recovery: Identifying Restrictions}\label{sec-recovery}

Each result below adds a restriction that removes part or all of
\(\mathcal{K}_\tau\), and each is stated together with its support
conditions and with what is, and is not, testable about it. Throughout
this section C\(_d\) is assumed, so that \(\mathcal{K}_\tau\) is the
affine-plus-periodic family of Theorem~\ref{thm-linear}(c); on a support
violating C\(_d\) the same restrictions are checked against the computed
\(\mathcal{K}_\tau\). Two of the three results need more than the tenure
projection. Proposition~\ref{prp-nc} and Proposition~\ref{prp-bounds}
combine the cell means with information about the cohort effects \(g\),
and information about \(g\) constrains \(\tau\) only through the
\emph{full} kernel: they therefore assume in addition

\begin{itemize}
\tightlist
\item
  \textbf{(CG) Connected incidence graph.} The cohort--period incidence
  graph of \(\mathcal{S}\) is connected---every staggered trapezoid
  satisfies it.
\end{itemize}

Under C\(_d\) and (CG), \(\mathcal{K} = \mathcal{K}_0\) exactly: every
kernel vector is of the form displayed in Theorem~\ref{thm-linear}(b),
with \(h_g(e) = m(e - e_0)\) (this is the end of the proof of
Theorem~\ref{thm-linear}(c), with the component constant \(\kappa\)
forced to zero everywhere). Without (CG) the conclusions of
Proposition~\ref{prp-nc} and Proposition~\ref{prp-bounds} can fail even
when C\(_d\) holds: on the six-cell support of
Section~\ref{sec-impossibility} (cohorts \(1, 4, 6\), each observed at
entry and once more) the kernel vector \(h_\tau(2) = a\),
\(h_\alpha(e+1) = -a\), \(h_g \equiv 0\) leaves \(g\) untouched, so
exact knowledge of \(g\)---or a drift bound of \(M = 0\)---leaves
\(\tau(2)\) free. (CG) is sufficient, not necessary, for either result.
The step in Proposition~\ref{prp-nc} from \(h_g = 0\) to \(m = 0\) needs
only that some component of the incidence graph contain two cohorts:
with cohorts \(1, 2, 5\) on the same two-cell schedule the graph has two
components, yet every kernel vector moves \(g\) and knowledge of \(g\)
pins \(\tau(2)\). The interval of Proposition~\ref{prp-bounds} is
guaranteed to be the identified set under (CG) and can fail to be
without it, but need not: on that same three-cohort support with every
cell mean zero and the zero representative, writing \(a = \tau(2)\) and
\(b = g(5)\), the first two cohorts force \(g(2) = a\) and the drift
constraints read \(|a| \le M\) and \(|(b-a)/3| \le M\), so every
\(a \in [-M, M]\) is feasible (take \(b = a\)) and the identified set
for \(\tau(2)\) is exactly the displayed interval \([-M, M]\). What (CG)
rules out is the mechanism of failure: without it a drift constraint
between cohorts in different components is absorbed by the free
component constant rather than imposed on \(m\), and the implied drift
between such cohorts can be moved by that constant without any
compensating movement of \(\hat\tau\), so the displayed interval depends
on the representative and can be strictly narrower than the identified
set, or empty---on the six-cell support it is \([-M, M]\) for the zero
representative while \(\tau(2)\) is unbounded for every \(M\), and on
the three-cohort support a representative with \(\hat g(5) = 2.7M\)
displays \([-M, 0.1M]\) while the identified set is still \([-M, M]\).
We state (CG) because it is the condition every staggered trapezoid
satisfies and the one under which both results hold as displayed, with
the displayed interval independent of the representative.
Proposition~\ref{prp-plateau} uses only \(\mathcal{K}_\tau\) and needs
no such condition. The replication package's
\texttt{check\_recovery\_support.py} verifies the six-cell example, the
three identified sets for \(\tau(2)\) just quoted (by exact
Fourier--Motzkin elimination over the kernel), and the identity
\(\dim\mathcal{K} = \dim\mathcal{K}_\tau + (c-1)\) of
Theorem~\ref{thm-linear}(d) by exact arithmetic on 9,183 supports.

\begin{proposition}[Plateau (saturation) as an identifying
restriction]\protect\hypertarget{prp-plateau}{}\label{prp-plateau}

Suppose \(\tau(s) = \bar{\tau}\) for all \(s \ge S^*\), and suppose the
observed tenure range contains at least \(d + 1\) consecutive tenures
\(s \ge S^*\). Then \(\Theta_\tau\) collapses to a point: the
restriction forces \(m = 0\) and \(\rho \equiv 0\) in
Theorem~\ref{thm-linear}(c), so \(\tau\) is point identified on the
observed tenure set. The restriction has testable \emph{observable
implications}: every identified functional \(\lambda'\tau\) of
Corollary~\ref{cor-curvature} that is supported on tenures \(s \ge S^*\)
\emph{and} has zero-sum weights, \(\sum_s \lambda_s = 0\), must
vanish---in particular \(\Delta_d^2\tau(s) = 0\) whenever \(s - d\),
\(s\), and \(s + d\) are observed and \(s - d \ge S^*\) (all three
points on the plateau; equivalently \(s \ge S^* + d\)) and, with three
or more cohorts, the constancy over time of the cohort gaps observed
within the plateau window. Identified functionals whose weights do not
sum to zero, such as \(2\tau(5) - \tau(9)\) on a stride-four support,
are not implied to vanish---on a plateau they equal
\(\bar\tau\sum_s\lambda_s\). The restriction itself is not testable: the
affine path \(\tau(s) = m(s-1)\), \(m \ne 0\), satisfies every
observable implication and does not saturate. A non-rejection therefore
supports the plateau only against nonlinear departures; the zero-slope
part of the restriction must be argued on substantive grounds or
supplied by an external anchor (Proposition~\ref{prp-nc}, or a
refreshment design of the kind treated in the companion paper). The
restriction is the measurement-domain form of the oldest identifying
device in the age--period--cohort literature: Mason et al.
(\citeproc{ref-masetal1973}{1973}) break the linear dependency by
assuming that two categories within one dimension have identical
effects---``under the assumption that any two ages, periods or cohorts
have identical effect parameters, differences of the form \ldots{} are
now estimable''---and a plateau is that assumption imposed on a block of
adjacent tenures rather than on a pair.

\end{proposition}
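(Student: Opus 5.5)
We work throughout under C\(_d\), so by Theorem~\ref{thm-linear}(c) every element of \(\Theta_\tau\) has the form \(\tau + m(s-1) + \rho(s)\) with \(\rho\) \(d\)-periodic and \(\rho(1)=0\). The proof then has three parts: show the restriction pins down the kernel direction, derive the observable implications, and show the restriction itself is not testable.

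\emph{Collapse.} Let \(\tau\) and \(\tau' = \tau + h_\tau\) both satisfy the plateau, possibly with different constants \(\bar\tau\) and \(\bar\tau'\). Then \(h_\tau(s) = m(s-1) + \rho(s)\) is constant on the \(d+1\) consecutive observed tenures \(s_0, \dots, s_0+d\), all with \(s_0 \ge S^*\). Comparing \(s_0\) with \(s_0+d\) and using \(\rho(s_0+d) = \rho(s_0)\) gives \(md = 0\), so \(m = 0\). Then \(\rho\) is constant on \(d\) consecutive integers, hence on every residue class. Since \(\rho(1) = 0\), this forces \(\rho \equiv 0\). Therefore \(h_\tau = 0\) on all observed tenures, where \(h_\tau\) is the formula \(m(s-1)+\rho(s)\), which by Theorem~\ref{thm-linear}(c) describes every kernel direction. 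So \(\tau\) is point identified.

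The one subtlety is that the argument compares two members of the identified set that each satisfy the plateau with their own \(\bar\tau\). This is exactly why we need constancy of \(h_\tau\) on the window, not vanishing. The restriction pins down the shape of the path, not a prior value of \(\bar\tau\).

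\emph{Observable implications.} Suppose \(\lambda'\tau\) is identified (Corollary~\ref{cor-curvature}), supported on \(s \ge S^*\), and has \(\sum_s \lambda_s = 0\). On a plateau its value is \(\bar\tau \sum_s \lambda_s = 0\). Because the functional is identified, this value is a function of the cell means, which gives a testable restriction.

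\begin{itemize}
\item \(\Delta_d^2\tau(s)\) is such a functional when \(s - d \ge S^*\): its weights \((1,-2,1)\) sum to zero, and it is identified by Corollary~\ref{cor-curvature}.
\item For the cohort-gap statement, consider \(\mu(e,t) - \mu(e',t) = g(e) - g(e') + \tau(t-e+1) - \tau(t-e'+1)\). When both tenures lie in the plateau window, this equals \(g(e) - g(e')\), so it must be constant in \(t\). With three or more cohorts this is a nontrivial constraint across cohort pairs. I will phrase it as a direct cell-mean restriction rather than routing it through \(\lambda\).
\item For the counterexample \(2\tau(5) - \tau(9)\) at \(d=4\): the residue-class weights sum to zero within the class of \(1\), and \(\sum_s \lambda_s (s-1) = 8 - 8 = 0\), so it is identified. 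Its weight sum is \(1\), so its value is \(\bar\tau\), not zero.
\end{itemize}

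\emph{Non-testability.} Take the true path \(\tau(s) = m(s-1)\) with \(m \ne 0\). It lies in the identified set of the path \(\tau \equiv 0\), which is a plateau with \(\bar\tau = 0\). So it generates exactly the same cell means, and therefore every observable implication holds, even though this path does not saturate. Hence non-rejection cannot rule out a linear trend; it only rules out nonlinear departures.

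I expect the main obstacle to be the cohort-gap claim. I will need to verify that the relevant cells exist in the support, and to state it as a direct cell-mean restriction rather than as an identified \(\lambda\)-functional.
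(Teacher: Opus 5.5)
Your proposal is correct and follows the paper's proof essentially step for step: constancy of \(m(s-1)+\rho(s)\) on the window forces \(md=0\) and then \(\rho \equiv \rho(1) = 0\), zero-sum identified functionals evaluate to \(\bar\tau\sum_s\lambda_s = 0\) on the plateau, and the affine path lies in the identified set of the null path (your observational-equivalence phrasing of that step and your direct cell-mean derivation of gap constancy are more explicit than the paper's one-line versions). One slip to fix: for \(2\tau(5)-\tau(9)\) the weights do not sum to zero within the class of \(1\)---they sum to \(1\), which is why the total weight is \(1\)---and the functional is identified because the criterion of Corollary~\ref{cor-curvature} imposes nothing on the class of tenure one, every other class carries zero weight, and \(\sum_s\lambda_s(s-1) = 8-8 = 0\).
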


\begin{proof}
On the plateau range \(h_\tau(s) = m(s-1) + \rho(s)\) must be constant.
The range contains some \(s\) and \(s + d\), and \(\rho(s+d) = \rho(s)\)
gives \(md = 0\), so \(m = 0\); then \(\rho\) is constant on \(d + 1\)
consecutive tenures, hence on every residue class, hence everywhere, and
it equals \(\rho(1) = 0\). For the observable implications, an
identified functional supported on the plateau takes the value
\(\bar\tau\sum_s\lambda_s\) there, which is zero iff the weights sum to
zero. For untestability, \(\tau(s) = m(s-1)\) lies in \(\Theta_\tau\) of
the null path and has zero identified curvature everywhere. \(\square\)
\end{proof}

\begin{proposition}[Entry-wave negative
controls]\protect\hypertarget{prp-nc}{}\label{prp-nc}

Let \(N\) be a set of items measuring time-invariant facts, each
reported exactly once per respondent at entry (\(s=1\)). By construction
\(\tau_N \equiv 0\), so entry contrasts across cohorts identify
\(g_N(e) - g_N(e')\) plus sampling and selection differences. Suppose
that (i) recall of the fact is stable across the calendar times at which
different cohorts enter, (ii) the instrument and its administration are
the same for every cohort, (iii) the entry contrasts are computed on the
full entry cohorts, before any attrition (M3), or after a correction for
selection of the kind the companion paper supplies, and (iv) cohort
effects transport to the target item \(j\) with a known or bounded
loading, \(g_j(e) = \Gamma_j\, g_N(e)\). Then, under C\(_d\) and (CG),
the negative controls identify (or bound) the cohort effects \(g_j\),
and knowledge of \(g_j\) removes the affine direction of
\(\mathcal{K}_\tau\): the coefficient \(m\) in
Theorem~\ref{thm-linear}(c) is pinned, because under (CG) every kernel
vector has \(h_g(e) = m(e - e_0)\), which must vanish. It does
\emph{not} remove the periodic directions, which leave \(g\) unchanged.
Consequently: if \(d = 1\), \(\tau_j\) is point identified (or bounded
by the range of \(\Gamma_j\)); if \(d > 1\), the negative controls
identify \(\tau_j(s)\) at the tenures \(s \equiv 1 \pmod d\) and the
identified functionals of Corollary~\ref{cor-curvature} combined with
such levels, while \(\tau_j(s)\) at other tenures remains identified
only up to the \(d - 1\) periodic directions, whatever the loading.
Point identification of the full path with \(d > 1\) requires, in
addition, a restriction that removes the periodic directions---the
plateau of Proposition~\ref{prp-plateau} spanning \(d + 1\) consecutive
tenures, or an external anchor of the periodic component. Conditions
(i)--(iv) are assumptions about the items and the sample, not
consequences of measuring the same respondents: an immutable fact does
not imply a stable recall, and a common battery does not imply a common
loading. We therefore recommend reporting the sensitivity of the
recovered path to \(\Gamma_j\) over a substantive range and, where no
range can be defended, treating the battery as a diagnostic of cohort
comparability rather than as an identifying instrument. Absent any
restriction linking \(g_j\) across cohorts, Theorem~\ref{thm-linear}
applies item by item.

\end{proposition}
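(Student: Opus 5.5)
The plan is to work entirely through the full kernel $\mathcal{K}$ of Theorem~\ref{thm-linear}, using (CG) to identify it with $\mathcal{K}_0$. First I will record the reduction stated before Proposition~\ref{prp-plateau}. Under C$_d$ and (CG), the end of the proof of Theorem~\ref{thm-linear}(c) gives, for every $h\in\mathcal{K}$,
\[
h_\tau(s)=m(s-1)+\rho(s),\qquad h_g(e)=m(e-e_0),\qquad h_\alpha(t)=-m(t-e_0)-\rho(t-e_0+1),
\]
because connectedness forces the component constant $\kappa$ to be zero everywhere.

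Next I will formalize what the negative controls add. Since $\tau_N\equiv 0$ and each item in $N$ is observed only at $s=1$, the entry cell means are
\[
\mu_N(e,e)=\alpha_N(e)+g_N(e).
\]
Conditions (i)--(iii) say that these entry contrasts reflect only cohort effects, with no recall drift, mode differences, or selection. Together with stability or a cross-cohort anchor, they pin down $g_N(e)-g_N(e_0)$. I will state this as the operative assumption, namely that $g_N$ is known up to the normalization, and not try to derive it, because (i)--(iii) are the assumptions that deliver exactly this. Condition (iv) then gives $g_j=\Gamma_j g_N$, which is known when $\Gamma_j$ is known and lies in a set indexed by the range of $\Gamma_j$ otherwise.

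The key step is to combine this external knowledge of $g_j$ with the kernel. Two parameter vectors for item $j$ that agree on the cell means and on $g_j$ differ by some $h\in\mathcal{K}$ with $h_g\equiv 0$. Since $K\ge 2$, there is a cohort $e\ne e_0$, and $h_g(e)=m(e-e_0)=0$ forces $m=0$. Then $h_\tau=\rho$ with $\rho$ $d$-periodic and $\rho(1)=0$. Conversely, every such $\rho$ with $m=0$ is a kernel vector with $h_g=0$, so the periodic directions survive whatever $g_j$ and $\Gamma_j$ are.

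The consequences then follow in turn.
\begin{itemize}
\item If $d=1$, then $\rho\equiv 0$, so $\tau_j$ is a point. When only the range of $\Gamma_j$ is known, taking the union over that range gives the bound.
\item If $d>1$, then at $s\equiv 1\pmod d$ we have $\rho(s)=\rho(1)=0$, so those levels are identified. The identified functionals are exactly those annihilating the residue-class indicators for $r\not\equiv 1$, which covers the combinations with identified curvature.
\item At any other residue, $\tau_j(s)$ moves with the free value $\rho(r)$.
\end{itemize}

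For the final clause about the plateau, I will invoke the proof of Proposition~\ref{prp-plateau}. A constant $\rho$ on $d+1$ consecutive tenures forces $\rho\equiv 0$, so the plateau removes the periodic directions that the negative controls leave.

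The main obstacle is making precise how (i)--(iii) turn entry contrasts into $g_N$. As written, $\mu_N(e,e)$ confounds $g_N(e)$ with $\alpha_N(e)$, so some period-invariance of the negative-control item is needed, and I would fold this into condition (i) as stated. The remaining claims, including the warnings that the conditions are substantive assumptions and the item-by-item fallback, are interpretive and need no proof beyond pointing out that without a link across cohorts $g_j$ is unrestricted and Theorem~\ref{thm-linear} applies to item $j$ as it stands.
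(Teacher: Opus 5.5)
Your proposal is correct and follows essentially the paper's own proof. You take $g_j$ as known (or bounded) from (i)--(iv) and use $\mathcal{K}=\mathcal{K}_0$ under C$_d$ and (CG), so that $h_g(e)=m(e-e_0)$ must vanish at some $e\neq e_0$, forcing $m=0$ and leaving only the $d$-periodic $\rho$ with $\rho(1)=0$, which vanishes at $s\equiv 1 \pmod d$ and is identically zero iff $d=1$. You also note that an entry contrast carries $\alpha_N(e)-\alpha_N(e')$, so condition (i) has to be read as period-invariance of the negative-control item; this makes explicit a step the paper's proof simply assumes.
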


\begin{proof}
Under (i)--(iv), \(g_j\) is known (or bounded) on \(\mathcal{E}\). Two
triples generating the same cell means differ by \(h \in \mathcal{K}\),
and under C\(_d\) and (CG), \(\mathcal{K} = \mathcal{K}_0\), so
\(h_g(e) = m(e - e_0)\); if both triples have the known \(g_j\), then
\(m(e - e_0) = 0\) at some \(e \ne e_0\), so \(m = 0\), and the
remaining freedom is \(h_\tau = \rho\), \(d\)-periodic with
\(\rho(1) = 0\), which vanishes at \(s \equiv 1 \pmod d\) and is
identically zero iff \(d = 1\). Without (CG) the first step fails, as
the six-cell example above shows. On the stride-four support of
Section~\ref{sec-jlps}, for instance,
\(\rho(s) = a\,\mathbf{1}\{s \equiv 2 \pmod 4\}\) with
\(h_\alpha(t) = -a\,\mathbf{1}\{t \equiv 2 \pmod 4\}\) and
\(h_g \equiv 0\) preserves every cell mean, every entry-wave
negative-control contrast, and the exact value of \(g\), while moving
\(\tau(2)\) by an arbitrary \(a\). \(\square\)
\end{proof}

\begin{proposition}[Partial identification under bounded cohort
drift]\protect\hypertarget{prp-bounds}{}\label{prp-bounds}

Suppose cohort effects drift at a bounded per-period rate: for all
consecutive entry cohorts \(e_k < e_{k+1}\), \[
\left| \frac{g(e_{k+1}) - g(e_k)}{e_{k+1} - e_k} \right| \;\le\; M .
\] Assume C\(_d\) and (CG). The periodic directions of
Theorem~\ref{thm-linear} do not move \(g\), so this restriction
constrains only the affine coefficient \(m\). Fix any representative
\((\hat\alpha, \hat g, \hat\tau) \in \Theta\) and let
\(\hat d_k = \{\hat g(e_{k+1}) - \hat g(e_k)\}/(e_{k+1}-e_k)\) be its
implied drifts. Under (CG) the true triple has
\(g = \hat g + m(e - e_0)\) for some \(m\) (every kernel vector is in
\(\mathcal{K}_0\)), so \(|\hat d_k + m| \le M\) for every \(k\), giving
\(m \in [\,-M - \min_k \hat d_k,\; M - \max_k \hat d_k\,]\).
Consequently, for every tenure \(s \equiv 1 \pmod d\)---the tenures at
which the periodic component vanishes---the bounds \[
\tau(s) - \tau(1) \;\in\; \left[\, \hat\tau(s) + m_{lo}(s-1),\; \hat\tau(s) + m_{hi}(s-1) \,\right]
\] are sharp, with endpoints linear in \(M\); the same holds for any
identified functional of Corollary~\ref{cor-curvature} combined with
such a level. At tenures \(s \not\equiv 1 \pmod d\) the periodic
component remains free and the identified set for \(\tau(s) - \tau(1)\)
is unbounded for every \(M\). The interval is nonempty iff
\(\max_k \hat d_k - \min_k \hat d_k \le 2M\), so with three or more
cohorts the bound is refutable; \(M = 0\) delivers point identification
at the covered tenures together with an overidentification test
(\(\hat d_1 = \hat d_2\)); \(M \to \infty\) recovers \(\Theta_\tau\).
Reporting the breakdown value \(M^*\) at which conclusions reverse is
the recommended default output. In a panel refreshed at stride \(d\) the
covered tenures are \(1, 1 + d, 1 + 2d, \dots\)---in the Japanese panel
of Section~\ref{sec-jlps}, tenures 5, 9, 13, and 17. These are
tenures---arguments of \(\tau\), common to every cohort, and equal to
the wave number only for the original cohort---not calendar waves: that
panel's refreshment samples entered at waves 5 and 13, and the bound
applies at the listed tenures because the periodic component vanishes
there.

\emph{Why first differences and not second.} The affine direction
\(g(e) + m(e - e_0)\) is linear in \(e\), so any restriction on second
differences of \(g\)---smoothness in the Rambachan--Roth sense---is
invariant along \(\Theta_\tau\) and yields no information about \(m\):
under \(|\Delta^2 g| \le M\) the identified set for \(\tau(s)-\tau(1)\)
remains the whole real line for every \(M\). This is the
measurement-domain image of the APC fact that it is only the
\emph{slopes} of an APC model that are unidentified, so that constraints
invariant to the linear component cannot recover it; the bounds must
instead restrict the linear component, whether by a direct bound on a
slope or, as Fosse and Winship (\citeproc{ref-fosse2019bounds}{2019b})
also show, indirectly through restrictions on the effect paths that
imply one.

\end{proposition}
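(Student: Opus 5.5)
The proof uses the exact form of the kernel under C\(_d\) and (CG), stated just before Proposition~\ref{prp-plateau}: \(\mathcal{K} = \mathcal{K}_0\). So every triple generating the observed cell means is \((\hat\alpha, \hat g, \hat\tau) + h\) with \(h\) as in Theorem~\ref{thm-linear}(b). Its components are \(h_g(e) = m(e-e_0)\) and \(h_\tau(s) = m(s-1) + \rho(s)\). Two facts follow directly.

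First, the cohort drifts of any element of \(\Theta\) are \(\hat d_k + m\), because \(h_g\) is linear in \(e\) and \(\rho\) leaves \(g\) unchanged. The drift restriction therefore reads \(|\hat d_k + m| \le M\) for every \(k\). Intersecting these intervals gives the stated range \([m_{lo}, m_{hi}] = [-M - \min_k \hat d_k,\ M - \max_k \hat d_k]\).

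Second, this range does not depend on which representative we fix. Changing the representative shifts every \(\hat d_k\) by a common \(m'\), which translates the interval and moves \(\hat\tau\) by exactly the offsetting amount.

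For the bounds on \(\tau\), take a tenure \(s \equiv 1 \pmod d\). There \(\rho(s) = \rho(1) = 0\), so \(\tau(s) - \tau(1) = \hat\tau(s) + m(s-1)\). This is affine in \(m\) with slope \(s-1 \ge 0\), so its image over \([m_{lo}, m_{hi}]\) is the displayed interval.

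Sharpness holds because every \(m\) in the interval, paired with any admissible \(\rho\), gives a triple that reproduces all cell means (it differs from the representative by an element of \(\mathcal{K}_0\)) and satisfies the drift bound. Its endpoints are linear in \(M\) because \(m_{lo}\) and \(m_{hi}\) are. For an identified functional \(\lambda'\tau\) of Corollary~\ref{cor-curvature} combined with such a level, add the fixed identified value to the same affine expression in \(m\).

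At tenures \(s \not\equiv 1 \pmod d\), take \(\rho = a\,\mathbf{1}\{s' \equiv s\}\). This \(\rho\) leaves \(g\) and every drift unchanged and shifts \(\tau(s)\) by an arbitrary \(a\). The identified set is therefore unbounded for every \(M\). Here I rely on the tenure-coverage condition, or at least on that residue class being represented, so that the perturbation is nonzero on observed tenures.

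The remaining claims follow from the interval. It is nonempty iff \(\max_k \hat d_k - \min_k \hat d_k \le 2M\). With three or more cohorts there are at least two drifts, so the restriction can fail and is refutable. At \(M = 0\) the interval is a single point when \(\hat d_1 = \hat d_2 = \cdots\) and empty otherwise, which is the overidentification test. As \(M \to \infty\) the interval becomes all of \(\mathbb{R}\), so \(m\) is unrestricted and \(\Theta_\tau\) is recovered.

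For the second-difference remark: the second differences of \(h_g\) vanish in \(e\). When cohort spacings are unequal I will read \(\Delta^2 g\) as divided second differences, which also vanish on linear functions. So any bound on them holds for all \(m\) or for none. With \(\rho = 0\), \(\tau(s) - \tau(1)\) then ranges over \(\hat\tau(s) + m(s-1)\) for all \(m \in \mathbb{R}\), which is the whole line for \(s > 1\).

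The main obstacle is the step \(\mathcal{K} = \mathcal{K}_0\), since everything else is interval arithmetic. I will justify it by citing the end of the proof of Theorem~\ref{thm-linear}(c). Under C\(_d\), \(h_g(e) = m(e-e_0) - \kappa\) on each component of the incidence graph. Under (CG) there is a single component, and \(h_g(e_0) = 0\) forces \(\kappa = 0\). I will also note why (CG) is essential, pointing to the six-cell example: without it the free component constants absorb the drift constraints.
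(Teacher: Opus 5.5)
Your proposal is correct and follows the paper's own argument. You obtain \(\mathcal{K} = \mathcal{K}_0\) under C\(_d\) and (CG) exactly as the paper does, from the end of the proof of Theorem~\ref{thm-linear}(c) with the component constant \(\kappa\) forced to zero; after that the drift restriction is interval arithmetic in the single coefficient \(m\), and the periodic directions, which leave \(g\) fixed, account for the unbounded tenures. Your reading of \(\Delta^2 g\) as divided second differences is a useful precision the paper leaves implicit: with unequally spaced cohorts such as \(1, 5, 13\), the index-based second difference of \(m(e - e_0)\) is \(4m\), not zero.
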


\section{Where the Unidentified Component Goes: Downstream
Propagation}\label{sec-downstream}

Theorem~\ref{thm-linear} says what a panel cannot reveal about
conditioning. This section asks how nonidentification of the
conditioning path affects fixed-effects, difference-in-differences, and
event-study estimators computed on the same panel, and finds that the
answer is exactly complementary: the component of \(\tau\) that cannot
be identified is the component that the workhorse panel estimators
discard, and the component they do not discard is identified.
Identification failure upstream and robustness downstream are the same
piece of linear algebra.

\subsection{Setting}\label{sec-downstream-setting}

Keep the tenure structure of Section~\ref{sec-setup}: individual \(i\)
enters at \(e_i\) and is observed at calendar times \(t\) with tenure
\(s_{it} = t - e_i + 1\). An analyst is interested in an outcome
\(Y^{\circ}_{it}\)---the response the individual would give at tenure
one, \(Y^{\circ}_{it} := Y_{it}(1)\), the unconditioned response---and
possibly in the effect of a regressor \(D_{it}\) on it. The analyst
observes the report \begin{equation}\phantomsection\label{eq-report}{
Y_{it} \;=\; Y^{\circ}_{it} \;+\; \tau(s_{it}) \;+\; \nu_{it},
}\end{equation} where \(\tau\) is the conditioning path of
Equation~\ref{eq-model} and \(\nu_{it}\) collects individual departures
from the mean path. We maintain:

\begin{itemize}
\tightlist
\item
  \textbf{D1 (mean-additive drift).}
  \(\mathbb{E}[\nu_{it} \mid \{D_{i't'}, s_{i't'}\}_{i',t'}] = 0\): the
  conditioning effect may be heterogeneous, but its mean given the
  design is the common path \(\tau(s)\). Heterogeneity that is itself
  correlated with \(D\) is excluded by D1; it is a separate, standard
  endogeneity problem and is not the subject here. Every statement below
  is a statement about a common additive tenure effect satisfying D1.
\item
  \textbf{D2 (uninterrupted participation).} Units are interviewed at
  every wave from entry until they leave, so that calendar tenure
  \(t - e_i + 1\) equals the number of interviews received. The baseline
  presentation assumes uninterrupted participation; the absorption
  argument of Theorem~\ref{thm-absorb} extends to the dose-indexed
  specification of Section~\ref{sec-interrupted}, with dose in place of
  tenure, as noted after Corollary~\ref{cor-cps}, while the results that
  use the tenure clock explicitly (Theorem~\ref{thm-pretrend},
  Proposition~\ref{prp-sensitivity}) are stated for the uninterrupted
  case only.
\end{itemize}

Let \(W\) be the \(n \times (N + T)\) matrix of unit and calendar-time
indicators for the observed rows (any unbalanced support),
\(M = I - W(W'W)^{-}W'\) the residual-maker that projects off unit and
time effects, and for any \(n\)-vector \(v\) write \(v(s)\) for the
vector with entries \(v(s_{it})\). Throughout, ``TWFE'' means the OLS
coefficient on \(D\) in a regression with unit and time effects,
\(\hat\beta = (D'MD)^{-1} D'MY\), on whatever support is observed. The
two-way \emph{within} transformation on a balanced panel is the special
case \(M D = D_{it} - \bar D_{i\cdot} - \bar D_{\cdot t} + \bar D\);
nothing below requires balance.

\subsection{The absorption theorem}\label{sec-absorption}

\begin{theorem}[Absorption of the unidentified component by two-way
fixed effects]\protect\hypertarget{thm-absorb}{}\label{thm-absorb}

Under D1, let \(\hat\beta\) be the TWFE coefficient computed on the
reports \(Y\) and \(\hat\beta^{\circ}\) the (infeasible) coefficient
computed on the unconditioned outcomes \(Y^{\circ}\), both on the same
support with the same regressor. Then, conditional on the design
\((D, s)\), \begin{equation}\phantomsection\label{eq-absorb}{
\mathbb{E}\big[\hat\beta - \hat\beta^{\circ}\big] \;=\; (D'MD)^{-1} D'M\,\tau(s) \;=\; \frac{\sum_{it} \ddot D_{it}\, \tau(s_{it})}{\sum_{it} \ddot D_{it}^2},
\qquad \ddot D := MD,
}\end{equation} and the following hold.

\begin{enumerate}
\def\labelenumi{(\roman{enumi})}
\item
  \emph{Absorption.} Let \(\mathcal{S}\) be the cohort--period support
  of the estimation sample and \(Z_\tau\) the matrix of tenure
  indicators \(\mathbf{1}\{s_{it} = s\}\). For every
  \(h = (h_\alpha, h_g, h_\tau) \in \ker X_{\mathcal{S}}\),
  \(M Z_\tau h_\tau = 0\). In particular \(M\{a + b\,s + \rho(s)\} = 0\)
  for every \(a, b \in \mathbb{R}\) and every \(d\)-periodic \(\rho\),
  \(d\) the stride of the entry cohorts in the sample. Hence the bias in
  Equation~\ref{eq-absorb} is unchanged when \(\tau\) is replaced by any
  other member of the identified set \(\Theta_\tau\) of
  Theorem~\ref{thm-linear}(a) computed on the same support: it is
  invariant along the \emph{entire} identified set, on any support,
  whether or not C\(_d\) holds.
\item
  \emph{Exactness for unidentified drift.} If
  \(\tau \in \mathcal{K}_\tau\)---the identified set of the null path on
  the sample's support; under C\(_d\), an affine function of \(s\) plus
  a \(d\)-periodic function---then
  \(\mathbb{E}[\hat\beta - \hat\beta^{\circ}] = 0\).
\item
  \emph{Only the identified part matters.} Writing
  \(\tau = \tau_{\parallel} + \tau_{\perp}\) for any decomposition into
  a member of \(\mathcal{K}_\tau\) and a remainder, the bias equals
  \((D'MD)^{-1}D'M\,\tau_{\perp}(s)\); in particular it is zero if and
  only if \(\ddot D \perp \tau_{\perp}(s)\).
\end{enumerate}

\end{theorem}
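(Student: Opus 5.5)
The plan is to start from linearity of OLS. Since $\hat\beta = (D'MD)^{-1}D'MY$ and $\hat\beta^{\circ} = (D'MD)^{-1}D'MY^{\circ}$ share the same design, Equation~\ref{eq-report} gives
\[
\hat\beta - \hat\beta^{\circ} = (D'MD)^{-1}D'M\{\tau(s) + \nu\}.
\]
Taking expectations conditional on $(D,s)$, $M$ and $D$ are fixed, and D1 kills the $\nu$ term. This yields the first expression in Equation~\ref{eq-absorb}. Writing $D'M\tau(s) = (MD)'\tau(s)$ and $D'MD = (MD)'(MD)$, using that $M$ is symmetric and idempotent (it is a residual-maker for a generalized inverse projection), gives the summation form with $\ddot D$.

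For (i), the key observation is that $M$ annihilates any vector lying in the column space of $W$, i.e. any vector of the form $a_i + b_t$ (a unit function plus a time function). Take $h\in\ker X_{\mathcal{S}}$. On every observed row $(i,t)$, the cell $(e_i,t)$ lies in $\mathcal{S}$, so $h_\tau(s_{it}) = -h_\alpha(t) - h_g(e_i)$. Here $h_\alpha(t)$ is a time function, and $h_g(e_i)$ is a unit function because cohort is fixed within a unit. Hence $Z_\tau h_\tau$ lies in the column space of $W$, and $MZ_\tau h_\tau = 0$. This is the heart of the argument, and it uses nothing about the support beyond every row belonging to a scheduled cell.

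The "in particular" clause I would verify directly rather than through the kernel, because the constant $a$ is not itself a normalized kernel element. Writing $s = t - e_i + 1$, the term $a + bs$ equals $(a + b + bt) - be_i$. For $\rho$ that is $d$-periodic, $\rho(t - e_i + 1) = \rho(t - e_0 + 1)$ because $d \mid e_i - e_0$, so this term is a pure time function. Each piece is therefore in the column space of $W$.

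Invariance along $\Theta_\tau$ follows because members of the identified set differ by some $h_\tau$ with $h\in\ker X_{\mathcal{S}}$ (Theorem~\ref{thm-linear}(a)). One point needs care: the tenure values entering $\tau(s)$ are only observed tenures, and the normalization $\tau(1)=0$ is common to all members, so $Z_\tau$ is applied consistently.

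Parts (ii) and (iii) are then immediate. If $\tau\in\mathcal{K}_\tau$, then $M\tau(s)=0$ by (i). For a decomposition $\tau = \tau_\parallel + \tau_\perp$, we have $M\tau(s) = M\tau_\perp(s)$, so the bias is $(D'MD)^{-1}\ddot D'\tau_\perp(s)$. It vanishes iff $\ddot D'\tau_\perp(s) = 0$, which, using $\ddot D'\tau_\perp = D'M\tau_\perp$, is the stated orthogonality.

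The only step I expect to need care is making explicit that unit fixed effects absorb the cohort function $h_g(e_i)$ on an arbitrary unbalanced support. That requires $W$ to contain unit indicators for every unit appearing in the sample, and it requires $(W'W)^-$ to give the orthogonal projection onto the column space of $W$ regardless of which generalized inverse is chosen. Both are standard, and I would state them as a short lemma before the main argument. I would also assume $D'MD>0$, which is implicit in the definition of $\hat\beta$.
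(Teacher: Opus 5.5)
Your proposal is correct and follows the paper's proof step for step. It uses Frisch--Waugh--Lovell plus D1 for the bias formula, and the cell equation $h_\tau(s_{it}) = -h_\alpha(t) - h_g(e_i)$, which places $Z_\tau h_\tau$ in $\mathrm{col}(W)$, for (i); (ii) and (iii) then follow by linearity. Your direct check of the affine-plus-periodic special case, which also covers the constant $a$ and a $\rho$ with $\rho(1)\neq 0$ lying outside the normalized kernel, matches the paper's own ``directly'' remark.
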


\begin{proof}
By the Frisch--Waugh--Lovell theorem, \(\hat\beta = (D'MD)^{-1}D'MY\)
and \(\hat\beta^{\circ} = (D'MD)^{-1}D'MY^{\circ}\) on the same support,
so by Equation~\ref{eq-report},
\(\hat\beta - \hat\beta^{\circ} = (D'MD)^{-1}D'M\{\tau(s) + \nu\}\), and
D1 removes the second term in expectation. Since \(M\) is symmetric and
idempotent,
\(D'M\tau(s) = (MD)'\tau(s) = \sum_{it}\ddot D_{it}\tau(s_{it})\). For
(i), let \(h \in \ker X_{\mathcal{S}}\). The cell equation
\(h_\alpha(t) + h_g(e) + h_\tau(t - e + 1) = 0\) holds for every fielded
cell, so for every observed row,
\((Z_\tau h_\tau)_{it} = h_\tau(s_{it}) = -h_\alpha(t) - h_g(e_i)\): a
function of calendar time plus a function of the unit (through its entry
cohort). Both lie in \(\mathrm{col}(W)\), so
\(Z_\tau h_\tau \in \mathrm{col}(W)\) and \(M Z_\tau h_\tau = 0\). The
affine and periodic directions are the elements of
\(\mathcal{K}_0 \subseteq \ker X_{\mathcal{S}}\) of
Theorem~\ref{thm-linear}(b), which gives the special case; directly,
\(s_{it} = t + (1 - e_i)\) and \(\rho(s_{it}) = \rho(t - e_0 + 1)\) when
all cohorts are congruent modulo \(d\). Replacing \(\tau\) by another
member of \(\Theta_\tau\) adds \(D'M Z_\tau h_\tau = 0\) to the
numerator. (ii) is the case \(\tau \in \mathcal{K}_\tau\). (iii) follows
from (i) by linearity:
\(D'M\tau(s) = D'M\tau_{\parallel}(s) + D'M\tau_{\perp}(s) = D'M\tau_{\perp}(s)\),
and the bias is zero iff \(\ddot D'\tau_{\perp}(s) = 0\). \(\square\)
\end{proof}

Three remarks. First, Theorem~\ref{thm-absorb} is an exact finite-sample
statement conditional on the design, not a large-\(N\) approximation,
and it holds on any unbalanced support: attrition, rotation, and
refreshment do not disturb the argument because they change the rows of
\(W\), not the fact that \(s_{it}\) is a sum of a unit term and a time
term. Second, the theorem does not say that TWFE is immune to
conditioning; it says that TWFE is immune to \emph{every unidentified
component} of the path, and that an identified component affects a
particular coefficient only when it projects onto the residualized
regressor \(\ddot D\)---so immunity to the identified part is
coefficient-specific, not general. The periodic directions, which are
not affine, do not bias fixed-effects estimates either: on any sample
whose cohorts share a stride they are functions of calendar time, and on
any sample at all every kernel direction's tenure component is a sum of
a time function and a unit function, which is the one-line content of
part (i). Third, the same argument covers any estimator that partials
out unit and time effects before using \(D\)---the two-way Mundlak, the
interactive-fixed-effects estimators with unit and time effects included
additively, and first differences with time effects, since
\(\Delta\tau(s_{it})\) is constant for affine \(\tau\) and absorbed by
the intercept.

\begin{corollary}[Protected and exposed
designs]\protect\hypertarget{cor-protect}{}\label{cor-protect}

~

\begin{enumerate}
\def\labelenumi{(\alph{enumi})}
\item
  \emph{Single-cohort panels.} If all units in the estimation sample
  share one entry cohort, then \(s_{it}\) is a function of \(t\) alone,
  \(M\tau(s) = 0\) for \textbf{every} path \(\tau\), and every TWFE,
  difference-in-differences, and first-difference coefficient computed
  within the cohort has zero measurement-induced bias relative to its
  unconditioned counterpart, under tenure drift of arbitrary shape. This
  is a statement about the incremental bias due to conditioning, not
  about the causal validity of the unconditioned estimand.
\item
  \emph{Tenure-orthogonal treatment.} More generally, the bias vanishes
  for every \(\tau\) if and only if the double-demeaned regressor sums
  to zero within every tenure level,
  \(\sum_{it:\,s_{it}=s}\ddot D_{it} = 0\) for all \(s\) in the support.
  When treatment timing is assigned independently of entry cohort given
  calendar time, \(\mathbb{E}[D_{it}]\) is a function of \(t\) alone, so
  \(\mathbb{E}[\ddot D] = M\,\mathbb{E}[D] = 0\) and the numerator of
  Equation~\ref{eq-absorb} has mean zero over the assignment
  distribution.
\item
  \emph{Level contrasts.} Estimands that compare levels across units of
  different tenure at a common period---pooled cross-section
  comparisons, cohort contrasts, panel-versus-fresh-sample benchmarks,
  and the ``experienced versus new respondent'' contrasts of the
  conditioning literature itself---identify the \emph{sum}
  \(g(e) - g(e') + \tau(s) - \tau(s')\), which is invariant along
  \(\Theta_\tau\); but its decomposition into a cohort difference and a
  conditioning difference is not identified, and the conditioning
  component \(\tau(s) - \tau(s')\) that such contrasts are meant to
  measure inherits the non-identification of Theorem~\ref{thm-linear}
  unless a normalization or recovery condition of
  Section~\ref{sec-recovery} is invoked.
\end{enumerate}

\end{corollary}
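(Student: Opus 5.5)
The plan is to derive all three parts from the bias formula in Equation~\ref{eq-absorb}, namely $\mathbb{E}[\hat\beta-\hat\beta^{\circ}] = (D'MD)^{-1}\ddot D'\tau(s)$, together with the fact used in the proof of Theorem~\ref{thm-absorb}(i): any function that is a time term plus a unit term lies in $\mathrm{col}(W)$ and is annihilated by $M$.

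For (a), every unit in the sample has the same $e_i=e$, so $s_{it}=t-e+1$ is a function of $t$ alone. Hence $\tau(s_{it})$ is a function of calendar time for every path $\tau$. Calendar-time indicators are columns of $W$, so $M\tau(s)=0$ and the numerator $D'M\tau(s)$ vanishes. The same observation covers difference-in-differences and first differences with time effects, since each partials out time effects and $\tau(s_{it})$ (or its first difference) is purely a time term. I will add the remark that this concerns the incremental, measurement-induced bias only.

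For (b), write $\tau(s)=Z_\tau\tau$, so the numerator is $\ddot D'Z_\tau\tau=\sum_s\tau(s)\sum_{it:s_{it}=s}\ddot D_{it}$. This linear form in the vector $(\tau(s))_s$ vanishes for every $\tau$ if and only if each coefficient $\sum_{s_{it}=s}\ddot D_{it}$ is zero. Because of the normalization $\tau(1)=0$, one should check whether $s=1$ must be included. Two facts settle it. First, $\ddot D\perp\mathbf 1$, since the constant lies in $\mathrm{col}(W)$, so the coefficients sum to zero over all $s$. Second, the remaining coefficients already vanish. So the $s=1$ condition is implied, and the characterization can be stated over all $s$ in the support. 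For the assignment claim, $M$ is fixed given the support. If $\mathbb{E}[D_{it}]=f(t)$, then $\mathbb{E}[D]\in\mathrm{col}(W)$, so $\mathbb{E}[\ddot D]=M\mathbb{E}[D]=0$, and the numerator's expectation $\mathbb{E}[\ddot D]'\tau(s)$ is zero. This expectation is taken over assignment with the tenure structure held fixed, and is stated only for the numerator, as the claim is. The subtle point I expect is that the ratio's expectation need not vanish, which is why the statement speaks only of the numerator.

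For (c), take a common period $t$ and two units with cohorts $e,e'$. Under M4 the difference of cell means is $g(e)-g(e')+\tau(s)-\tau(s')$, with $s=t-e+1$ and $s'=t-e'+1$, because $\alpha(t)$ cancels. This is a difference of observed cell means, so it is identified, and therefore invariant along the identified set: every kernel vector satisfies $h_g(e)+h_\tau(s)=-h_\alpha(t)=h_g(e')+h_\tau(s')$.

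That its $\tau$-component is not identified follows from Theorem~\ref{thm-linear}(b). The functional $\tau(s)-\tau(s')$ has $\sum\lambda_s(s-1)=s-s'=e'-e\neq0$, so it is never identified. The affine direction therefore moves $\tau(s)-\tau(s')$ by $m(e'-e)$ and moves $g(e)-g(e')$ by the opposite amount, leaving the sum fixed. Pinning down the split requires one of the restrictions of Section~\ref{sec-recovery}, and for $d>1$ the periodic component of that section may still matter.

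The main obstacle is only the bookkeeping in (b) about the $s=1$ normalization and conditioning on the design. The remaining steps are direct applications of Theorem~\ref{thm-absorb} and Theorem~\ref{thm-linear}.
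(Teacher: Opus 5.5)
Your proposal is correct and follows the paper's proof in all three parts: indicator paths for the characterization in (b), $\mathbb{E}[\ddot D] = M\,\mathbb{E}[D] = 0$ for the assignment claim, and the cancellation of $\alpha(t)$ together with the opposite movements of the two components along the affine direction in (c). It is slightly more careful than the paper in two places: your invariance check in (c) covers every vector of $\ker X_{\mathcal{S}}$, whereas the paper computes only along $\mathcal{K}_0$, and your use of $\ddot D \perp \mathbf{1}$ settles the $\tau(1)=0$ normalization, which the paper's choice $\tau = \mathbf{1}\{s = s_0\}$ leaves implicit when $s_0 = 1$. The one inaccuracy is your closing aside in (c): since $s - s' = e' - e$ is a multiple of $d$, every $d$-periodic $\rho$ has $\rho(s) = \rho(s')$, so the periodic directions never move a common-period contrast, and pinning $g(e) - g(e')$ already identifies $\tau(s) - \tau(s')$ from the identified sum.
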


\begin{proof}
\leavevmode

\begin{enumerate}
\def\labelenumi{(\alph{enumi})}
\tightlist
\item
  With a common \(e\), \(\tau(s_{it}) = \tau(t - e + 1)\) depends on
  \((i,t)\) only through \(t\), so \(\tau(s) \in \mathrm{col}(W)\) and
  \(M\tau(s) = 0\). (b) The numerator of Equation~\ref{eq-absorb} is
  \(\sum_{it}\ddot D_{it}\tau(s_{it}) = \sum_s \tau(s)\sum_{it:\,s_{it}=s}\ddot D_{it}\),
  which is zero for every \(\tau\) if the stated condition holds, and
  which is nonzero for \(\tau = \mathbf{1}\{s = s_0\}\) whenever the
  condition fails at \(s_0\). For the second claim, \(\ddot D = MD\) is
  linear in \(D\), so \(\mathbb{E}[\ddot D] = M\,\mathbb{E}[D]\), and a
  function of \(t\) alone lies in \(\mathrm{col}(W)\). (c) is
  Equation~\ref{eq-model} directly: a level contrast at a common \(t\)
  between cohorts \(e, e'\) with tenures \(s = t - e + 1\),
  \(s' = t - e' + 1\) equals \(g(e) - g(e') + \tau(s) - \tau(s')\).
  Along \(\mathcal{K}_0\) the conditioning component moves by
  \(m(s - s') + \rho(s) - \rho(s')\) and the cohort component by
  \(m(e - e')\); since \(s - s' = e' - e\) and
  \(\rho(s) = \rho(s') = \rho(t - e_0 + 1)\), the two movements cancel
  and the sum is unchanged, while each component moves. \(\square\)
\end{enumerate}

\end{proof}

Part (a) is the reassurance the applied literature has been implicitly
relying on: a panel that recruited everyone at once cannot have its
within-unit estimates contaminated by a common additive tenure effect
satisfying D1, whatever its shape. Part (c) is the warning: the moment a
design mixes tenures within a period---every rotating panel, every panel
with refreshment samples, every pooled analysis of an old and a new
cohort---the observed level contrast is identified but its conditioning
component is not, so its reading as a conditioning effect is
normalization-dependent; the TWFE protection of
Theorem~\ref{thm-absorb}, by contrast, extends to the whole unidentified
component, and what it does not cover is the identified curvature.

\subsection{The bias is identified even though the level is
not}\label{sec-correction}

Theorem~\ref{thm-linear} and Theorem~\ref{thm-absorb} combine into a
constructive statement that we regard as the practical center of the
paper.

\begin{theorem}[Identified correction of two-way fixed-effects
estimates]\protect\hypertarget{thm-correction}{}\label{thm-correction}

Let \(\lambda := MD\,(D'MD)^{-1}\) be the implicit weights of the TWFE
estimator, so that the bias in Equation~\ref{eq-absorb} is the linear
functional
\(b(\tau) = \sum_{it}\lambda_{it}\,\tau(s_{it}) = \sum_s \Lambda_s\,\tau(s)\)
with \(\Lambda_s = \sum_{it:\, s_{it}=s}\lambda_{it}\). Then
\(\Lambda \perp \mathcal{K}_\tau\) for the kernel of the estimation
sample's own cell design---under C\(_d\), the pair of conditions of
Corollary~\ref{cor-curvature}, \(\sum_s \Lambda_s (s-1) = 0\) and
\(\sum_{s \equiv r}\Lambda_s = 0\) for every residue class \(r\) modulo
the stride of the sample. Consequently \(b(\tau)\) is point identified
whenever \(\tau\) is identified up to \(\Theta_\tau\) on the estimation
sample (M1--M4 on that sample, or an external estimate of the shape of
\(\tau\) on a design whose kernel is no larger). A feasible corrected
estimator is the coefficient on \(D\) in the joint regression
\begin{equation}\phantomsection\label{eq-joint}{
Y \;=\; W\gamma \;+\; \beta D \;+\; Z_\tau\theta \;+\; \text{error},
}\end{equation} where \(Z_\tau\) is the matrix of tenure indicators
\(\mathbf{1}\{s_{it} = s\}\) and the columns of \(Z_\tau\) that are
collinear with \(W\)---exactly the directions of \(\Theta_\tau\)---are
dropped. Suppose that, on the true outcome,
\(\mathbb{E}[Y^{\circ} \mid D, W, s] = W\gamma + \beta D\), and that

\begin{itemize}
\tightlist
\item
  \textbf{(R) Residual rank.} After partialling out \(W\), the regressor
  \(D\) is not collinear with the retained tenure indicators:
  \([\,MD,\; MZ_\tau\,]\) has full column rank.
\end{itemize}

Then, under D1--D2, the joint OLS coefficient \(\hat\beta^{c}\)
satisfies \(\mathbb{E}[\hat\beta^{c}] = \beta\) exactly, conditional on
the design. Equivalently, writing \(\hat\theta^{c}\) for the tenure
coefficients fitted \emph{jointly} with \(\hat\beta^{c}\) in
Equation~\ref{eq-joint}, \[
\hat\beta^{c} \;=\; \hat\beta \;-\; (D'MD)^{-1}D'M Z_\tau\,\hat\theta^{c},
\] and the right-hand side is the same for every representative
\(\hat\theta^{c} + h_\tau\), \(h \in \ker X_{\mathcal{S}}\), of the
joint fit, provided the retained tenure columns span
\(\mathrm{col}(Z_\tau)\) modulo \(\mathrm{col}(W)\) (the display alone
does not check that they do); the correction subtracts the joint-OLS
estimate of \(b(\tau)\), which does not depend on which representative
the software reports. It is an identity for the joint fit, not for an
arbitrary preliminary estimate of the path substituted into the display.
No normalization of the conditioning level is required. Condition (R)
fails, and \(\beta\) is not identified, when treatment status is a
deterministic function of tenure on the sample (for instance
\(D_{it} = \mathbf{1}\{s_{it} \ge s^*\}\)), or more generally when
\(MD\) lies in the span of the retained tenure indicators. In that case
the coefficient on \(D\) is not determined by the data: fits with any
value of \(\beta\) reproduce the same fitted values. Which column a
regression routine then reports as aliased depends on the order of the
columns, and a routine that keeps \(D\) and drops a tenure indicator
reports a coefficient for \(\beta\) that depends on an arbitrary
normalization rather than on an identified feature of the data; (R) must
therefore be checked directly, by computing the rank of
\([MD, MZ_\tau]\) (or of \([W_0, D, Z_\tau]\) against
\([W_0, Z_\tau]\)), before the coefficient is interpreted.

\end{theorem}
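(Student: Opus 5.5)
The proof runs in three steps: first the orthogonality of the TWFE weights to the kernel, then unbiasedness of the joint regression through Frisch--Waugh--Lovell, and finally the correction identity and the failure of (R).

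\emph{Orthogonality of the weights.} For any $h \in \ker X_{\mathcal{S}}$, the pooled functional is
\[
\sum_s \Lambda_s h_\tau(s) = \sum_{it}\lambda_{it} h_\tau(s_{it}) = (D'MD)^{-1} D'M Z_\tau h_\tau .
\]
Theorem~\ref{thm-absorb}(i) gives $M Z_\tau h_\tau = 0$, so $\Lambda \perp \mathcal{K}_\tau$. Under C$_d$, Theorem~\ref{thm-linear}(c) says $\mathcal{K}_\tau$ is spanned by $s-1$ and the residue indicators with $r \not\equiv 1$, which gives the displayed pair of conditions. The class $r\equiv 1$ also sums to zero: the constant function $\mathbf{1}$ lies in $\mathrm{col}(W)$, so $\sum_s \Lambda_s = 0$, and subtracting the other class sums leaves the remaining one. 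Point identification of $b(\tau)$ is then Theorem~\ref{thm-linear}(a) applied to the functional $\Lambda$.

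\emph{Unbiasedness of the joint regression.} I would apply Frisch--Waugh--Lovell to Equation~\ref{eq-joint}. Partialling $W$ out gives the regression of $MY$ on $[MD, M\tilde Z]$, where $\tilde Z$ denotes the retained tenure columns. Condition (R) makes the normal equations uniquely solvable. Next, $\mathbb{E}[Y\mid D,W,s] = W\gamma + \beta D + Z_\tau\tau$ by D1 and the mean assumption on $Y^\circ$, with D2 making $s$ the interview count. I then rewrite $Z_\tau\tau = \tilde Z\theta^\ast + W\gamma^\ast$. This rewriting requires that the dropped columns be combinations of the retained columns and $W$, which is how the dropping is defined: a column is dropped only if it is collinear. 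So the conditional mean lies exactly in the model space, and OLS is conditionally unbiased for the coefficient on $D$, giving $\mathbb{E}[\hat\beta^c]=\beta$.

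\emph{The correction identity.} I would take the partitioned normal equation for $D$,
\[
D'M(Y - D\hat\beta^c - Z_\tau\hat\theta^c) = 0 .
\]
Solving it gives $\hat\beta^c = (D'MD)^{-1}D'M(Y - Z_\tau\hat\theta^c) = \hat\beta - (D'MD)^{-1}D'MZ_\tau\hat\theta^c$. Invariance under $\hat\theta^c \mapsto \hat\theta^c + h_\tau$ again follows from $MZ_\tau h_\tau=0$. The spanning proviso is needed so that every such representative gives the same fitted values.

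\emph{Failure of (R).} If $MD = MZ_\tau c$, then replacing $(\beta,\theta)$ by $(\beta+\delta,\theta - \delta c)$ leaves $M(\cdot)$ unchanged. Hence the fitted values are the same for every $\beta$, and $\beta$ is not identified. The example $D = \mathbf{1}\{s\ge s^*\} = Z_\tau c$ is immediate.

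The main obstacle is the bookkeeping in the unbiasedness step. One has to check that, once the collinear tenure columns are removed, the true conditioning term still lies in $\mathrm{col}(W,\tilde Z)$, and that the kernel directions of Theorem~\ref{thm-absorb}(i) are exactly those dropped columns. I would verify this by noting that $\mathrm{col}(Z_\tau)\cap\mathrm{col}(W)$ corresponds to $Z_\tau\mathcal{K}_\tau$ via the cell equations. That correspondence is the content of the statement that the dropped columns are ``exactly the directions of $\Theta_\tau$''.
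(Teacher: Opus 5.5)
Your proposal is correct and follows essentially the paper's proof. You get orthogonality of $\Lambda$ from Theorem~\ref{thm-absorb}(i), and unbiasedness by placing the conditional mean in the span of $W$, $D$ and the retained tenure columns, with (R) giving a unique solution. You get the correction identity from the partitioned normal equation for $D$, with invariance again from $MZ_\tau h_\tau = 0$. The differences are minor: you derive the residue-class sums from the spanning set of Theorem~\ref{thm-linear}(c) plus $\mathbf{1}\in\mathrm{col}(W)$, where the paper notes that each residue indicator is a function of $t$. For the failure of (R) you give the explicit reparametrization $(\beta+\delta,\theta-\delta c)$ (with $\gamma$ adjusted to absorb $\delta(D - Z_\tau c)\in\mathrm{col}(W)$), where the paper points to a computed example.
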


\begin{proof}
For any \(h \in \ker X_{\mathcal{S}}\),
\(\sum_s \Lambda_s h_\tau(s) = \lambda' Z_\tau h_\tau = (D'MD)^{-1} D'M Z_\tau h_\tau = 0\)
by Theorem~\ref{thm-absorb}(i), so \(\Lambda \perp \mathcal{K}_\tau\)
and, by Theorem~\ref{thm-linear}(a), \(b\) is constant on
\(\Theta_\tau\); under C\(_d\) the same follows from \(\Lambda'1 = 0\),
\(\Lambda'(s-1) = 0\) (because \(1, s \in \mathrm{col}(W)\)) and
\(\sum_{s \equiv r}\Lambda_s = \lambda'\mathbf{1}\{s \equiv r\} = 0\)
(the residue-class indicator is a function of \(t\) on the sample). For
the feasible estimator, the reported outcome satisfies
\(Y = Y^{\circ} + Z_\tau\tau + \nu\) exactly, since
\(\tau(s_{it}) = (Z_\tau\tau)_{it}\); hence
\(Y = W\gamma + \beta D + Z_\tau\tau + \{Y^{\circ} - W\gamma - \beta D\} + \nu\),
and the bracketed term and \(\nu\) have conditional mean zero by the
outcome model and D1. Dropping the columns of \(Z_\tau\) collinear with
\(W\) changes \(\theta\) to another representative of \(\Theta_\tau\)
and leaves the fitted values, hence the coefficient on \(D\), unchanged.
Under (R) the remaining design \([W, D, Z_\tau]\) has full column rank
up to \(\mathrm{col}(W)\), so OLS is unbiased for \(\beta\). For the
equivalent form, the normal equation for \(D\) in the joint fit, after
partialling out \(W\), is
\(D'M(Y - D\hat\beta^{c} - Z_\tau\hat\theta^{c}) = 0\), which rearranges
to the display; adding \(h_\tau\) to \(\hat\theta^{c}\) changes the
subtracted term by \((D'MD)^{-1}D'MZ_\tau h_\tau = 0\) by
Theorem~\ref{thm-absorb}(i). If \(MD \in \mathrm{col}(MZ_\tau)\) the
coefficient on \(D\) is not determined;
\texttt{check\_recovery\_support.py} exhibits, for
\(D = \mathbf{1}\{s \ge 4\}\) on the staggered support of
Section~\ref{sec-jlps}, two exact fits of the same outcome with
\(\beta = 0\) and \(\beta = 1\). \(\square\)
\end{proof}

The result inverts the usual reading of an impossibility theorem. The
level of conditioning cannot be known; but the quantity the applied
analyst needs---the conditioning bias in a fixed-effects
coefficient---does not depend on the level, and is recoverable from the
panel's own staggered structure or from an external estimate of the
conditioning \emph{shape} (a refreshment design, or the shape library
assembled in companion work). Three cautions accompany it. The
correction removes the implied measurement component when the outcome
model and D1 hold; its uncertainty includes the estimation of that
component, and the joint regression's variance estimator must respect
the dependence in the data---we use standard errors clustered by
unit---whereas a two-step procedure that first calibrates \(\tau\) from
the same outcome's cell means and then subtracts it neither propagates
that uncertainty nor, if the cell-mean model omits \(D\), keeps the
genuine effect out of the estimated tenure component. Condition (R) is a
real restriction: designs in which exposure is scheduled by tenure
cannot separate the effect of exposure from conditioning at all. And the
correction is a \emph{within-sample} construction: it uses the
estimation sample's own staggered structure to estimate the identified
part of \(\tau\), and it is not available when that structure is absent,
as in the exactly aligned event studies of Section~\ref{sec-pretrend},
where an \emph{external} estimate of \(\tau\) must be supplied instead.
Simulation 4 (Appendix C) verifies the identity, reports the
finite-sample behavior of Equation~\ref{eq-joint} under five designs
including one that violates (R)---detected there by an explicit rank
computation, not by which coefficient the software declines to
report---and shows the practical caveat that with three entry cohorts
and stride four the identified curvature rests on few degrees of
freedom, so the correction is exact in expectation and imprecise in
practice; the plateau restriction of Proposition~\ref{prp-plateau} or an
external shape estimate are the routes to precision.

\subsection{Spurious pre-trends in tenure-aligned event
studies}\label{sec-pretrend}

The protection of Theorem~\ref{thm-absorb} has a precise boundary, and
event studies sit on it. Let \(E_i\) be the calendar time of an event
for unit \(i\) and \(k_{it} = t - E_i\) event time. The fully dynamic
event-study specification regresses the outcome on unit effects, time
effects, and indicators \(X_k = \mathbf{1}\{k_{it} = k\}\) for event
times \(k\) in a window \(\mathcal{K}\), omitting a set of reference
categories. It is well known (\citeproc{ref-borusyak2024}{Borusyak,
Jaravel, and Spiess 2024}; \citeproc{ref-sunabraham2021}{Sun and Abraham
2021}) that when every unit is eventually treated a complete set of
event-time indicators is collinear with unit and time effects in two
directions---a constant, and a linear trend in \(k\), since
\(k_{it} = t - E_i\) is the sum of a time term and a unit term---so that
two reference categories must be omitted (or an untreated group
included) for the coefficients to be defined. The second of those
directions is the same one as in Theorem~\ref{thm-linear}, and it is
exactly what makes tenure drift masquerade as dynamics.

\begin{theorem}[Mechanical pre-trends under exact tenure--event
alignment]\protect\hypertarget{thm-pretrend}{}\label{thm-pretrend}

Suppose the design aligns tenure with event time:
\(s_{it} = k_{it} + \delta\) for a constant \(\delta\) on the estimation
sample (equivalently \(E_i = e_i + \delta - 1\): every unit's event
falls a fixed number of waves after its entry, as in rotation panels,
event-recruited samples, and surveys fielded on entry schedules). Let
the analyst estimate the fully dynamic specification with unit and time
effects and indicators \(X\) for
\(k \in \mathcal{K}\setminus\{k_0, k_1\}\), \(k_0 \neq k_1\) two
reference categories, on an estimation sample whose rows all have
\(k_{it} \in \mathcal{K}\), and assume that the event-time indicators
are not collinear with the fixed effects: \(MX\) has full column rank,
where \(M\) projects off unit and time effects (equivalently,
\([W_0, X]\) has full column rank for any basis \(W_0\) of
\(\mathrm{col}(W)\); the full set of unit and time indicators is itself
rank-deficient by one and is not the object to check). This rank
condition should be checked directly; overlap of event times and
calendar periods alone does not guarantee it. For example, with event
dates \(E \in \{0,2,4\}\), periods \(t \in \{4,\dots,7\}\), and window
\(k \in \{2,\dots,5\}\), every period contains several event times and
every event time is observed in several periods, yet the eight columns
of \([W_0, X]\) have rank seven, because all event dates are even and
\(k \bmod 2\) is then a function of \(t\)---the stride phenomenon of
Theorem~\ref{thm-linear} again. Let \(\ell\) be the affine function of
\(k\) with \(\ell(k_0) = \tau(k_0 + \delta)\) and
\(\ell(k_1) = \tau(k_1 + \delta)\). Then, conditional on the design, the
coefficients on the reports and on the unconditioned outcomes satisfy,
for every \(k \in \mathcal{K}\setminus\{k_0,k_1\}\),
\begin{equation}\phantomsection\label{eq-pretrend}{
\mathbb{E}\big[\hat\beta_k - \hat\beta^{\circ}_k\big] \;=\; \tau(k + \delta) \;-\; \ell(k),
}\end{equation} exactly. In particular, if the unconditioned outcome is
such that \(\mathbb{E}[\hat\beta^{\circ}_k] = 0\) for \(k < 0\)---an
additional premise, satisfied for instance under no anticipation,
parallel trends, dynamic effects that are homogeneous across event
cohorts, \emph{and} reference categories at which the effect is zero
(both \(k_0\) and \(k_1\) before the event, say); it is not implied by
no anticipation and parallel trends alone when effects are heterogeneous
across cohorts (\citeproc{ref-sunabraham2021}{Sun and Abraham 2021,
secs. 3.4--3.5}), and not by homogeneity either when an omitted category
carries a nonzero effect, since the affine interpolant through a nonzero
post-event reference is in general nonzero at the pre-periods other than
the omitted one and so induces nonzero reported pre-event coefficients
(\citeproc{ref-sunabraham2021}{Sun and Abraham 2021, sec. 3.7}, eqs.
(24)--(25), p.~184); in the example in the proof below it is
\(-\tfrac12\) at \(k = -2\) and zero at the omitted \(k = -1\)---then
the \emph{expected} reported pre-period coefficients equal
\(\tau(k+\delta) - \ell(k)\): they are nonzero in expectation whenever
\(\tau\) is not affine on the pre-period tenure range, with no
anticipation, no selection, and no violation of parallel trends in the
outcome.

\end{theorem}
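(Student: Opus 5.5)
The plan follows the Frisch--Waugh--Lovell argument of Theorem~\ref{thm-absorb}, applied to the vector of event-time coefficients rather than a scalar. Write $\hat\beta = (X'MX)^{-1}X'MY$ and $\hat\beta^{\circ} = (X'MX)^{-1}X'MY^{\circ}$ on the same sample; the rank condition that $MX$ has full column rank makes both well defined. By Equation~\ref{eq-report} and D1, $\mathbb{E}[\hat\beta - \hat\beta^{\circ}] = (X'MX)^{-1}X'M\,\tau(s)$, conditional on the design. So the task is to compute the OLS coefficient of $\tau(s_{it})$ on $X$ after partialling out unit and time effects.

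Next I decompose the drift vector using alignment. On the sample $s_{it} = k_{it} + \delta$, so $\tau(s_{it}) = \phi(k_{it})$ with $\phi(k) := \tau(k+\delta)$ defined on $\mathcal{K}$. Write $\phi = \ell + r$, where $\ell$ is the affine interpolant through $k_0, k_1$ and $r := \phi - \ell$. By construction $r(k_0) = r(k_1) = 0$. The affine piece contributes nothing: $\ell(k_{it}) = a + b(t - E_i)$ is a constant plus a function of $t$ plus a function of the unit, so it lies in $\mathrm{col}(W)$ and $M\ell(k) = 0$. This is the event-time analogue of Theorem~\ref{thm-absorb}(i).

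The remainder is an exact combination of the included indicators. Every row has $k_{it} \in \mathcal{K}$, and $r$ vanishes at the two omitted categories, so
\[
r(k_{it}) = \sum_{k \in \mathcal{K}\setminus\{k_0,k_1\}} r(k)\,X_{k,it},
\]
that is, $r(k) = X\,\mathbf{r}$ with $\mathbf{r} := (r(k))_{k \ne k_0,k_1}$. Hence $(X'MX)^{-1}X'M\,\tau(s) = (X'MX)^{-1}X'MX\,\mathbf{r} = \mathbf{r}$. Componentwise this is $\tau(k+\delta) - \ell(k)$, which is Equation~\ref{eq-pretrend}. The pre-trend clause then follows by adding $\mathbb{E}[\hat\beta^{\circ}_k] = 0$ for $k<0$. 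The resulting expression is nonzero at some pre-period $k$ exactly when $\tau$ is not affine on the pre-period tenure range, since it vanishes at all such $k$ iff $\phi$ agrees there with the line through the two references.

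The main obstacle is bookkeeping rather than substance. First, the rank condition must be stated against a basis $W_0$, because $W$ itself is rank-deficient; I will use the generalized-inverse form of $M$, which makes $M$ well defined regardless of basis. Second, the representation $r(k) = X\mathbf{r}$ uses the hypothesis that no row falls outside $\mathcal{K}$. If rows outside the window were present, they would carry a nonzero $r$ with no indicator to absorb it, and the identity would fail. I will flag that explicitly. For the illustrative number in the statement, I will verify the value $-\tfrac12$ at $k=-2$ directly. The example has homogeneous effect $1$ at $k\ge0$, references $k_0 = -1$ and $k_1 = 0$, and true effect $0$ at $k=-2$. The affine interpolant through $(-1,0)$ and $(0,1)$ gives the reported coefficient $0 - \ell(-2) = 0-(-1) = 1$. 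I will recheck that sign and the stated $-\tfrac12$ against the specific example in the paper's proof before asserting it.
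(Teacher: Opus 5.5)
Your argument is the paper's proof. It uses Frisch--Waugh--Lovell and the split $\tau(k+\delta)=\ell(k)+r(k)$ with $r(k_0)=r(k_1)=0$. Then $M\ell(k)=0$ because $\ell(k_{it})$ is a time term plus a unit term, and $r(k_{it})=X\mathbf{r}$ because every row lies in $\mathcal{K}$, so the expected coefficient shift is exactly $\mathbf{r}$.

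The only correction concerns the example you planned to recheck. The paper's example uses event dates $3,\dots,7$, window $k\in[-2,2]$, $\beta^{\circ}(k)=\mathbf{1}\{k\ge 0\}$, and references $k_0=-1$, $k_1=+1$ (not $0$). The interpolant is then $\ell^{\circ}(k)=(k+1)/2$. The stated $-\tfrac12$ is the value $\ell^{\circ}(-2)$, not a coefficient, and the reported coefficient is $0-(-\tfrac12)=+\tfrac12$, so there is no sign issue.

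Your write-up should also justify the parenthetical sufficiency claim, as the paper does. Apply the same step to $\mathbb{E}[Y^{\circ}_{it}]=a_i+b_t+\beta^{\circ}(k_{it})$, which gives $\mathbb{E}[\hat\beta^{\circ}_k]=\beta^{\circ}(k)-\ell^{\circ}(k)$.
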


\begin{proof}
By Equation~\ref{eq-report} the reported outcome is
\(Y = Y^{\circ} + v + \nu\) with
\(v_{it} = \tau(s_{it}) = \tau(k_{it} + \delta)\). Decompose
\(\tau(k + \delta) = \ell(k) + r(k)\) with \(r(k_0) = r(k_1) = 0\). The
affine part satisfies \(\ell(k_{it}) = \ell_0 + \ell_1(t - E_i)\), the
sum of a function of \(t\) and a function of \(i\), hence
\(\ell(k) \in \mathrm{col}(W)\) and \(M\ell(k) = 0\). The remainder
satisfies
\(r(k_{it}) = \sum_{k \in \mathcal{K}\setminus\{k_0,k_1\}} r(k)\,X_{k,it}\)
because \(r\) vanishes at the omitted categories, so \(Mv = MX\gamma\)
with \(\gamma = (r(k))_k\). By Frisch--Waugh--Lovell the event-time
coefficients are \((X'MX)^{-1}X'MY\), and since \(MX\) has full column
rank, \((X'MX)^{-1}X'Mv = \gamma\) exactly; D1 removes \(\nu\) in
expectation. For the parenthetical premise, the same argument applied to
a homogeneous effect path \(\beta^{\circ}(k)\) with
\(\mathbb{E}[Y^{\circ}_{it}] = a_i + b_t + \beta^{\circ}(k_{it})\) gives
\(\mathbb{E}[\hat\beta^{\circ}_k] = \beta^{\circ}(k) - \ell^{\circ}(k)\)
with \(\ell^{\circ}\) the affine interpolant of \(\beta^{\circ}\) at the
references; this vanishes at every pre-period \(k\) when
\(\beta^{\circ}\) is zero before the event and at both references, and
it can fail when a reference carries a nonzero effect. With event dates
\(3, \dots, 7\), window \(k \in [-2, 2]\),
\(\beta^{\circ}(k) = \mathbf{1}\{k \ge 0\}\) and references
\(k_0 = -1\), \(k_1 = +1\), the design has full rank \(16\) and
\(\mathbb{E}[\hat\beta^{\circ}_{-2}] = 0 - (-2+1)/2 = +\tfrac12\)
(\texttt{check\_recovery\_support.py}). \(\square\)
\end{proof}

The theorem has two consequences that distinguish this artifact from the
failures the pre-trends literature is built to catch. Outcome-side
diagnostics may detect the resulting discrepancy without identifying its
measurement origin: the unconditioned outcome satisfies parallel trends
by construction, so a pretest (\citeproc{ref-roth2022}{Roth 2022}) may
detect the discrepancy without identifying whether it arises from
measurement drift, and it need not detect it at all, since power depends
on the size of \(\tau(k+\delta) - \ell(k)\) relative to sampling error.
Sensitivity analyses that place restrictions on violations of parallel
trends (\citeproc{ref-rambachanroth2023}{Rambachan and Roth 2023})
remain applicable if the measurement-induced deviation belongs to the
restriction set chosen; what Equation~\ref{eq-pretrend} adds is the
\emph{shape} of the deviation, \(\tau(k + \delta) - \ell(k)\), which is
neither linear nor smooth in general and which changes with the choice
of reference categories (the term \(\ell\) moves with \(k_0, k_1\)), so
that a restriction set chosen with outcome-side violations in mind may
or may not contain it. Detecting a pre-trend is not identifying its
cause; removing a measurement drift does not remove other confounding.
The artifact is in principle correctable from measurement-side
information, because Equation~\ref{eq-pretrend} is a deterministic
transform of the conditioning path; but under exact alignment the
correction cannot be made within the sample. Each event-time indicator
is then a tenure indicator, \(X_k = Z_{k + \delta}\), so the joint
regression of Theorem~\ref{thm-correction} with tenure indicators added
cannot separate event dynamics from conditioning: the residual-rank
condition (R) fails by construction, and the sample's own cell means
identify only the curvature of the \emph{sum} of the event path and
\(\tau\). What is required is an \emph{external} estimate of \(\tau\) on
the relevant tenure range---from a sample in which event time and tenure
are not aligned, from a refreshment design of the kind treated in the
companion paper, or from a shape library---whose identified part,
\(\tau(k+\delta) - \ell(k)\) for the analyst's reference categories, is
subtracted from the estimated coefficients. Simulation 4 reproduces
Equation~\ref{eq-pretrend} to machine precision (Appendix C) and
exhibits the rank-deficient design above; when \(MX\) is rank-deficient
the decomposition of the drift between time effects and event-time
coefficients is not unique and Equation~\ref{eq-pretrend} does not apply
as stated.

\begin{proposition}[Partial
alignment]\protect\hypertarget{prp-partial}{}\label{prp-partial}

If \(s_{it} = k_{it} + \delta_i\) with \(\delta_i\) varying across
units, the drift is no longer a function of event time alone and the
shift in the event-study coefficients is the linear projection \[
\mathbb{E}\big[\hat\beta - \hat\beta^{\circ}\big] \;=\; (X'M X)^{-1} X'M\,\tau(s) \;=\; (X'MX)^{-1} X'M Z_\tau\,\tau ,
\] which reduces to Equation~\ref{eq-pretrend} when
\(\delta_i \equiv \delta\). For a given path the shift is zero if and
only if \(X'MZ_\tau\tau = 0\), and it is zero for every path if and only
if \(X'MZ_\tau = 0\): the residualized event-time indicators must be
orthogonal to every tenure indicator. Balance of the tenure composition
of each event-time category across calendar periods does \emph{not}
suffice: with eight event dates \(E = 6, \dots, 13\), two units per date
with \(\delta_i \in \{6, 7\}\), window \(k \in [-5, 6]\) and references
\(k_0 = -1\), \(k_1 = -2\), every event-time category has half its
observations at tenure \(k + 6\) and half at \(k + 7\) in every period,
\([W_0, X]\) has full column rank 44, and yet the path
\(\tau(s) = 0.3\{1 - e^{-0.6(s-1)}\}\) shifts the pre-period
coefficients at \(k = -5, -4, -3\) by \(-0.142\), \(-0.054\), and
\(-0.014\). When \(\delta_i\) varies, the joint regression of
Theorem~\ref{thm-correction} can be feasible, since \(X\) and \(Z_\tau\)
are no longer identical, and (R) is again the condition to check.

\end{proposition}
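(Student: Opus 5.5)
The proof would apply the Frisch--Waugh--Lovell argument of Theorem~\ref{thm-absorb} with the vector of event-time indicators $X$ in place of the scalar regressor $D$. The reports are $Y = Y^{\circ} + \tau(s) + \nu$, so $\hat\beta = (X'MX)^{-1}X'MY$ and $\hat\beta^{\circ} = (X'MX)^{-1}X'MY^{\circ}$ on the same support. This requires the rank condition of Theorem~\ref{thm-pretrend} ($MX$ of full column rank), which I carry over as a maintained premise. Differencing gives $\hat\beta - \hat\beta^{\circ} = (X'MX)^{-1}X'M\{\tau(s)+\nu\}$, and D1 removes $\nu$ in expectation conditional on the design. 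Since $\tau(s_{it}) = (Z_\tau\tau)_{it}$ row by row, the second expression in the display follows immediately.

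For the reduction to Equation~\ref{eq-pretrend} when $\delta_i\equiv\delta$, I would reuse the proof of Theorem~\ref{thm-pretrend}. Write $\tau(k+\delta) = \ell(k) + r(k)$. Then $M\ell(k)=0$ because $\ell(k_{it})$ is affine in $t - E_i$. Also $r(k_{it}) = X\gamma$ with $\gamma = (r(k))_k$, so $(X'MX)^{-1}X'MX\gamma = \gamma$.

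The \emph{iff} statements are linear algebra. Because $X'MX$ is invertible, the shift is zero exactly when $X'MZ_\tau\tau = 0$. If $X'MZ_\tau = 0$, this holds for every $\tau$. Conversely, take $\tau$ equal to each tenure unit vector $\mathbf{1}\{s=s_0\}$ in turn; each choice returns the column $X'MZ_\tau e_{s_0}$, so all columns must vanish. The normalization $\tau(1)=0$ does not interfere. If the unit vector at tenure one is excluded, its column is $-\sum_{s\ne 1}$ of the others plus $X'M\mathbf{1}$, and $X'M\mathbf{1}=0$ because the constant lies in $\mathrm{col}(W)$. So orthogonality to all retained tenure indicators already implies it for tenure one.

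The main obstacle is the counterexample showing that balance is insufficient. It is a finite explicit computation: build the 16-unit design, form $W_0$ and $X$, verify rank 44, and evaluate $(X'MX)^{-1}X'MZ_\tau\tau$ at the stated path to obtain $-0.142,-0.054,-0.014$. I would present it as an exact computation deferred to the replication script (as the paper does with \texttt{check\_recovery\_support.py}), and explain why balance fails. Balance makes the \emph{raw} tenure composition of each event-time category constant across periods. The estimator, however, uses $MZ_\tau$, which also partials out unit effects, and the units with $\delta_i=6$ and $\delta_i=7$ carry different tenure histories. Hence $X'MZ_\tau\ne0$ even though $X'Z_\tau$ is balanced.

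The final remark, that (R) can hold, needs only one observation. When $\delta_i$ varies, $X_k$ is no longer equal to $Z_{k+\delta}$, so the collinearity argument following Theorem~\ref{thm-pretrend} no longer forces (R) to fail, and Theorem~\ref{thm-correction} applies whenever the rank check passes.
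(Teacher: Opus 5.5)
Your proposal is correct and follows the paper's proof. It applies Frisch--Waugh--Lovell with $\tau(s)=Z_\tau\tau$, reads the zero conditions off the nonsingularity of $X'MX$ and the columns of $X'MZ_\tau$, verifies the counterexample by explicit computation, and explains it by the fact that $M$ also removes unit effects. Your explicit reduction to Equation~\ref{eq-pretrend} through the $\ell+r$ decomposition, and your check that the normalization $\tau(1)=0$ does not weaken the ``for every path'' criterion because $X'M\mathbf{1}=0$, fill in details the paper leaves implicit.
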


\begin{proof}
Frisch--Waugh--Lovell with \(Z = [W, X]\) gives the projection formula,
and \(\tau(s) = Z_\tau\tau\). The zero conditions read off the formula:
\((X'MX)^{-1}\) is nonsingular, so the shift vanishes iff
\(X'MZ_\tau\tau = 0\), and for every \(\tau\) iff every column of
\(X'MZ_\tau\) vanishes. The example is verified in Appendix C; its point
is that \(M\) residualizes on unit as well as time effects, so a tenure
composition that is balanced within event time across \emph{periods} is
not thereby balanced after unit effects are removed. \(\square\)
\end{proof}

\begin{proposition}[Measurement-side sensitivity bounds for event
studies]\protect\hypertarget{prp-sensitivity}{}\label{prp-sensitivity}

Assume the conditions of Theorem~\ref{thm-pretrend} (exact alignment,
\(s = k + \delta\)), with adjacent reference categories \(k_0\) and
\(k_1 = k_0 - 1\), and suppose an external source---a staggered panel
with stride one, a refreshment design, or substantive knowledge---bounds
the curvature of the conditioning path on the tenures the window visits:
\[
\bigl|\Delta^2\tau(s)\bigr| \;\le\; C \qquad \text{for all } s \in [\,\min\mathcal{K} + \delta + 1,\ \max\mathcal{K} + \delta - 1\,].
\] Then the measurement-induced shift of every event-study coefficient
obeys \begin{equation}\phantomsection\label{eq-sensitivity}{
\bigl|\,\mathbb{E}[\hat\beta_k - \hat\beta^{\circ}_k]\,\bigr| \;=\; \bigl|\tau(k+\delta) - \ell(k)\bigr| \;\le\; C\,\frac{m_k\,(m_k+1)}{2},
\qquad
m_k \;=\; \begin{cases} k - k_0, & k > k_0,\\ k_1 - k, & k < k_1,\end{cases}
}\end{equation} and the bound is attained when \(\Delta^2\tau\) is
constant at \(+C\), or constant at \(-C\), on the tenures between the
references and \(k + \delta\): the extrapolation weights on the one side
of adjacent references all have the same sign, so constant curvature
saturates the inequality and curvature that changes sign generally does
not. Two features make this operational. The shift is invariant to the
affine direction of \(\Theta_\tau\): it is unchanged when \(\tau\) is
replaced by \(\tau + a + b(s-1)\), because \(\ell\) is the affine
interpolant and absorbs the perturbation. Hence \emph{no normalization
of the level of conditioning is required} to apply
Equation~\ref{eq-sensitivity}---only a bound on identified curvature. If
the calibrating design has stride one, satisfies C\(_1\), and observes
the tenures \(k + \delta\), \(k_0 + \delta\), and \(k_1 + \delta\), so
that its tenure-projected kernel is the affine line
(Theorem~\ref{thm-linear}(c)) and the shift is a linear functional on
its observed tenures, the shift is itself an identified functional of
\(\tau\) on that design (its weights sum to zero and annihilate
\(s - 1\)) and Equation~\ref{eq-sensitivity} can be replaced by a point
correction; stride one alone does not suffice, as the design with
entries at \(1, 4, 8\) and no follow-up shows. If the calibrating design
has stride \(d > 1\), its periodic directions are not affine in \(k\),
the shift is identified only up to them, and the bound must be widened
by the corresponding periodic range.

\end{proposition}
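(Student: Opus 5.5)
The equality in Equation~\ref{eq-sensitivity} is Theorem~\ref{thm-pretrend} verbatim, so the work is the inequality, its attainment, and the two invariance remarks. I would set \(r(k) := \tau(k+\delta) - \ell(k)\). Since \(\ell\) is affine in \(k\), \(\Delta^2 r = \Delta^2\tau(\cdot+\delta)\) at every interior point of the window. With adjacent references \(k_1 = k_0 - 1\), we have \(r(k_0) = r(k_1) = 0\), and therefore \(\Delta r(k_1) = r(k_0) - r(k_1) = 0\). This reduces the problem to a discrete Taylor expansion of a function with zero value and zero first difference at the reference pair.

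For \(k > k_0\), write \(m = k - k_0\) and telescope twice. First, \(\Delta r(j) = \Delta r(k_1) + \sum_{i=k_0}^{j} \Delta^2 r(i)\), where \(\Delta^2\) is centered at \(i\) as in Corollary~\ref{cor-curvature}, so that \(\Delta r(j) - \Delta r(j-1) = \Delta^2 r(j)\). Second, \(r(k) = \sum_{j=k_0}^{k-1} \Delta r(j)\). Combining the two gives \(r(k) = \sum_{i=k_0}^{k-1} (k - i)\,\Delta^2 r(i)\). The weights \(k - i\) run over \(1, \dots, m\), all positive, and sum to \(m(m+1)/2\).

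The hypothesis on \(\Delta^2\tau\) covers exactly the centered points \(i + \delta\) for \(i\) strictly inside the window, and every \(i \in [k_0, k-1]\) with \(k \le \max\mathcal{K}\) qualifies. It then follows that \(|r(k)| \le C\,m(m+1)/2\). Equality holds when \(\Delta^2\tau \equiv \pm C\) on those tenures, because all weights share one sign; curvature of mixed sign gives strict inequality unless cancellations vanish. The case \(k < k_1\) is symmetric: reflect \(k \mapsto -k\), which swaps the roles of \(k_0\) and \(k_1\) and gives \(r(k) = \sum_{i=k+1}^{k_1} (i - k)\,\Delta^2 r(i)\) with \(m = k_1 - k\). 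The main obstacle is this index bookkeeping, namely checking that the centered second differences used stay inside \([\min\mathcal{K}+\delta+1, \max\mathcal{K}+\delta-1]\) and that the weight count is exactly \(1, \dots, m\). I would verify this on \(k = k_0 + 1\), where \(r(k) = \Delta^2 r(k_0)\) with weight \(1\).

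For the invariance claim, adding \(a + b(s-1)\) to \(\tau\) adds an affine function of \(k\) to \(\tau(k+\delta)\). The affine interpolant \(\ell\) through the two references changes by exactly that function, so \(r\) is unchanged. For the point-correction remark, I would note that \(r(k)\) is a linear functional of \(\tau\) at the three tenures \(k+\delta\), \(k_0+\delta\), \(k_1+\delta\). Its weights sum to zero and annihilate \(s - 1\), since \(r\) vanishes on affine \(\tau\). Under stride one and C\(_1\), Theorem~\ref{thm-linear}(c) then makes \(r(k)\) identified. The \(1,4,8\) example after Theorem~\ref{thm-linear} shows that C\(_1\) cannot be dropped. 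For stride \(d > 1\), Theorem~\ref{thm-linear}(c) leaves the \(d\)-periodic directions \(\rho\), which are not affine in \(k\) and so shift \(r(k)\) by \(\rho(k+\delta) - \ell_\rho(k)\), where \(\ell_\rho\) is the affine interpolant of \(\rho(\cdot+\delta)\) at \(k_0\) and \(k_1\). That shift must be added to the bound, which is the widening asserted in the statement.
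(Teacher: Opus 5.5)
Your proposal is correct and follows the paper's proof. Both write the shift as \(r(k)=\tau(k+\delta)-\ell(k)\), use the fact that adjacent references make the first difference of \(r\) vanish between \(k_1\) and \(k_0\), and telescope twice to obtain positive ramp weights \(k-i\) (resp.\ \(i-k\)) summing to \(m_k(m_k+1)/2\), with affine invariance coming from \(\ell\) reproducing affine functions; and since every weight is strictly positive, curvature taking both signs always gives strict inequality, so your hedge about cancellations is unnecessary.
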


\begin{proof}
Write \(f(x) := \tau(x + \delta)\) and \(L[f]\) for the affine function
through \((k_0, f(k_0))\) and \((k_1, f(k_1))\), so that the shift is
\(f(k) - L[f](k)\) by Equation~\ref{eq-pretrend}. The map
\(f \mapsto f(k) - L[f](k)\) annihilates affine functions, hence factors
through second differences. With adjacent nodes \(k_1 = k_0 - 1\), write
\(D(x) := f(x+1) - f(x)\) for the first difference, so that the centered
second difference is \(\Delta^2 f(x) = D(x) - D(x-1)\). Telescoping the
first differences gives, for \(k > k_0\), \[
f(k) - L[f](k) \;=\; \sum_{x = k_0}^{k-1}\bigl\{D(x) - D(k_1)\bigr\} \;=\; \sum_{x=k_0}^{k-1}\ \sum_{y=k_0}^{x} \Delta^2 f(y),
\] so the weight on \(\Delta^2 f(y)\) is the number of
\(x \in [k_0, k-1]\) with \(x \ge y\), namely \(k - y\) for
\(y = k_0, \dots, k-1\); these are \(m_k, m_k - 1, \dots, 1\), all
positive, and sum to \(m_k(m_k+1)/2\). The case \(k < k_1\) is
symmetric, with weights \(1, \dots, m_k\) on \(\Delta^2 f(y)\) for
\(y = k+1, \dots, k_1\). Since the weights on either side share a sign,
\(|f(k) - L[f](k)| \le C \sum(\text{weights})\), with equality when
\(\Delta^2 f \equiv +C\) or \(\equiv -C\) on the relevant range.
Invariance under \(f \mapsto f + a + b x\) holds because
\(L[a + bx] = a + bx\). \(\square\)
\end{proof}

The bound quantifies the largest pre-event shift compatible with the
maintained curvature restriction. In the configuration of Simulation
4---references at \(k_0 = -1\), \(k_1 = -2\), window
\(k \in [-5, 6]\)---the constants \(m_k(m_k+1)/2\) are \(6, 3, 1\) at
\(k = -5, -4, -3\) and \(1, 10, 28\) at \(k = 0, 3, 6\), so a path whose
identified curvature is bounded by \(C = .06\) in outcome units can move
the pre-period coefficients by at most \(.36\), \(.18\), and \(.06\),
and the post-period coefficient at \(k = 6\) by at most \(1.68\). The
saturating path of Simulation 4 has maximal centered curvature \(.061\),
slightly above \(.06\), for which the first bound is \(.37\); its actual
shifts are \(.18\), \(.07\), and \(.02\). The bound is conservative for
that path by a factor of two to three because its curvature decays
geometrically away from entry rather than staying at its maximum; the
factor is specific to the path, not a property of the bound. Two
practical readings follow, and both concern expectations rather than
realized estimates: Equation~\ref{eq-sensitivity} bounds the conditional
expected measurement-induced shift, not any single sample coefficient,
and it is read under the premise of Theorem~\ref{thm-pretrend} that the
expected unconditioned pre-event coefficient is zero. Under that
premise, a confidence interval for the expected reported coefficient
that excludes the entire range of admissible measurement-induced shifts,
\([-C\,m_k(m_k+1)/2,\; C\,m_k(m_k+1)/2]\), is evidence against drift of
the maintained magnitude as the sole explanation of a pre-trend; a point
estimate outside that range is not by itself sufficient, since sampling
error alone can place it there when the expected shift respects the
bound, and if \(C\) is itself estimated its uncertainty enters as well.
Whether the comparison concerns one prespecified coefficient or a
jointly assessed set should be stated in advance, the latter calling for
simultaneous rather than pointwise intervals. A coefficient whose
interval lies inside the range cannot be separated from measurement
drift without an external estimate of the path. And because the bound
grows quadratically with distance from the reference categories, distant
leads and lags permit larger measurement-induced shifts under the same
curvature restriction; that is a reason to report the bound alongside
the coefficients when choosing the event window, not a statement that
the realized artifact is largest there.

\subsection{Which functionals survive}\label{sec-survive}

\begin{proposition}[Orthogonality
characterization]\protect\hypertarget{prp-survive}{}\label{prp-survive}

A linear panel functional
\(\theta = \sum_{it}\lambda_{it}\,\mathbb{E}[Y_{it}]\) is unbiased for
its unconditioned counterpart
\(\theta^{\circ} = \sum_{it}\lambda_{it}\,\mathbb{E}[Y^{\circ}_{it}]\)
under every drift \(\tau\) in a class \(\mathcal{M}\) if and only if
\(\sum_{it}\lambda_{it}\,\tau(s_{it}) = 0\) for all
\(\tau \in \mathcal{M}\). For \(\mathcal{M}\) the affine paths this is
the pair of moment conditions \(\sum_{it}\lambda_{it} = 0\) and
\(\sum_{it}\lambda_{it}\,s_{it} = 0\), satisfied by every estimator that
partials out unit and time effects; for \(\mathcal{M}\) unrestricted it
is tenure-composition balance, \(\sum_{it:\,s_{it}=s}\lambda_{it} = 0\)
for every \(s\) in the support.

\end{proposition}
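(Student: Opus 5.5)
The plan is to reduce everything to the report equation, Equation~\ref{eq-report}, and then specialize the class \(\mathcal{M}\). Taking expectations row by row, \(\mathbb{E}[Y_{it}] = \mathbb{E}[Y^{\circ}_{it}] + \tau(s_{it})\), since \(\nu_{it}\) has mean zero by D1. Summing with weights \(\lambda_{it}\) gives the identity \(\theta - \theta^{\circ} = \sum_{it}\lambda_{it}\,\tau(s_{it})\) for every path \(\tau\). The first claim then holds by definition: the functional is unbiased for its unconditioned counterpart under every \(\tau \in \mathcal{M}\) exactly when this sum vanishes on \(\mathcal{M}\). I will state this step explicitly, conditional on the design, as in Theorem~\ref{thm-absorb}.

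For the affine class, I write a generic member as \(\tau(s) = a + b\,s\). The sum is \(a\sum\lambda_{it} + b\sum\lambda_{it}s_{it}\). It vanishes for all \((a,b)\) if and only if both moments vanish: for necessity take \((a,b) = (1,0)\) and \((0,1)\), and sufficiency is linearity. The normalization \(\tau(1) = 0\), if imposed on \(\mathcal{M}\), restricts the class to \(b(s-1)\). I will note that the statement is about the unnormalized affine class, where the level direction \(a\) is also unidentified, so both conditions are needed.

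For the claim about estimators that partial out unit and time effects, I write such an estimator's weights as \(\lambda = M c\) for some vector \(c\). This covers the TWFE weights \(MD(D'MD)^{-1}\) of Theorem~\ref{thm-correction} and the event-study rows of \((X'MX)^{-1}X'M\). Because \(M\) is symmetric, \(\lambda' v = c' M v\). Both the constant vector and \(s_{it} = t + (1 - e_i)\) lie in \(\mathrm{col}(W)\), which is the argument of Theorem~\ref{thm-absorb}(i). So \(Mv = 0\) for both, and both moments vanish.

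For unrestricted \(\mathcal{M}\), I regroup the sum as \(\sum_s \tau(s)\,\Lambda_s\) with \(\Lambda_s := \sum_{it:\,s_{it}=s}\lambda_{it}\). Sufficiency of \(\Lambda_s = 0\) for all \(s\) is immediate. For necessity I take \(\tau = \mathbf{1}\{s = s_0\}\), the same device as in Corollary~\ref{cor-protect}(b). If the class is normalized by \(\tau(1) = 0\), the tenure \(s_0 = 1\) is exempt, and I will remark on this in one sentence.

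The only delicate point is this bookkeeping about normalizations: whether \(\mathcal{M}\) includes constants, and whether tenure one is exempt. I expect to handle it with a sentence rather than any real argument, since every other step is linear algebra.
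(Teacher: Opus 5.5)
Your proposal is correct and follows the paper's proof step for step. The paper also derives the identity \(\theta - \theta^{\circ} = \sum_{it}\lambda_{it}\tau(s_{it})\) from Equation~\ref{eq-report} and D1, reduces the affine case to the two coefficients \((a,b)\), uses the indicator paths \(\mathbf{1}\{s = s_0\}\) for the unrestricted case, and invokes the column-space argument of Theorem~\ref{thm-absorb}(i) for weights that partial out unit and time effects. You state that argument for general \(\lambda = Mc\), where the paper cites only the TWFE weights, and your remark on \(\tau(1) = 0\) is also right: imposing it would reduce the affine pair to the single condition \(\sum_{it}\lambda_{it}(s_{it}-1) = 0\) and exempt \(s = 1\) from the balance condition, a refinement the paper leaves implicit.
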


\begin{proof}
\(\theta - \theta^{\circ} = \sum_{it}\lambda_{it}\tau(s_{it})\) by
Equation~\ref{eq-report} and D1; the ``if and only if'' is the
definition of unbiasedness for all \(\tau \in \mathcal{M}\). For affine
\(\mathcal{M}\) the condition is linear in \((a,b)\) and reduces to its
two coefficients; for unrestricted \(\mathcal{M}\) take
\(\tau = \mathbf{1}\{s = s_0\}\) for each \(s_0\). The TWFE weights
satisfy the affine pair by the proof of Theorem~\ref{thm-absorb}.
\(\square\)
\end{proof}

The second condition is the bridge to the design problem treated in
companion work: a published aggregate that is balanced on time in sample
is a linear functional satisfying the composition condition, so it is
drift-free, and \emph{the same functional} carries no information about
the drift. This is a statement about one functional at a time---a robust
aggregate estimator does not imply that the survey's
rotation-group-level data are uninformative about conditioning, only
that the aggregate is; the companion paper's balance--information
theorem makes the distinction precise.

\section{Monte Carlo Design (ADEMP)}\label{sec-mc}

\emph{Aims}: noiseless verification of the identities of
Theorem~\ref{thm-linear} and Corollary~\ref{cor-curvature} on the
JLPS-calibrated support (stride \(d = 4\)); size and power of the
plateau observable-implications test; coverage of the drift-bound
intervals of Proposition~\ref{prp-bounds} at the covered tenures;
dependence of level estimates on the normalization.
\emph{Data-generating mechanisms}: 19-wave staggered panel calibrated to
JLPS (three cohorts, entry at \(t = 1, 5, 13\)); conditioning paths
(null, saturating, linear-drift, non-monotone); cohort effects (null,
drifting, shock at 2011 entry). \emph{Estimands}: the centered lag-4
second difference \(\Delta_4^2\tau(5) = \tau(9) - 2\tau(5) + \tau(1)\)
and the identified contrast \(\Delta^2\tau(2) - \Delta^2\tau(6)\);
\(\bar\tau - \tau(1)\) under the plateau restriction; drift bounds for
\(\tau(13) - \tau(1)\) and \(\tau(17) - \tau(1)\) at
\(M \in \{0.4, 1, 2, 4\} \times\) true maximal drift. \emph{Methods}:
explicit cell-mean contrasts and regression readouts from arbitrary
representatives; the plateau-constrained estimator with its
observable-implications test; drift-bound intervals with the
Imbens--Manski critical value. \emph{Performance measures}: bias, MCSE,
coverage of the true parameter value by the drift-bound intervals,
specification-test rejection. Simulation 4 (Appendix C) verifies the
identities of Section~\ref{sec-downstream} (absorption, invariance, the
identified correction, and the exact pre-trend formula) to machine
precision and reports the finite-sample behaviour of the corrected
estimator. The full protocol, with prespecified versus exploratory
analyses flagged, is Appendix B (ADEMP-PreReg format); results are in
Appendix A.

\section{Illustration: A 19-Wave Panel with Two Refreshment
Samples}\label{sec-jlps}

This section illustrates the design calculus; it reports no new
empirical estimates. The Japanese Life Course Panel Survey (JLPS)
enrolled cohorts in 2007, 2011, and 2019---waves 1, 5, and 13---and has
been observed through wave 19. Its stride is \(d = 4\), its second
cohort entered \(\Delta_2 = 4 = d\) waves after the first, so
Lemma~\ref{lem-trapezoid}(i) holds for every \(T\) and C\(_4\) is
satisfied; its cell design has 41 cells, 39 free parameters, rank 35,
and a four-dimensional kernel, exactly the affine-plus-periodic family
(Figure~\ref{fig-support}). By Corollary~\ref{cor-curvature} the
curvature of conditioning is identified only at lag four, and the drift
bounds of Proposition~\ref{prp-bounds} apply at tenures 5, 9, 13, and
17, where the periodic component vanishes. Those are tenures---arguments
of \(\tau\), common to every cohort, and equal to the wave number only
for the original cohort---not calendar waves: the refreshment samples
themselves entered at waves 5 and 13. The design is thus matched to
lag-four contrasts and to levels on that tenure grid, which are the
objects the companion papers estimate, and poorly matched to the
ordinary second differences an analyst might have expected to read off
it. Two features of the design bear on Section~\ref{sec-recovery}. The
entry negative-control set is unusually rich---23 time-invariant
childhood-circumstance items asked once, at entry, of every cohort---so
the affine direction can be anchored under the conditions of
Proposition~\ref{prp-nc}; but with \(d = 4\) the battery leaves the
three periodic directions untouched, and the path at tenures
\(s \not\equiv 1 \pmod 4\) is recovered only under the plateau
restriction of Proposition~\ref{prp-plateau} or an external anchor. And
the 2019 entrants share no birth-cohort support with the incumbents, so
that episode anchors increments rather than levels---the configuration
treated fully in the companion refreshment-theory paper. The companion
empirical paper implements these recovery conditions on the restricted
microdata under the archive's disclosure rules and reports the resulting
battery contrasts and conditioning estimates; we do not reproduce those
numbers here, and the present paper's claims do not depend on them.

\subsection{A descriptive illustration from published rotation-group
indices}\label{sec-cps}

Corollary~\ref{cor-cps} says that, under (P\('\)), a rotation design
identifies differences of per-month rates of conditioning from the joint
cohort-by-period array of cell means. Statistical agencies do not
publish that array; they publish \emph{month-in-sample indices}, each an
average over many periods of the ratio of the month-in-sample-\(k\)
estimate to a benchmark. This subsection computes, from those indices,
the transformation that Corollary~\ref{cor-cps} would apply to the
array, and reports the result as what it is: a descriptive profile of
the published indices, not an estimate of conditioning. The distinction
is not pedantic. The month-in-sample-\(k\) mean at period \(t\) is
\(\alpha(t) + g(t - j_k) + \tau(k)\), so a contrast of month-in-sample
means carries the cohort effects \(g\) at eight different entry dates;
the weights of \(\mathcal{D}\) annihilate an affine \(g\) but not a
curved one, and a quadratic cohort effect \(g(e) = a(e^2 - e_0^2)\)
contributes \(126a\) to the contrast at every period, which averaging
over periods leaves intact (Corollary~\ref{cor-cps}(ii)). Isolating
conditioning from these indices would require estimating \(g\) jointly,
from cohort-by-period data, together with the sampling covariances of
the published ratios, the benchmark constructions of the two sources
(which differ), and a justification of the additive model on the log
scale. None of that is available here, and the readings below claim none
of it.

Write \(b(k)\) for the published index at month in sample \(k\) and
\(y(k) = \log b(k)\). With the CPS pattern
\(j = (0,1,2,3,12,13,14,15)\), the transformation of
Corollary~\ref{cor-cps} produces the profile of per-month changes \[
r_k \;=\; \frac{y(k+1) - y(k)}{j(k+1) - j(k)}, \qquad k = 1, \dots, 7,
\] and the contrast \(\mathcal{D}\) with the same weights as in
Corollary~\ref{cor-cps}. Two \emph{restricted descriptive patterns} can
be compared with the profile: a constant change per calendar month,
under which all seven rates are equal and \(\mathcal{D} = 0\); and a
constant change per interview, under which the six within-block rates
are equal and the rate across the rest period is one ninth of them.
Comparison is by inspection; the published tables carry no covariances,
so nothing here is a test.

\needspace{0.3\textheight}
\begin{longtable}[]{@{}
  >{\raggedright\arraybackslash}p{(\linewidth - 16\tabcolsep) * \real{0.3600}}
  >{\raggedleft\arraybackslash}p{(\linewidth - 16\tabcolsep) * \real{0.0800}}
  >{\raggedleft\arraybackslash}p{(\linewidth - 16\tabcolsep) * \real{0.0800}}
  >{\raggedleft\arraybackslash}p{(\linewidth - 16\tabcolsep) * \real{0.0800}}
  >{\raggedleft\arraybackslash}p{(\linewidth - 16\tabcolsep) * \real{0.0800}}
  >{\raggedleft\arraybackslash}p{(\linewidth - 16\tabcolsep) * \real{0.0800}}
  >{\raggedleft\arraybackslash}p{(\linewidth - 16\tabcolsep) * \real{0.0800}}
  >{\raggedleft\arraybackslash}p{(\linewidth - 16\tabcolsep) * \real{0.0800}}
  >{\raggedleft\arraybackslash}p{(\linewidth - 16\tabcolsep) * \real{0.0800}}@{}}
\caption{Per-month changes in the logarithms of published CPS
month-in-sample indices, in log points per month, with \(\mathcal{D}\)
formed with the weights of Corollary~\ref{cor-cps}. The profile is a
descriptive transformation of published averages: it retains any
non-affine cohort effect and carries no sampling covariances, and it is
not an estimate of conditioning. \(^{\dagger}\) In 1968--69 the CPS
asked additional questions of persons not in the labor force \emph{only}
at months in sample 1 and 5 (\citeproc{ref-bailar1975}{Bailar 1975}), so
this row's instrument varies with the dose and assumption M1 fails; it
is shown to make the failure visible.}\label{tbl-cps}\tabularnewline
\toprule\noalign{}
\begin{minipage}[b]{\linewidth}\raggedright
Source and item
\end{minipage} & \begin{minipage}[b]{\linewidth}\raggedleft
\(r_1\)
\end{minipage} & \begin{minipage}[b]{\linewidth}\raggedleft
\(r_2\)
\end{minipage} & \begin{minipage}[b]{\linewidth}\raggedleft
\(r_3\)
\end{minipage} & \begin{minipage}[b]{\linewidth}\raggedleft
\(r_4\) (rest)
\end{minipage} & \begin{minipage}[b]{\linewidth}\raggedleft
\(r_5\)
\end{minipage} & \begin{minipage}[b]{\linewidth}\raggedleft
\(r_6\)
\end{minipage} & \begin{minipage}[b]{\linewidth}\raggedleft
\(r_7\)
\end{minipage} & \begin{minipage}[b]{\linewidth}\raggedleft
\(\mathcal{D}\)
\end{minipage} \\
\midrule\noalign{}
\endfirsthead
\toprule\noalign{}
\begin{minipage}[b]{\linewidth}\raggedright
Source and item
\end{minipage} & \begin{minipage}[b]{\linewidth}\raggedleft
\(r_1\)
\end{minipage} & \begin{minipage}[b]{\linewidth}\raggedleft
\(r_2\)
\end{minipage} & \begin{minipage}[b]{\linewidth}\raggedleft
\(r_3\)
\end{minipage} & \begin{minipage}[b]{\linewidth}\raggedleft
\(r_4\) (rest)
\end{minipage} & \begin{minipage}[b]{\linewidth}\raggedleft
\(r_5\)
\end{minipage} & \begin{minipage}[b]{\linewidth}\raggedleft
\(r_6\)
\end{minipage} & \begin{minipage}[b]{\linewidth}\raggedleft
\(r_7\)
\end{minipage} & \begin{minipage}[b]{\linewidth}\raggedleft
\(\mathcal{D}\)
\end{minipage} \\
\midrule\noalign{}
\endhead
\bottomrule\noalign{}
\endlastfoot
McIllece (\citeproc{ref-mcillece2022}{2022}, Tab. 2): unemployed,
2003--22 & \(-.055\) & \(-.038\) & \(-.025\) & \(+.001\) & \(-.046\) &
\(-.017\) & \(-.001\) & \(+.510\) \\
McIllece (\citeproc{ref-mcillece2022}{2022}, Tab. 2): employed, 2003--22
& \(.000\) & \(-.001\) & \(-.005\) & \(+.000\) & \(-.010\) & \(+.001\) &
\(+.003\) & \(+.002\) \\
Bailar (\citeproc{ref-bailar1975}{1975}, Tab. 1): hours 35--40, 1968--69
& \(+.052\) & \(+.022\) & \(+.016\) & \(-.003\) & \(+.032\) & \(+.011\)
& \(+.008\) & \(-.502\) \\
Bailar (\citeproc{ref-bailar1975}{1975}, Tab. 1): hours 35--40, 1970--72
& \(+.048\) & \(+.022\) & \(+.016\) & \(-.003\) & \(+.025\) & \(+.014\)
& \(+.008\) & \(-.457\) \\
Bailar (\citeproc{ref-bailar1975}{1975}, Tab. 1): unemployed, 1968--69
\(^{\dagger}\) & \(-.167\) & \(-.052\) & \(-.038\) & \(+.018\) &
\(-.125\) & \(-.041\) & \(-.017\) & \(+1.671\) \\
\end{longtable}

Three readings, all descriptive. First, the displayed unemployment
profile is inconsistent with a constant per-month pattern: the rates
range over \(.057\) log points per month and \(\mathcal{D} = +.51\)
where that pattern gives zero. Second, it is also inconsistent with a
constant per-interview pattern, which would imply equal within-block
increments: within each four-month block the rate attenuates
(\(-.055, -.038, -.025\); \(-.046, -.017, -.001\)), and the second
block's first rate is nearly as steep as the first block's. This is the
``pattern \ldots{} also apparent within each four-month wave'' that
McIllece (\citeproc{ref-mcillece2022}{2022}) reports; whether it
reflects conditioning, cohort composition, or the benchmark construction
cannot be decided from the indices. Third, the employment index varies
little across months in sample: its rates are an order of magnitude
smaller and \(\mathcal{D} = +.002\). A small profile is not evidence
that conditioning is absent, since the profile retains cohort effects of
either sign; it is evidence only that the published index is nearly
flat.

The last row of the table is a caution, not a finding. Because the
1968--69 CPS asked additional labor-force probes only of households in
their first and fifth months in sample, the instrument varied with the
dose---a direct violation of the clause of M1 that all cohorts face the
same instrument at a given time---and the resulting profile shows it:
the two intervals that step off or onto a probed month in sample,
\(1 \to 2\) and \(4 \to 5\), are precisely the outliers---the steepest
rate in the table and the only positive one. Solon
(\citeproc{ref-solon1986}{1986}, Tab. 1) shows the same artifact after
the probes were moved to months in sample 4 and 8 in 1970: in his
1974--83 averages the only two positive within-block rates are the
intervals \(3 \to 4\) and \(7 \to 8\), which are exactly the two that
end at a probed month. The framework localizes the damage; in a joint
cohort-by-period analysis it would also say which contrasts avoid the
affected months while annihilating \(j(\cdot)\). The descriptive
analogue in Solon (\citeproc{ref-solon1986}{1986})'s averages---the rate
from month three to month five against the rate from month two to month
three, which differ by \(14.9\) thousand persons per month---is again a
transformation of published averages, not an identified contrast.

\section{Discussion}\label{discussion}

\textbf{What the theorems change in practice.} Five practical
implications follow. First, any reported \emph{level} of panel
conditioning or rotation-group bias should be read jointly with its
normalization; comparisons of reported levels across studies that
normalize differently reflect those normalization choices as well as any
genuine difference between surveys, and cannot separate the two without
the identifying restriction each study used. The discussion of
normalization choices after Lemma~\ref{lem-trapezoid} reads the major
published conventions in these terms. Second, the \emph{identified
shape} of conditioning---curvature at the design's stride, and ordinary
curvature when the stride is one and C\(_1\) holds---is the exportable,
normalization-free quantity; meta-analyses and bias-prior libraries
should accumulate identified contrasts rather than level estimates
without their identifying restrictions, and should record the stride of
the design that produced each one. Third, fixed-effects users can
estimate Equation~\ref{eq-joint}, the two-way regression with tenure
indicators, rather than either ignoring conditioning or choosing a
level, after checking the residual-rank condition (R) directly;
event-study users on tenure-aligned designs should treat pre-period
coefficients as suspect until the measurement-side check of
Theorem~\ref{thm-pretrend} has been run. Fourth, rotation designs with
continuous recruitment should analyze month-in-sample differences on the
joint cohort-by-period array, with cohort effects estimated rather than
averaged away, and report them as differences of per-month rates: under
(P\('\)) those differences are the identified functionals
(Theorem~\ref{thm-dose}, Corollary~\ref{cor-cps}), and the rest period
between interview blocks is what lets a constant per-interview increment
be told apart from a constant per-month one. Published month-in-sample
indices, being averages over periods, retain the cohort effects and
support only descriptive comparison (Section~\ref{sec-cps}). Fifth,
before fielding or analyzing a staggered design, compute the kernel of
its cell design matrix (Theorem~\ref{thm-linear}(a), (d)): the
refreshment schedule and the follow-up of its last cohort fix which
functionals of conditioning the panel can ever deliver, a schedule with
a common divisor larger than one forfeits ordinary curvature, and a
cohort followed too briefly forfeits more (Lemma~\ref{lem-trapezoid}).

\textbf{Relation to companion work.} The identification theory of
refreshment designs under attrition---the sharp identified set under a
single refreshment, and conditions for point identification with
refreshment samples at multiple tenures---is developed in companion
work, where the JLPS's two refreshment cohorts (2011, 2019) with a
repeated entry-wave negative-control battery constitute, to our
knowledge, the first dataset satisfying those conditions.
Section~\ref{sec-downstream} characterizes the propagation of
\emph{additive} tenure drift under D1--D2 and the stated rank
conditions; its extension to report-dependent (state-dependent)
measurement error, where the drift depends on the previous report and is
not absorbed by fixed effects, is the subject of a companion paper on
transition estimators. The same companion work develops the design
problem: which rotation, refreshment, and mode-transition schedules make
conditioning identifiable at all, and at what cost---including the
result that rotation designs balanced on time in sample hold the tenure
component of published aggregates constant over time at the price of the
information from which it could have been estimated.

\textbf{Limits.} The model is additive throughout, in two places: cell
means are additive in period, cohort, and conditioning (M4), and the
individual report is the unconditioned outcome plus a common path (D1).
Both are the standard assumptions of the rotation-group-bias and
panel-conditioning literatures, and both are where these results end.
With cohort-by-tenure or period-by-tenure interactions the indeterminacy
is item-wise or cohort-wise and larger; with conditioning that depends
on the previous \emph{report} rather than on the count of
interviews---state-dependent measurement error---the drift is not a
function of the design alone, two-way fixed effects do not absorb it,
and the companion paper on transition estimators takes it up. Nothing
here is a general theory of measurement error, and the results should be
read as statements about an additively separable conditioning path;
Solon (\citeproc{ref-solon1986}{1986}) argued, from the rotation-group
evidence itself, that the additive model of rotation-group bias is not
consistent with the data for unemployment, and proposed a multiplicative
alternative---on published multiplicative indices the natural reading of
our model is that it applies to the logarithm of the index, which is
what we do in Section~\ref{sec-cps}. Our impossibility results concern
the additive cohort-by-period cell-mean model without auxiliary
identifying information; they do not bind designs that bring in
validation records, paradata with exclusion restrictions, or randomized
question exposure, which is precisely where the recovery conditions
point. And the additive cell-mean structure, while standard, is a
choice: with item-by-tenure interactions the indeterminacy is item-wise,
which strengthens rather than weakens the case for shape-based
reporting.

\section*{Related literature}\label{related-literature}
\addcontentsline{toc}{section}{Related literature}

Our results draw on four literatures: the rotation-group-bias tradition
in official statistics (\citeproc{ref-bailar1975}{Bailar 1975};
\citeproc{ref-solon1986}{Solon 1986};
\citeproc{ref-krueger2017}{Krueger, Mas, and Niu 2017};
\citeproc{ref-mcillece2022}{McIllece 2022}), where normalizations are
ubiquitous---and where, by Theorem~\ref{thm-dose}, the interrupted
rotation pattern that tradition works with is what lets a constant
per-interview increment be separated from a constant per-month one,
under (P\('\)) and once cohort effects are estimated jointly; the
panel-conditioning literature in survey methodology
(\citeproc{ref-warren2012}{Warren and Halpern-Manners 2012};
\citeproc{ref-halpernmanners2012}{Halpern-Manners and Warren 2012};
\citeproc{ref-halpernmanners2017}{Halpern-Manners, Warren, and Torche
2017}; \citeproc{ref-kraemer2024}{Kraemer et al. 2024},
\citeproc{ref-kraemer2025}{2025}), which documents effects, and within
which Das, Toepoel, and van Soest (\citeproc{ref-dasetal2011}{2011})
give a formal identification analysis of conditioning and attrition for
a binary item in a two-wave design with a refreshment sample---the
closest antecedent to our Section~\ref{sec-equivalence} and the
companion refreshment paper; the APC identification literature
(\citeproc{ref-masetal1973}{Mason et al. 1973};
\citeproc{ref-fosse2019}{Fosse and Winship 2019a},
\citeproc{ref-fosse2019bounds}{2019b}), whose linear dependency we
transplant to the measurement domain, and whose unequal-interval
extension (\citeproc{ref-gascoignesmith2023}{Gascoigne and Smith 2023})
already contains the periodic ambiguity that a common refreshment stride
produces; and the modern event-study literature
(\citeproc{ref-sunabraham2021}{Sun and Abraham 2021};
\citeproc{ref-borusyak2024}{Borusyak, Jaravel, and Spiess 2024}), whose
fully dynamic underidentification and contamination results have
analogues here under M4 and D1---with tenure in the role of event time
and the survey instrument in the role of the outcome equation; the
closest antecedents to Theorem~\ref{thm-pretrend} are their treatments
of excluded periods, which we use in stating its premise. Feng, Hu, and
Sun (\citeproc{ref-fenghusun2022}{2022}) point-identify CPS
misclassification probabilities in a latent-state model with
history-dependent errors; their object and model class differ from the
additive mean model here, and we make no claim that their restrictions
select a point of \(\Theta_\tau\). What the two share is the dependency
between tenure, period, and cohort that both must confront; the
entry-wave negative controls used here check the comparability of
entering groups on time-invariant facts, one component of the
cross-rotation-group comparability their transport step requires, and
not that condition in full. Bertoli, Jakli, and Pascoe
(\citeproc{ref-bertoli2026}{2026}) formalize conditioning bias in
two-wave event-impact surveys and propose a dual-randomization design;
the present results address the many-wave staggered case under the
additive cell-mean model, a different model class from theirs, in which
the design choices they analyze for two waves recur at every
refreshment.

\section*{Appendix A: Simulation
results}\label{appendix-a-simulation-results}
\addcontentsline{toc}{section}{Appendix A: Simulation results}

All simulations use the JLPS-calibrated support---cohorts entering
\(t = 1, 5, 13\), \(T = 19\), stride \(d = 4\)---so that the identified
set of Theorem~\ref{thm-linear} is four-dimensional and ordinary second
differences are not identified. The specifications below correspond to
the accompanying code; every number quoted in this appendix is read from
a named column of the corresponding output file, and the replication
package's \texttt{check\_manuscript\_values.R} reproduces each of them
from those files.

\subsection*{Simulation 1: noiseless identities, normalizations, and the
plateau
test}\label{simulation-1-noiseless-identities-normalizations-and-the-plateau-test}
\addcontentsline{toc}{subsection}{Simulation 1: noiseless identities,
normalizations, and the plateau test}

Design: \(B = 500\); cell noise \(1/\sqrt{1000}\) (one thousand
respondents per cell with unit outcome variance); scenarios S1 null / S2
saturating, \(\tau(s) = 0.2\{1 - e^{-0.6(s-1)}\}\) / S3 = S2 + linear
drift \(0.01(s-1)\) / S4 = S2 + late nonlinear drift
\(0.004\max(s-6,0)^2\); cohort shock G1 (\(+0.15\)) on the 2011 cohort;
plateau window \(s \ge 6\).

\emph{Noiseless identity tests.} On population cell means, adding the
periodic null direction \(\rho = (0, 0.30, -0.10, 0.20)\) (period four,
\(\rho(1) = 0\)) to \(\tau\) and subtracting \(\rho(t)\) from \(\alpha\)
changes no cell mean (\(5.6 \times 10^{-17}\)). The ordinary second
difference \(\Delta^2\tau(3)\) moves from \(-0.022\) to \(+0.678\); the
centered lag-4 second difference
\(\Delta_4^2\tau(5) = \tau(9) - 2\tau(5) + \tau(1)\) does not move; and
the explicit cell-mean contrast
\([\mu(1,9) - \mu(5,9)] - [\mu(1,5) - \mu(5,5)]\)---first argument the
entry date, as in Equation~\ref{eq-model}; the cohorts entering at
\(t = 1\) and \(t = 5\) compared at \(t = 9\) and at \(t = 5\)---returns
\(\Delta_4^2\tau(5) = \tau(9) - 2\tau(5) + \tau(1) = -0.1654\) to
\(10^{-17}\) before and after the perturbation. (The same two cohorts
compared at \(t = 13\) and \(t = 9\) would return
\(\Delta_4^2\tau(9) = \tau(13) - 2\tau(9) + \tau(5) = -0.0150\), a
different identified functional.) This is Theorem~\ref{thm-linear} and
Corollary~\ref{cor-curvature} at machine precision. Earlier versions
labeled these two contrasts by their first tenure, \(\Delta_4^2\tau(1)\)
and \(\Delta_4^2\tau(5)\); the quantities are unchanged, and the labels
now follow the centered convention of Corollary~\ref{cor-curvature}.

\begin{longtable}[]{@{}
  >{\raggedright\arraybackslash}p{(\linewidth - 14\tabcolsep) * \real{0.1250}}
  >{\raggedright\arraybackslash}p{(\linewidth - 14\tabcolsep) * \real{0.1250}}
  >{\raggedright\arraybackslash}p{(\linewidth - 14\tabcolsep) * \real{0.1250}}
  >{\raggedright\arraybackslash}p{(\linewidth - 14\tabcolsep) * \real{0.1250}}
  >{\raggedright\arraybackslash}p{(\linewidth - 14\tabcolsep) * \real{0.1250}}
  >{\raggedright\arraybackslash}p{(\linewidth - 14\tabcolsep) * \real{0.1250}}
  >{\raggedright\arraybackslash}p{(\linewidth - 14\tabcolsep) * \real{0.1250}}
  >{\raggedright\arraybackslash}p{(\linewidth - 14\tabcolsep) * \real{0.1250}}@{}}
\toprule\noalign{}
\begin{minipage}[b]{\linewidth}\raggedright
Scenario
\end{minipage} & \begin{minipage}[b]{\linewidth}\raggedright
bias level N1
\end{minipage} & \begin{minipage}[b]{\linewidth}\raggedright
bias level N2
\end{minipage} & \begin{minipage}[b]{\linewidth}\raggedright
\(\Delta_4^2\tau(5)\) true
\end{minipage} & \begin{minipage}[b]{\linewidth}\raggedright
bias (contrast)
\end{minipage} & \begin{minipage}[b]{\linewidth}\raggedright
MC sd
\end{minipage} & \begin{minipage}[b]{\linewidth}\raggedright
bias \(\bar\tau\) (plateau)
\end{minipage} & \begin{minipage}[b]{\linewidth}\raggedright
rejection
\end{minipage} \\
\midrule\noalign{}
\endhead
\bottomrule\noalign{}
\endlastfoot
S1 null & 0.000 & 0.000 & 0.000 & 0.001 & .064 & 0.001 & .058 \\
S2 saturating & −0.051 & −0.102 & −0.165 & 0.004 & .061 & −0.006 &
.048 \\
S3 + linear drift & −0.089 & −0.140 & −0.165 & 0.001 & .065 & −0.097 &
.054 \\
S4 + nonlinear & −0.173 & −0.124 & −0.129 & 0.005 & .060 & −0.408 &
1.000 \\
\end{longtable}

Rows are the G0 (no cohort shock) setting; the G1 rows of
\texttt{sim1\_results.csv} differ from them only within Monte Carlo
error. Three theorem-consistent facts: (i) level estimates differ by
normalization; (ii) the estimated bias of the identified lag-4 contrast
is small relative to its Monte Carlo uncertainty in every scenario
(Monte Carlo standard error \(.003\)); (iii) the plateau
\emph{observable-implications} test (constancy over time of the cohort
gaps within the window) has rejection rates near the nominal \(.05\) in
S1 (null), S2 (saturating path) and S3 (S2 plus the pure linear drift
that Proposition~\ref{prp-plateau} declares undetectable)---\(.058\),
\(.048\) and \(.054\) over \(500\) replications---and rejects in every
replication under the nonlinear violation S4 (499 of 500 in the G1
setting), while the plateau-constrained estimator absorbs the drift it
cannot see (bias \(-0.097\) under S3, which adds a linear drift to the
saturating path of S2). {[}sims/sim1\_normalizations.R;
sim1\_results.csv{]}

\subsection*{Simulation 2: identified and unidentified curvature
readouts}\label{simulation-2-identified-and-unidentified-curvature-readouts}
\addcontentsline{toc}{subsection}{Simulation 2: identified and
unidentified curvature readouts}

The target is \(\theta := \Delta^2\tau(2) - \Delta^2\tau(6)\), an
identified functional on this support---its weights sum to zero within
every residue class modulo four and are orthogonal to
\(s - 1\)---contrasted with \(\Delta^2\tau(3)\), which is not identified
(Corollary~\ref{cor-curvature}). For noiseless cell means a saturated
regression returns a representative of \(\Theta_\tau\): unidentified
functionals of it depend on which representative the software chose,
whereas identified functionals equal their true values.

\emph{Noiseless identities.} \(\theta = -0.03702\); the correctly
centred contrast \(-\Delta^2 D(6)\), \(D(t) = \mu(1,t) - \mu(5,t)\)
(first argument the entry date: the cohorts entering at \(t = 1\) and
\(t = 5\)), returns it to \(10^{-16}\); the regression readout
\(\Delta^2\hat\tau(2) - \Delta^2\hat\tau(6)\) from an arbitrary
representative returns it to \(10^{-15}\); the regression readout of
\(\Delta^2\hat\tau(3)\) is \(-0.072\) against a true \(-0.022\), and
moves with the representative.

\begin{longtable}[]{@{}
  >{\raggedright\arraybackslash}p{(\linewidth - 14\tabcolsep) * \real{0.1250}}
  >{\raggedright\arraybackslash}p{(\linewidth - 14\tabcolsep) * \real{0.1250}}
  >{\raggedright\arraybackslash}p{(\linewidth - 14\tabcolsep) * \real{0.1250}}
  >{\raggedright\arraybackslash}p{(\linewidth - 14\tabcolsep) * \real{0.1250}}
  >{\raggedright\arraybackslash}p{(\linewidth - 14\tabcolsep) * \real{0.1250}}
  >{\raggedright\arraybackslash}p{(\linewidth - 14\tabcolsep) * \real{0.1250}}
  >{\raggedright\arraybackslash}p{(\linewidth - 14\tabcolsep) * \real{0.1250}}
  >{\raggedright\arraybackslash}p{(\linewidth - 14\tabcolsep) * \real{0.1250}}@{}}
\toprule\noalign{}
\begin{minipage}[b]{\linewidth}\raggedright
Scenario
\end{minipage} & \begin{minipage}[b]{\linewidth}\raggedright
\(\theta\) true
\end{minipage} & \begin{minipage}[b]{\linewidth}\raggedright
bias (contrast)
\end{minipage} & \begin{minipage}[b]{\linewidth}\raggedright
MC sd
\end{minipage} & \begin{minipage}[b]{\linewidth}\raggedright
bias (regression readout)
\end{minipage} & \begin{minipage}[b]{\linewidth}\raggedright
MC sd
\end{minipage} & \begin{minipage}[b]{\linewidth}\raggedright
\(\Delta^2\tau(3)\) true
\end{minipage} & \begin{minipage}[b]{\linewidth}\raggedright
mean readout
\end{minipage} \\
\midrule\noalign{}
\endhead
\bottomrule\noalign{}
\endlastfoot
S2 & −0.037 & 0.006 & .110 & 0.000 & .080 & −0.022 & −0.075 \\
S3 & −0.037 & 0.003 & .107 & 0.003 & .081 & −0.022 & −0.072 \\
\end{longtable}

The Monte Carlo estimates of both estimators are consistent with
unbiasedness for this identified functional (Monte Carlo standard error
\(.005\)); the regression readout is the more efficient because it uses
every cell. The recommendation: read identified functionals off any
representative, but check identification first.
{[}sims/sim2\_direct\_contrast.R; sim2\_results.csv{]}

\subsection*{\texorpdfstring{Simulation 3: drift-bound intervals
(Proposition~\ref{prp-bounds})}{Simulation 3: drift-bound intervals (Proposition~)}}\label{simulation-3-drift-bound-intervals-prp-bounds}
\addcontentsline{toc}{subsection}{Simulation 3: drift-bound intervals
(Proposition~\ref{prp-bounds})}

Design: same support; saturating \(\tau(s) = 0.25\{1 - e^{-(s-1)/2}\}\);
cohort effects \(g = (0, 0.10, 0.22)\), per-period drifts
\((0.025, 0.015)\), so the true maximal drift is \(M_0 = 0.025\);
\(n = 500\) per cell, \(B = 300\), endpoint standard errors by
parametric bootstrap (\(150\) draws). The interval is
\([\hat\tau_{lo} - c\,\hat\sigma_{lo},\ \hat\tau_{hi} + c\,\hat\sigma_{hi}]\)
with a critical value \(c\) in the form of Imbens and Manski
(\citeproc{ref-imbensmanski2004}{2004}), solving
\(\Phi(c + \hat\Delta/\hat\sigma) - \Phi(-c) = 0.95\), where
\(\hat\Delta\) is the estimated width of the identified set
\emph{truncated at zero} and \(\hat\sigma\) the larger endpoint standard
error (a standard error, unlike the asymptotic standard deviation that
plays this role in Imbens and Manski
(\citeproc{ref-imbensmanski2004}{2004})), so that \(c \to 1.645\) for a
wide set and \(c \to 1.96\) for a point; the endpoints themselves are
the raw estimates and may cross, in which case the estimated set is
empty and the rate is recorded. The target is coverage of the true
parameter value, not of the whole set. The bounds are evaluated at
\(s = 13\) and \(s = 17\), the tenures congruent to 1 modulo 4 at which
Proposition~\ref{prp-bounds} applies. Two further designs probe the
narrow-set case. G3 sets the true drifts to \((+0.025, -0.025)\): at
\(M = M_0\) both constraints bind and the identified set for \(m\) is a
point, but its two endpoints are extrema over \emph{distinct} drifts, so
each endpoint estimator is a smooth function of the cell means---the
population endpoints coincide without any tie inside an extremum. G4
sets the true drifts to \((+0.025, +0.025)\): both extrema are tied, the
upper endpoint \(M - \max_k \hat d_k\) subtracts the maximum of two
equal quantities and is not smooth, and at \(M = M_0\) the identified
set for \(m\) is an interval of width
\(2M_0 - (\max_k d_k - \min_k d_k) = 2M_0 = .050\), with the truth at
its upper endpoint; the corresponding population widths for \(\tau(13)\)
and \(\tau(17)\) are \(12 \times .050 = .600\) and
\(16 \times .050 = .800\) (\texttt{check\_recovery\_support.py} and
\texttt{check\_manuscript\_values.R} assert these).

At the population level the representative's implied drifts under G2 are
\((0.0117, 0.0017)\), each displaced from the truth by the same
constant---the linear part the representative reassigns to the cohort
dimension---while their difference is preserved; on the subgrid
\(s \in \{1, 5, 9, 13, 17\}\) the representative agrees with the truth
up to that linear part to \(10^{-16}\). At \(M = M_0\) the sharp
interval for \(\tau(13)\) is \([-0.23, 0.25]\) and contains the truth
\(0.249\) at its upper endpoint (the true drift equals \(M_0\)); at
\(M = 0.4M_0\) it is \([-0.05, 0.07]\) and excludes it.

\begin{longtable}[]{@{}
  >{\raggedright\arraybackslash}p{(\linewidth - 16\tabcolsep) * \real{0.1111}}
  >{\raggedright\arraybackslash}p{(\linewidth - 16\tabcolsep) * \real{0.1111}}
  >{\raggedright\arraybackslash}p{(\linewidth - 16\tabcolsep) * \real{0.1111}}
  >{\raggedright\arraybackslash}p{(\linewidth - 16\tabcolsep) * \real{0.1111}}
  >{\raggedright\arraybackslash}p{(\linewidth - 16\tabcolsep) * \real{0.1111}}
  >{\raggedright\arraybackslash}p{(\linewidth - 16\tabcolsep) * \real{0.1111}}
  >{\raggedright\arraybackslash}p{(\linewidth - 16\tabcolsep) * \real{0.1111}}
  >{\raggedright\arraybackslash}p{(\linewidth - 16\tabcolsep) * \real{0.1111}}
  >{\raggedright\arraybackslash}p{(\linewidth - 16\tabcolsep) * \real{0.1111}}@{}}
\toprule\noalign{}
\begin{minipage}[b]{\linewidth}\raggedright
DGP
\end{minipage} & \begin{minipage}[b]{\linewidth}\raggedright
\(M / M_0\)
\end{minipage} & \begin{minipage}[b]{\linewidth}\raggedright
coverage, \(\tau(13)\) (IM / fixed \(z\))
\end{minipage} & \begin{minipage}[b]{\linewidth}\raggedright
median width
\end{minipage} & \begin{minipage}[b]{\linewidth}\raggedright
median set width
\end{minipage} & \begin{minipage}[b]{\linewidth}\raggedright
median \(c\)
\end{minipage} & \begin{minipage}[b]{\linewidth}\raggedright
coverage, \(\tau(17)\) (IM / fixed \(z\))
\end{minipage} & \begin{minipage}[b]{\linewidth}\raggedright
median width
\end{minipage} & \begin{minipage}[b]{\linewidth}\raggedright
empty rate
\end{minipage} \\
\midrule\noalign{}
\endhead
\bottomrule\noalign{}
\endlastfoot
G2 & 0.4 (violated) & .127 / .123 & 0.33 & 0.12 & 1.65 & .107 / .107 &
0.43 & .05--.07 \\
G2 & 1.0 & \textbf{.957} / .957 & 0.69 & 0.48 & 1.64 & \textbf{.957} /
.957 & 0.91 & 0 \\
G2 & 2.0 & 1.000 / 1.000 & 1.29 & 1.08 & 1.64 & 1.000 / 1.000 & 1.71 &
0 \\
G2 & 4.0 & 1.000 / 1.000 & 2.49 & 2.28 & 1.64 & 1.000 / 1.000 & 3.31 &
0 \\
G3 (coincident endpoints) & 0.4 (violated) & .020 / .007 & --- &
\(-0.36\) & 1.96 & .023 / .007 & --- & 1.00 \\
G3 (coincident endpoints) & 1.0 & \textbf{.953} / .900 & 0.25 & 0.00 &
1.93 & \textbf{.940} / .910 & 0.33 & .47--.50 \\
G3 (coincident endpoints) & 2.0 & 1.000 / 1.000 & 0.82 & 0.60 & 1.64 &
1.000 / 1.000 & 1.08 & 0 \\
G3 (coincident endpoints) & 4.0 & 1.000 / 1.000 & 2.02 & 1.80 & 1.64 &
1.000 / 1.000 & 2.68 & 0 \\
G4 (tied constraint) & 0.4 (violated) & .017 / .017 & 0.38 & 0.19 & 1.64
& .013 / .013 & 0.51 & 0 \\
G4 (tied constraint) & 1.0 & \textbf{.880} / .880 & 0.74 & 0.55 & 1.64 &
\textbf{.890} / .890 & 0.99 & 0 \\
G4 (tied constraint) & 2.0 & 1.000 / 1.000 & 1.34 & 1.15 & 1.64 & 1.000
/ 1.000 & 1.79 & 0 \\
G4 (tied constraint) & 4.0 & 1.000 / 1.000 & 2.54 & 2.35 & 1.64 & 1.000
/ 1.000 & 3.39 & 0 \\
\end{longtable}

Three readings, all of a finite-sample experiment with \(300\)
replications; the Monte Carlo standard error of a coverage figure is
\(\sqrt{p(1-p)/300}\), about \(.013\) near \(.95\) and about \(.019\)
and \(.018\) at the G4 figures of \(.880\) and \(.890\). First, at the
true \(M\) on the designs without a tie, estimated coverage of the true
value is close to 95\% (G2: \(.957\); G3: \(.953\) and \(.940\) with the
adaptive critical value against \(.900\) and \(.910\) with the fixed
\(1.645\)); above the true \(M\) the interval is conservative, and under
a violated bound coverage collapses, as it should. On G2 the set is wide
relative to the endpoint standard errors, \(c = 1.645\), and the two
intervals coincide. On G3 the estimated set is empty in half the
replications, since sampling error puts the two coincident endpoints in
the wrong order with probability one half, and the adaptive critical
value rises toward \(1.96\); the five-point difference between the two
intervals is the observed difference on this design, not a demonstration
of a theorem. Second, on G4---the design with a genuine tie---the
interval \textbf{undercovers}: \(.880\) and \(.890\) at \(M = M_0\),
with the adaptive and fixed critical values agreeing because the
estimated set is wide. The mechanism is the non-smoothness the tie
creates: the sample maximum of two drifts with equal population values
is biased upward, so the estimated upper endpoint
\(M - \max_e \hat d_e\) is biased downward, and a bootstrap standard
deviation of that endpoint does not correct a bias; the truth sits at
the upper end of the population set and is missed about one time in
nine. Third, none of this is a validity result. The uniform coverage of
Imbens and Manski (\citeproc{ref-imbensmanski2004}{2004}, Lemma 4,
p.~1850) holds under their Assumption 1, whose part (iii) reads: for all
\(\epsilon > 0\) there are \(\nu > 0\), \(K\), and \(N_0\) such that
\(N \ge N_0\) implies
\(\Pr(\sqrt N\,|\hat\Delta - \Delta| > K\Delta^{\nu}) < \epsilon\),
uniformly over the model class. At a point-identified parameter,
\(\Delta = 0\), this requires \(\Pr(\hat\Delta \neq 0) \to 0\). Our raw
width \((s-1)(2M - |\hat d_1 - \hat d_2|)\) is continuously distributed
at every finite \(n\), and its truncation at zero is nonzero with
probability tending to one half, so the condition fails and Lemma 4 does
not apply to the interval as implemented uniformly over a model class
that includes the point-identification boundary represented by G3. That
is a failure of a sufficient condition over that class, not a proof of
undercoverage, and not a statement about a class bounded away from zero
width and from tied constraints, where a different validity argument
remains possible; G4's tie additionally strains the joint
asymptotic-normality premise of the endpoint estimators. (Our
\(\hat\sigma\) is an endpoint standard error, so the ratio
\(\hat\Delta/\hat\sigma\) used in the critical-value equation is their
\(\sqrt N \hat\Delta / \max(\hat\sigma_l, \hat\sigma_u)\) of equation
(7), in which \(\hat\sigma\) denotes an asymptotic standard deviation;
no factor is missing.) What the experiment supports is narrower: the
sharp population bounds of Proposition~\ref{prp-bounds} are exact; the
adaptive critical value improves on the fixed one on the
coincident-endpoint design; and where a constraint is tied the endpoint
estimator's bias, not its variance, is the problem, which a procedure
that resamples the constraint set or corrects the extremum's bias would
need to address. Two practical lessons attach. The intervals are wide at
late tenure: the interval for \(m\) enters \(\tau(s)\) multiplied by
\((s-1)\), so at \(s = 17\) even the correctly specified bound spans
nearly a standard deviation; the level of long-run conditioning is
expensive, and only some curvature contrasts are estimable, under the
support conditions of Corollary~\ref{cor-curvature}. And with three
cohorts the emptiness test has little power against modest violations:
refutation of a drift bound needs more refreshment cohorts or an
external anchor. {[}sims/sim3\_drift\_bounds.R; sim3\_results.csv{]}

\section*{Appendix B: Simulation protocol (ADEMP-PreReg
format)}\label{appendix-b-simulation-protocol-ademp-prereg-format}
\addcontentsline{toc}{section}{Appendix B: Simulation protocol
(ADEMP-PreReg format)}

This appendix states the simulation protocol in the ADEMP structure,
with the prespecification status of every analysis flagged:
\textbf{{[}P{]}} = prespecified before any simulation was run;
\textbf{{[}E→C{]}} = discovered in an exploratory run, then confirmed in
a fresh simulation; \textbf{{[}A{]}} = added or corrected after the
protocol was first fixed, and therefore not prespecified.

\textbf{Aims.} (A1) \textbf{{[}P{]}} Verify numerically that level
estimates of \(\tau\) depend on the normalization while identified
curvature does not (Theorem~\ref{thm-linear},
Corollary~\ref{cor-curvature}). (A1\('\)) \textbf{{[}A{]}} Noiseless
identity tests of the periodic null direction. (A2) \textbf{{[}P{]}}
Size and power of the plateau observable-implications test
(Proposition~\ref{prp-plateau}), including its predicted blindness to
pure linear drift. (A3) \textbf{{[}P{]}} Bias of the plateau-constrained
estimator under correct and violated saturation. (A4) \textbf{{[}P{]}}
Coverage of the drift-bound intervals (Proposition~\ref{prp-bounds}) at
correctly specified, conservative, and violated \(M\), at the covered
tenures. (A5) \textbf{{[}A{]}} Compare readouts of identified and
unidentified functionals from saturated regressions with explicit
contrasts. (A6) \textbf{{[}A{]}} Coverage of the Imbens--Manski interval
against a fixed-\(z\) interval on a design in which the identified set
is a point. (A7) \textbf{{[}A{]}} Coverage on a design in which the
binding constraint is tied.

\textbf{Data-generating mechanisms.} JLPS-calibrated support: cohorts
entering \(t = 1, 5, 13\), \(T = 19\). Sims 1--2: cell noise
\(1/\sqrt{1000}\); paths S1 null, S2
\(\tau(s) = 0.2\{1 - e^{-0.6(s-1)}\}\), S3 = S2 + \(0.01(s-1)\), S4 = S2
+ \(0.004\max(s-6,0)^2\); cohort effects G0 null, G1 shock \(+0.15\) on
the 2011 cohort. Sim 3: outcome sd \(1\), \(n = 500\) per cell; S2\('\):
\(\tau(s) = 0.25\{1 - e^{-(s-1)/2}\}\); G2 drifting
\(g = (0, 0.10, 0.22)\), drifts \((+M_0, +0.6M_0)\) with
\(M_0 = 0.025\); G3 \emph{coincident endpoints},
\(g = (0, 0.10, -0.10)\), drifts \((+M_0, -M_0)\), so that at
\(M = M_0\) the set for \(m\) is a point whose two endpoints are extrema
over distinct drifts; G4 \emph{tied constraint},
\(g = (0, 0.10, 0.30)\), drifts \((+M_0, +M_0)\), so that the maximum
over drifts is a tie and the set for \(m\) at \(M = M_0\) has width
\(2M_0\). Factorial: sim 1 crosses S1--S4 × \{G0, G1\}; sim 2 runs S2,
S3 × G1; sim 3 runs S2\('\) × \{G2, G3, G4\}. \(B = 500\) (sims 1--2),
\(B = 300\) with \(150\) bootstrap draws (sim 3).

\textbf{Estimands.} \(\Delta_4^2\tau(5) = \tau(9) - 2\tau(5) + \tau(1)\)
and \(\theta = \Delta^2\tau(2) - \Delta^2\tau(6)\) (identified);
\(\Delta^2\tau(3)\) (not identified; reported to exhibit its
arbitrariness); \(\bar\tau - \tau(1)\) under the plateau restriction
with window \(s \ge 6\); \(\tau(13) - \tau(1)\) and
\(\tau(17) - \tau(1)\) for the drift bounds; the observable-implications
and emptiness test statistics.

\textbf{Methods.} Saturated cell-mean OLS under normalization N1
(reference level) and N2 (zero mean along the affine direction);
explicit cell-mean contrasts; the plateau-constrained estimator with the
cohort-gap-constancy test; the representative-plus-drift-interval
construction of Proposition~\ref{prp-bounds} with the linear part read
on the subgrid \(s \equiv 1 \pmod 4\), parametric-bootstrap endpoint
SEs, and the interval
\([\hat\tau_{lo} - c\hat\sigma_{lo}, \hat\tau_{hi} + c\hat\sigma_{hi}]\)
with a critical value \(c\) in the form of Imbens and Manski
(\citeproc{ref-imbensmanski2004}{2004}) (coverage of the true parameter
value; \(c\) between \(1.645\) and \(1.96\) according to the estimated
width, truncated at zero, relative to the endpoint standard error; raw
endpoints, which may cross), with the fixed-\(z\) (\(1.645\)) interval
reported alongside. The adaptive critical value is used as a
finite-sample device; no uniform-validity claim is made for it, since
the interval does not satisfy Assumption 1(iii) of the cited article at
a point-identified parameter.

\textbf{Performance measures.} Bias and Monte Carlo SE of each estimand;
exact invariance of identified functionals under null-direction
perturbations (predicted identity, not approximation); rejection rates
of the observable-implications test at nominal .05; coverage of the true
\(\tau(s)\) by the Imbens--Manski interval and by the fixed-\(z\)
interval, median interval width, median estimated set width, and median
critical value at each \(M/M_0 \in \{0.4, 1, 2, 4\}\); emptiness rate.

\textbf{Deviations from protocol.} Recorded here in full. (1)
Proposition~\ref{prp-bounds} was restated in terms of first-difference
drift, which the affine direction moves, before sim 3 was run; an
earlier draft had bounded second differences, which it does not. (2)
Sims 1--2 were rewritten with noiseless identity tests after the
periodic null direction was recognized; an earlier reading of the same
discrepancy as ``boundary-dummy contamination'' of saturated regressions
was a misreading of the non-identification of ordinary second
differences on a stride-four support and is withdrawn. (3) Sim 2's
contrast estimator is centred so that its estimand is \(\theta\) above.
(4) Sim 3 replaced a fixed critical value of \(1.645\) by the
Imbens--Manski critical value and added the design G3, on which the
fixed value gives \(90\%\) coverage. G3's endpoints coincide without a
tie inside an extremum; the tied-constraint design G4 was added, on
which the interval undercovers (\(.88\)--\(.89\)), and the
uniform-validity language was withdrawn. (5) Sim 4's attrition mechanism
was made monotone, consistent with D2; the design of the exactly aligned
event study is unchanged. (6) Sim 1's cell-mean contrast for
\(\tau(9) - 2\tau(5) + \tau(1)\) compares the two cohorts at \(t = 9\)
and \(t = 5\), as the code always did; the appendix prose had named the
wrong periods. (7) The lag-4 second differences are now labeled by their
center (\(\Delta_4^2\tau(5)\), \(\Delta_4^2\tau(9)\)) rather than by
their first tenure, the G4 population width is stated correctly as
\(2M_0\), the coverage target is named as coverage of the true parameter
value, and \texttt{check\_manuscript\_values.R} was rewritten to fail on
missing, duplicated, or non-finite rows and to use a scale-aware
tolerance for floating-point identities; no random draw changed. All
scripts and seed-controlled outputs accompany the replication package,
and \texttt{check\_manuscript\_values.R} asserts every number quoted in
Appendices A and C against named output columns.

\section*{Appendix C: Simulation 4 --- downstream
propagation}\label{appendix-c-simulation-4-downstream-propagation}
\addcontentsline{toc}{section}{Appendix C: Simulation 4 --- downstream
propagation}

\emph{Design.} Staggered panel with cohorts entering \(t = 1, 5, 13\)
(stride \(d = 4\)), \(T = 19\), \(n = 200\) per cohort, unit and time
effects, monotone attrition with an exit hazard increasing in tenure (a
unit that leaves does not return, so that D2 holds and the support is an
unbalanced trapezoid), and a staggered binary regressor \(D\) adopted at
a random calendar time after entry (30\% never adopt) with true
coefficient \(0.5\). Reports add a conditioning path to the
unconditioned outcome: S (saturating,
\(\tau(s) = 0.30\{1 - e^{-0.6(s-1)}\}\)), L (affine, \(0.05(s-1)\)), or
S+L. For the pre-trend block, eight consecutive cohorts
(\(e = 1, \dots, 8\), \(T = 20\), \(n = 60\), no attrition) each
experience an event five waves after entry (\(\delta = 6\)), the
estimation window is \(k \in [-5, 6]\), and the unconditioned outcome
has no event effect and no anticipation; the fully dynamic specification
omits \(k_0 = -1\) and \(k_1 = -2\). Seed 20260916; code
\texttt{sims/sim4\_downstream.R} (v3); every number below is written by
the script to \texttt{sim4\_manuscript\_values.csv} under a named
quantity. The cell-mean design of the main support has 41 cells, 39 free
parameters, rank 35, and nullity 4, as Theorem~\ref{thm-linear}(d)
requires for \(d = 4\) under C\(_4\) on a connected support.

\emph{Deterministic identities.} Conditional on one realization of the
design, Theorem~\ref{thm-absorb} holds exactly: the observed bias of the
TWFE coefficient equals \((D'MD)^{-1}D'M\tau(s)\) to
\(2 \times 10^{-14}\) (S and S+L: bias \(0.0256\); L: bias
\(9 \times 10^{-15}\)). Adding the affine direction \(0.7 + 0.2(s-1)\)
changes the estimate by \(4 \times 10^{-14}\), and adding the
\emph{periodic} direction \(\rho = (0, 0.30, -0.10, 0.20)\) of period
four changes it by \(9 \times 10^{-15}\): the bias is invariant along
the entire four-dimensional identified set, as
Theorem~\ref{thm-absorb}(i) states. In a single-cohort panel the bias is
\(10^{-16}\) for the saturating path (Corollary~\ref{cor-protect}(a)).
Under exact alignment, all ten event-study coefficients shift by
\(\tau(k+\delta) - \ell(k)\) to \(1.5 \times 10^{-14}\)
(Theorem~\ref{thm-pretrend}). The residuals just quoted are those of the
archived run, whose environment the release record documents; they are
machine-precision quantities that vary with the platform's linear
algebra, and the checker asserts each identity at the scale-aware
tolerance \(10^{-10}\), some five orders of magnitude below the
displayed precision of any estimate, rather than requiring the archived
residuals to recur. The measurement-induced \emph{shifts} in the three
pre-period coefficients are \(-0.183\), \(-0.070\), and \(-0.018\) at
\(k = -5, -4, -3\); these are not the estimated coefficients themselves,
which in this realization are \(+0.068\), \(+0.061\), and \(+0.014\) on
the reports and \(+0.252\), \(+0.131\), and \(+0.032\) on the
unconditioned outcome---the latter being sampling error of a regression
with 480 units and no event effect, and the former that error plus the
shift. On the same aligned design the residualized event-time and tenure
indicators are collinear, \(\mathrm{rank}[MX, MZ_\tau] = 10\) against
\(21\) columns, so condition (R) fails and no within-sample correction
exists. The rank-deficient design of Section~\ref{sec-pretrend} (event
dates \(\{0,2,4\}\), periods \(4\)--\(7\), window \(2\)--\(5\), two
references omitted) has eight columns and rank seven. The
partial-alignment design of Proposition~\ref{prp-partial} (event dates
\(6, \dots, 13\), two units per date with \(\delta_i \in \{6,7\}\)) has
a tenure composition of exactly one half at each of \(k + 6\) and
\(k + 7\) in every \((k,t)\) cell, \(\mathrm{rank}[W_0, X] = 44\) of
\(44\), \(\max|X'MZ_\tau| = 6.0\), and shifts of \(-0.1419\),
\(-0.0544\), \(-0.0142\) at \(k = -5, -4, -3\) for the saturating path.

\emph{Identified correction.} On the main design, the joint regression
Equation~\ref{eq-joint} (four tenure indicators aliased with the fixed
effects and dropped) returns \(0.578\) against the infeasible
unconditioned estimate \(0.568\) and the reported-outcome estimate
\(0.594\), with unit-clustered standard error \(0.051\). Under
tenure-determined adoption, \(D = \mathbf{1}\{s \ge 4\}\), the regressor
is collinear with the tenure indicators and condition (R) fails. The
script detects this by computing the residual rank of \(D\) after the
fixed effects and tenure indicators are partialled out and records the
corrected estimator as not identified; it also enters \(D\) last so that
R's \texttt{lm} reports \(D\), rather than a tenure indicator, as
aliased, but the rank computation is the check, since with \(D\) entered
first the same routine would return a coefficient on \(D\) and drop a
tenure column instead.

\emph{Monte Carlo.} \(B = 100\) replications, \(n = 120\) per cohort,
unit-clustered 95\% intervals. Five designs: baseline (S+L path, random
adoption); zero conditioning (\(\tau \equiv 0\)); heterogeneous
treatment effects (\(\beta_i \sim N(0.5, 0.2^2)\)); cohort-dependent
adoption (never-adopter shares \(.5, .3, .1\) by cohort);
tenure-determined adoption.

\begin{longtable}[]{@{}
  >{\raggedright\arraybackslash}p{(\linewidth - 12\tabcolsep) * \real{0.1429}}
  >{\raggedright\arraybackslash}p{(\linewidth - 12\tabcolsep) * \real{0.1429}}
  >{\raggedright\arraybackslash}p{(\linewidth - 12\tabcolsep) * \real{0.1429}}
  >{\raggedright\arraybackslash}p{(\linewidth - 12\tabcolsep) * \real{0.1429}}
  >{\raggedright\arraybackslash}p{(\linewidth - 12\tabcolsep) * \real{0.1429}}
  >{\raggedright\arraybackslash}p{(\linewidth - 12\tabcolsep) * \real{0.1429}}
  >{\raggedright\arraybackslash}p{(\linewidth - 12\tabcolsep) * \real{0.1429}}@{}}
\toprule\noalign{}
\begin{minipage}[b]{\linewidth}\raggedright
Design
\end{minipage} & \begin{minipage}[b]{\linewidth}\raggedright
bias, reported
\end{minipage} & \begin{minipage}[b]{\linewidth}\raggedright
RMSE
\end{minipage} & \begin{minipage}[b]{\linewidth}\raggedright
coverage
\end{minipage} & \begin{minipage}[b]{\linewidth}\raggedright
bias, corrected
\end{minipage} & \begin{minipage}[b]{\linewidth}\raggedright
RMSE
\end{minipage} & \begin{minipage}[b]{\linewidth}\raggedright
coverage
\end{minipage} \\
\midrule\noalign{}
\endhead
\bottomrule\noalign{}
\endlastfoot
baseline & \(0.025\) & \(.067\) & \(.95\) & \(0.002\) & \(.063\) &
\(.98\) \\
zero conditioning & \(0.010\) & \(.063\) & \(.99\) & \(0.012\) &
\(.065\) & \(.98\) \\
heterogeneous effect & \(0.018\) & \(.059\) & \(.99\) & \(-0.006\) &
\(.058\) & \(.99\) \\
cohort-dependent adoption & \(0.042\) & \(.074\) & \(.92\) & \(-0.001\)
& \(.064\) & \(.98\) \\
tenure-determined adoption & \(0.149\) & \(.164\) & \(.46\) & not
identified (R fails) & --- & --- \\
\end{longtable}

Monte Carlo standard errors of the reported biases are \(.006\). The
corrected estimator is unbiased in every design in which (R) holds and
is unaffected by treatment-effect heterogeneity that is independent of
the design, as Theorem~\ref{thm-correction} requires; its RMSE is below
the naive estimator's when conditioning is present (\(.063\) against
\(.067\); \(.064\) against \(.074\)) and marginally above it when there
is none (\(.065\) against \(.063\)), which is the observed price of
estimating a tenure component that is zero. Its intervals are
conservative here (\(.98\)--\(.99\)). Under tenure-determined adoption
the naive estimator is badly biased and the corrected one does not
exist---the design cannot separate exposure from conditioning, and the
rank check says so before any number is reported.

\section*{Appendix D: Deterministic design
checks}\label{appendix-d-deterministic-design-checks}
\addcontentsline{toc}{section}{Appendix D: Deterministic design checks}

The results of Section~\ref{sec-impossibility},
Section~\ref{sec-recovery}, and Proposition~\ref{prp-sensitivity} are
algebraic, and the replication package checks them by computation on
stated designs rather than by simulation: no randomness and no data. Two
kinds of arithmetic are used, and the scripts say which.

\begin{itemize}
\tightlist
\item
  \emph{Exact rational} rank and null-space computations (Python
  \texttt{fractions}, no tolerance): \texttt{check\_dose\_sharpness.py};
  \texttt{check\_recovery\_support.py}; blocks (A) and (D\('\)) of
  \texttt{check\_interrupted.py}; the named designs of
  \texttt{check\_lemma1\_components.py}.
\item
  \emph{Floating-point} rank and residual computations (numpy SVD at
  tolerance \(10^{-10}\); R's \texttt{qr} and \texttt{svd}): the
  10,934-design sweeps in block (B) of \texttt{check\_interrupted.py}
  and in \texttt{check\_lemma1\_components.py};
  \texttt{check\_thm1\_support.py};
  \texttt{check\_lemma1\_trapezoid.py}; \texttt{design\_rank.R}.
\end{itemize}

All scripts are in \texttt{sims/}. Every figure quoted in the text from
a sweep is a count of designs, for which the floating-point rank of a
small 0--1 matrix is reliable; every specific design quoted in the text
is also checked exactly.

\textbf{Support conditions and reference categories
(Theorem~\ref{thm-linear}(d), Section~\ref{sec-recovery},
Theorem~\ref{thm-correction}, Theorem~\ref{thm-pretrend}).}
\texttt{check\_recovery\_support.py} verifies by exact arithmetic: the
six-cell support of cohorts \(1, 4, 6\) (rank six of nine, full nullity
three, projected nullity one, C\(_1\) satisfied, three incidence
components), on which exact knowledge of \(g\) leaves \(\tau(2)\) free,
against the connected trapezoid on the same cohorts, on which it pins
the path, and the two-component support of cohorts \(1, 2, 5\), on which
knowledge of \(g\) pins \(\tau(2)\) although (CG) fails; the identity
\(\dim\mathcal{K} = \dim\mathcal{K}_\tau + (c-1)\) on 9,183 supports
(trapezoids with \(K \in \{2,3\}\) cohorts in \(\{1,\dots,6\}\),
\(T \le 8\), and up to two cells removed; 140 of them disconnected; no
violation); the event-study design of the proof of
Theorem~\ref{thm-pretrend}, in which a post-event reference category
produces \(\mathbb{E}[\hat\beta^{\circ}_{-2}] = +\tfrac12\) under
homogeneous effects and zero pre-period coefficients once both
references precede the event; the failure of (R) under
\(D = \mathbf{1}\{s \ge 4\}\) on the support of Section~\ref{sec-jlps},
with two exact fits at \(\beta = 0\) and \(\beta = 1\); the
residue-class structure of Corollary~\ref{cor-curvature} at \(d = 2\);
the centered plateau boundary of Proposition~\ref{prp-plateau}; the
population widths of design G4; and the uninterrupted rows of
Table~\ref{tbl-designs}.

\textbf{Dose designs (Theorem~\ref{thm-dose}, Corollary~\ref{cor-cps},
Table~\ref{tbl-designs}).} For each schedule,
\texttt{check\_interrupted.py} builds the cell design of
Equation~\ref{eq-dose}, computes its exact rank and a basis of the
kernel, and reports whether the kernel's \(\tau\)-component is spanned
by \(j(\cdot)\) and whether a dose-linear path lies in it. Its final
block recomputes the descriptive profiles of Table~\ref{tbl-cps} from
the published indices; consistent with Section~\ref{sec-cps}, it labels
them as transformations of published averages that retain the cohort
effects, not as identified contrasts. With the CPS pattern
\(J = \{0,1,2,3,12,13,14,15\}\) and monthly entry over ten or more
cohorts the design has nullity one and kernel
\(\propto (0,1,2,3,12,13,14,15)\)---that is, \(j(k)\)---and the
dose-linear path is \emph{not} in the kernel; with \(J = \{0,\dots,7\}\)
the kernel is \(\propto (0,\dots,7) = k-1\) and the dose-linear path is
in it, as Theorem~\ref{thm-dose}(d) requires. The pattern
\(\{0,1,4,5\}\) behaves like the CPS one and the equally spaced pattern
\(\{0,2,4,6,8,10\}\) like the uninterrupted one. With entry every second
period the CPS pattern gives nullity two, the second kernel direction
being \(\rho(j) = j \bmod 2\), the periodic direction of
Theorem~\ref{thm-dose}(a). The equally spaced pattern
\(\{0,2,4,\dots\}\) also illustrates the incidence-graph caveat of
Theorem~\ref{thm-linear}(d): cohorts of odd and even entry never share a
period, the incidence graph has two components, and the design carries a
kernel direction that moves \(\alpha\) and \(g\) against each other with
no movement in \(\tau\) at all.

\textbf{Condition (P\('\)) and the counterexample to the earlier
statement.} \texttt{check\_dose\_sharpness.py} computes, for any cohort
set and pattern, the exact rank of the \emph{conditioning projection} of
the kernel---not merely the kernel's nullity, since the projection can
have lower rank. It reproduces the design of cohorts \(\{1,2,4\}\),
offsets \(\{0,1,3\}\), \(T = 7\) (ten columns, rank eight, projected
dimension two, against the one asserted by the earlier version of
Theorem~\ref{thm-dose}(b)); sweeps eleven patterns, strides
\(d \in \{1,2,3\}\) and windows \(L \in \{1,\dots,16\}\) under
continuous recruitment, and finds the projected dimension equal to the
number of represented residue classes in all 326 designs where (P\('\))
holds, equal to it in 123 designs where (P\('\)) fails, and larger in
47---so (P\('\)) is sufficient, not necessary, and not vacuous; shows
that gapped cohort sets \(\{1,2,4\}\), \(\{1,3,4\}\), \(\{1,2,4,5\}\)
with the same pattern have projected dimension two while \(\{1,2,3,4\}\)
has one; and confirms that for the CPS pattern \(\Gamma(J, 1) = 9\), so
that the dimension is two for nine monthly cohorts and one from ten on.
It also records that the pattern's difference set omits \(4\) through
\(8\), which the withdrawn condition required and no recruitment window
can supply.

\textbf{Lemma 1 (o) and (iv).} A sweep of all entry sets
\(\mathcal{E} \subset \{1,\dots,13\}\) with \(e_1 = 1\),
\(3 \le K \le 5\), and \(T \le e_K + 13\)---10,934 designs---finds no
violation of \(\nu = \#\{\text{increment-graph components}\}\) and none
of \(\nu \le \max\{d, \Delta_2 - w\}\); the bound is attained in
\(9{,}746\) of them and strict in \(1{,}188\)
(\texttt{check\_lemma1\_components.py}, which repeats the sweep
standalone). The version of the bound computed from the \emph{third}
cohort's follow-up instead of the last cohort's fails in 72 of these
designs, the smallest being \(\mathcal{E} = \{1,5,9,11,12\}\) with
\(T = 12\).

\textbf{Sensitivity weights (Proposition~\ref{prp-sensitivity}).} For
the configuration of Simulation 4 the script solves for the
representation of the shift functional in the basis of second
differences and reports the weights, which are the ramps \(\{1\}\),
\(\{1,2\}\), \(\{1,2,3\}\) at \(k = -3, -4, -5\) and \(\{1,\dots,7\}\)
at \(k = 6\), with absolute sums \(1, 3, 6\) and \(28\)---the triangular
numbers \(m_k(m_k+1)/2\) of Equation~\ref{eq-sensitivity}---and confirms
Equation~\ref{eq-sensitivity} against the realized shifts of Appendix C
at every event time in the window. An equality test with \(C = .06\)
verifies that a path of constant centered curvature \(+C\) attains the
bound exactly at every \(k\) in the window (\(.36, .18, .06\) at
\(k = -5, -4, -3\); \(.06, .60, 1.68\) at \(k = 0, 3, 6\)) and that a
path whose curvature alternates in sign stays strictly below it at
\(|k| \ge 4\).

\section*{Data and code availability}\label{data-and-code-availability}
\addcontentsline{toc}{section}{Data and code availability}

The analysis uses simulated data and published aggregate indices, and
every input needed for reproduction is included in the public archive.
Every number the paper reports is produced either by a simulation that
generates its own data from a stated seed or by a deterministic
algebraic computation on a stated design; the illustration of
Section~\ref{sec-cps} is a descriptive transformation of month-in-sample
indices published in Bailar (\citeproc{ref-bailar1975}{1975}, Tab. 1),
Solon (\citeproc{ref-solon1986}{1986}, Tab. 1) and McIllece
(\citeproc{ref-mcillece2022}{2022}, Tab. 2), which are reproduced inside
the code. No restricted or licensed microdata are used, and none are
needed to reproduce any table or appendix of this paper.

The complete archive---the scripts, all of their outputs, the manuscript
source and its build script---is public at
\url{https://github.com/sokubo/paper-panel-conditioning-replication}.
The numerical results of this manuscript are those of tag
\texttt{paper-v1.0}, whose README gives the execution order, the
environment, the runtimes, and the map from each table and appendix
number to the script and output file that produce it.
\texttt{check\_manuscript\_values.R} asserts every number quoted in
Appendices A and C against a named output column---requiring each key to
select exactly one row, each compared vector to have its expected length
and only finite values, and each floating-point identity to hold at a
stated scale-aware tolerance---and exits with an error on any failure;
its \texttt{-\/-selftest} mode confirms that a deleted row, a duplicated
row, a perturbed value, and a non-finite value each make it fail.
\texttt{check\_dose\_sharpness.py} reproduces the counterexample to the
earlier form of Theorem~\ref{thm-dose} and verifies condition (P\('\))
by exact arithmetic, and \texttt{check\_recovery\_support.py} verifies
the support conditions of Appendix D. The manuscript in that tag is an
earlier version of this text. The present version, dated 24 September
2026, adds presentation and reference updates and
Figure~\ref{fig-support} and Figure~\ref{fig-cps}, which depict designs
and paths stated in the text, and changes no numerical result; the
script that draws the figures, \texttt{make\_figures.py}, first checks
the quantities the text states and is distributed with the arXiv version
as an ancillary file. The computational commit checked against the
numerical results is \texttt{f06283e}; the tag points at the commit that
adds the release record \texttt{RELEASE\_CHECK.md}, which documents an
anonymous download of the published snapshot, a comparison of its file
list against the archive's manifest, and a re-execution of the
documented sequence in a clean copy with the regenerated outputs
compared against the shipped ones by \texttt{compare\_outputs.py}, which
is itself part of the archive. That comparison is file by file and cell
by cell, not a test of selected quantities: for each CSV it requires
identical column names, identical numbers of rows and columns, every
text cell identical, every missing cell missing in both, and every
numeric cell---integer counts included---equal to within \(10^{-8}\);
for the console transcript it applies the same rules token by token on
each line. An empty cell, \texttt{NA} and \texttt{NaN} are treated as
the same missing value; beyond that, and beyond whitespace (line
endings, trailing and inter-token spacing, blank lines) and numeric
spellings that agree to within the tolerance, no print-format difference
is tolerated. Its \texttt{-\/-selftest} mode confirms that a changed
scenario label, a changed integer count (with or without an accompanying
label change), a changed identification verdict in the transcript, an
added, deleted or reordered row, a dropped or renamed column, a
perturbed value, and a value replaced by a missing entry each make it
fail, and that a perturbation of \(10^{-12}\), an
\texttt{NA}/\texttt{NaN} swap, and trailing whitespace do not. The
release script exits with an error if any step fails, any file is
missing or unlisted, or the comparison fails on any file. That record is
the author's own verification. The tags \texttt{paper-v0.7} and
\texttt{paper-v0.9}, with computational commits \texttt{a48bd80} and
\texttt{ea044e6}, remain available and correspond to earlier versions of
this manuscript, dated 21 and 22 September 2026; no \texttt{paper-v0.8}
tag was published, the v0.8 manuscript having been superseded before its
release.

\section*{References}\label{references}
\addcontentsline{toc}{section}{References}

\phantomsection\label{refs}
\begin{CSLReferences}{1}{0}
\bibitem[\citeproctext]{ref-bailar1975}
Bailar, Barbara A. 1975. {``The Effects of Rotation Group Bias on
Estimates from Panel Surveys.''} \emph{Journal of the American
Statistical Association} 70 (349): 23--30.
\url{https://doi.org/10.1080/01621459.1975.10480255}.

\bibitem[\citeproctext]{ref-bertoli2026}
Bertoli, Andrew, Laura Jakli, and Henry Pascoe. 2026. {``Analyzing the
Impact of Events Through Surveys: Formalizing Biases and Introducing the
Dual Randomized Survey Design.''} \emph{Political Science Research and
Methods} 14 (2): 255--75. \url{https://doi.org/10.1017/psrm.2026.10088}.

\bibitem[\citeproctext]{ref-borusyak2024}
Borusyak, Kirill, Xavier Jaravel, and Jann Spiess. 2024. {``Revisiting
Event-Study Designs: Robust and Efficient Estimation.''} \emph{Review of
Economic Studies} 91 (6): 3253--85.
\url{https://doi.org/10.1093/restud/rdae007}.

\bibitem[\citeproctext]{ref-dasetal2011}
Das, Marcel, Vera Toepoel, and Arthur van Soest. 2011. {``Nonparametric
Tests of Panel Conditioning and Attrition Bias in Panel Surveys.''}
\emph{Sociological Methods \& Research} 40 (1): 32--56.
\url{https://doi.org/10.1177/0049124110390765}.

\bibitem[\citeproctext]{ref-fenghusun2022}
Feng, Shuaizhang, Yingyao Hu, and Jiandong Sun. 2022. {``Rotation Group
Bias and the Persistence of Misclassification Errors in the {Current
Population Surveys}.''} \emph{Econometric Reviews} 41 (9): 1077--94.
\url{https://doi.org/10.1080/07474938.2022.2091361}.

\bibitem[\citeproctext]{ref-fienbergmason1979}
Fienberg, Stephen E., and William M. Mason. 1979. {``Identification and
Estimation of Age-Period-Cohort Models in the Analysis of Discrete
Archival Data.''} \emph{Sociological Methodology} 10: 1--67.
\url{https://doi.org/10.2307/270764}.

\bibitem[\citeproctext]{ref-fosse2019}
Fosse, Ethan, and Christopher Winship. 2019a. {``Analyzing
Age-Period-Cohort Data: A Review and Critique.''} \emph{Annual Review of
Sociology} 45: 467--92.
\url{https://doi.org/10.1146/annurev-soc-073018-022616}.

\bibitem[\citeproctext]{ref-fosse2019bounds}
---------. 2019b. {``Bounding Analyses of Age-Period-Cohort Effects.''}
\emph{Demography} 56 (5): 1975--2004.
\url{https://doi.org/10.1007/s13524-019-00801-6}.

\bibitem[\citeproctext]{ref-gascoignesmith2023}
Gascoigne, Connor, and Theresa Smith. 2023. {``Penalized Smoothing
Splines Resolve the Curvature Identifiability Problem in
Age-Period-Cohort Models with Unequal Intervals.''} \emph{Statistics in
Medicine} 42 (12): 1888--908. \url{https://doi.org/10.1002/sim.9703}.

\bibitem[\citeproctext]{ref-halpernmanners2012}
Halpern-Manners, Andrew, and John Robert Warren. 2012. {``Panel
Conditioning in Longitudinal Studies: Evidence from Labor Force Items in
the Current Population Survey.''} \emph{Demography} 49 (4): 1499--519.
\url{https://doi.org/10.1007/s13524-012-0124-x}.

\bibitem[\citeproctext]{ref-halpernmanners2017}
Halpern-Manners, Andrew, John Robert Warren, and Florencia Torche. 2017.
{``Panel Conditioning in the General Social Survey.''}
\emph{Sociological Methods \& Research} 46 (1): 103--24.
\url{https://doi.org/10.1177/0049124114532445}.

\bibitem[\citeproctext]{ref-holford1983}
Holford, Theodore R. 1983. {``The Estimation of Age, Period and Cohort
Effects for Vital Rates.''} \emph{Biometrics} 39 (2): 311--24.
\url{https://doi.org/10.2307/2531004}.

\bibitem[\citeproctext]{ref-holford2006}
---------. 2006. {``Approaches to Fitting Age-Period-Cohort Models with
Unequal Intervals.''} \emph{Statistics in Medicine} 25 (6): 977--93.
\url{https://doi.org/10.1002/sim.2253}.

\bibitem[\citeproctext]{ref-imbensmanski2004}
Imbens, Guido W., and Charles F. Manski. 2004. {``Confidence Intervals
for Partially Identified Parameters.''} \emph{Econometrica} 72 (6):
1845--57. \url{https://doi.org/10.1111/j.1468-0262.2004.00555.x}.

\bibitem[\citeproctext]{ref-kraemer2025}
Kraemer, Fabienne, Peter Lugtig, Bella Struminskaya, Henning Silber,
Bernd Weiß, and Michael Bosnjak. 2025. {``Monitoring Attitudes over
Time: Real Change or the Result of Repeated Interviewing?''}
\emph{Sociological Methods \& Research}.
\url{https://doi.org/10.1177/00491241251372503}.

\bibitem[\citeproctext]{ref-kraemer2024}
Kraemer, Fabienne, Henning Silber, Bella Struminskaya, Matthias Sand,
Michael Bosnjak, Joanna Koßmann, and Bernd Weiß. 2024. {``Panel
Conditioning in a Probability-Based Longitudinal Study: A Comparison of
Respondents with Different Levels of Survey Experience.''} \emph{Journal
of Survey Statistics and Methodology} 12 (1): 36--59.
\url{https://doi.org/10.1093/jssam/smad004}.

\bibitem[\citeproctext]{ref-krueger2017}
Krueger, Alan B., Alexandre Mas, and Xiaotong Niu. 2017. {``The
Evolution of Rotation Group Bias: Will the Real Unemployment Rate Please
Stand Up?''} \emph{The Review of Economics and Statistics} 99 (2):
258--64. \url{https://doi.org/10.1162/rest_a_00630}.

\bibitem[\citeproctext]{ref-masetal1973}
Mason, Karen Oppenheim, William M. Mason, H. H. Winsborough, and W.
Kenneth Poole. 1973. {``Some Methodological Issues in Cohort Analysis of
Archival Data.''} \emph{American Sociological Review} 38 (2): 242--58.
\url{https://doi.org/10.2307/2094398}.

\bibitem[\citeproctext]{ref-mcillece2022}
McIllece, Justin J. 2022. {``Optimizing the Current Population Survey
Composite Estimator.''} Statistical Survey Paper ST220090. U.S. Bureau
of Labor Statistics, Office of Survey Methods Research.
\url{https://www.bls.gov/osmr/research-papers/2022/pdf/st220090.pdf}.

\bibitem[\citeproctext]{ref-okubo2026refresh}
Okubo, Shoki. 2026. {``Identifying Panel Conditioning with Refreshment
Samples: Sharp Bounds and Design Assumptions.''} Working paper, version
0.5 (September 23, 2026).
\url{https://github.com/sokubo/paper-refreshment-designs-replication/blob/paper-v0.5/manuscript/main.pdf}.

\bibitem[\citeproctext]{ref-rambachanroth2023}
Rambachan, Ashesh, and Jonathan Roth. 2023. {``A More Credible Approach
to Parallel Trends.''} \emph{Review of Economic Studies} 90 (5):
2555--91. \url{https://doi.org/10.1093/restud/rdad018}.

\bibitem[\citeproctext]{ref-rieblerheld2010}
Riebler, Andrea, and Leonhard Held. 2010. {``The Analysis of
Heterogeneous Time Trends in Multivariate Age-Period-Cohort Models.''}
\emph{Biostatistics} 11 (1): 57--69.
\url{https://doi.org/10.1093/biostatistics/kxp037}.

\bibitem[\citeproctext]{ref-roth2022}
Roth, Jonathan. 2022. {``Pretest with Caution: Event-Study Estimates
After Testing for Parallel Trends.''} \emph{American Economic Review:
Insights} 4 (3): 305--22. \url{https://doi.org/10.1257/aeri.20210236}.

\bibitem[\citeproctext]{ref-searlegruber2017}
Searle, Shayle R., and Marvin H. J. Gruber. 2017. \emph{Linear Models}.
2nd ed. Hoboken, NJ: Wiley.

\bibitem[\citeproctext]{ref-smithwakefield2016}
Smith, Theresa R., and Jon Wakefield. 2016. {``A Review and Comparison
of Age-Period-Cohort Models for Cancer Incidence.''} \emph{Statistical
Science} 31 (4): 591--610. \url{https://doi.org/10.1214/16-STS580}.

\bibitem[\citeproctext]{ref-solon1986}
Solon, Gary. 1986. {``Effects of Rotation Group Bias on Estimation of
Unemployment.''} \emph{Journal of Business \& Economic Statistics} 4
(1): 105--9. \url{https://doi.org/10.1080/07350015.1986.10509499}.

\bibitem[\citeproctext]{ref-sunabraham2021}
Sun, Liyang, and Sarah Abraham. 2021. {``Estimating Dynamic Treatment
Effects in Event Studies with Heterogeneous Treatment Effects.''}
\emph{Journal of Econometrics} 225 (2): 175--99.
\url{https://doi.org/10.1016/j.jeconom.2020.09.006}.

\bibitem[\citeproctext]{ref-brakelkrieg2015}
van den Brakel, Jan A., and Sabine Krieg. 2015. {``Dealing with Small
Sample Sizes, Rotation Group Bias and Discontinuities in a Rotating
Panel Design.''} \emph{Survey Methodology} 41 (2): 267--96.

\bibitem[\citeproctext]{ref-warren2012}
Warren, John Robert, and Andrew Halpern-Manners. 2012. {``Panel
Conditioning in Longitudinal Social Science Surveys.''}
\emph{Sociological Methods \& Research} 41 (4): 491--534.
\url{https://doi.org/10.1177/0049124112460374}.

\end{CSLReferences}

\end{document}